\documentclass[12pt, letterpaper]{article}

\usepackage[a4paper,margin=1in]{geometry}

\usepackage{booktabs} 
\usepackage[T1]{fontenc}    
\usepackage{authblk}
\usepackage{amssymb}
\usepackage{amsthm}
\usepackage{amsmath}
\usepackage{latexsym}
\usepackage{color}
\usepackage{verbatim}
\usepackage[english]{babel}
\usepackage[ruled,linesnumbered]{algorithm2e} 
\usepackage{subcaption}
\usepackage{graphicx}

\usepackage{algorithmic}
\usepackage[square, numbers]{natbib}
\usepackage[hidelinks]{hyperref}
\DeclareMathOperator*{\argmax}{arg\,max}

\newtheorem{theorem}{Theorem}
\newtheorem{lemma}{Lemma}
\newtheorem{assumption}{Assumption}
\newtheorem{definition}{Definition}
\newtheorem{example}{Example}
\newtheorem{remark}{Remark}
\newtheorem{corollary}{Corollary}

\SetAlFnt{\small}
\SetAlCapFnt{\small}
\SetAlCapNameFnt{\small}
\SetAlCapHSkip{0pt}
\IncMargin{-\parindent}


\author[1]{Weiran Shen}
\author[2]{Zihe Wang}
\author[3]{Song Zuo}

\affil[1]{IIIS, Tsinghua University\protect \\ \texttt{emersonswr@gmail.com}}
\affil[2]{ITCS, Shanghai University of Finance and Economics\protect \\ \texttt{wang.zihe@mail.shufe.edu.cn}}
\affil[3]{Google Research\protect \\ \texttt{szuo@google.com}}

\title{Bayesian Nash Equilibrium in First-Price Auction with Discrete Value Distributions\thanks{The codes for solving the BNE in our setting can be found at \url{https://github.com/emersonswr/discrete_fpa_bne}}}

\begin{document}

\maketitle

\begin{abstract}
First price auctions are widely used in government contracts and industrial auctions.
In this paper, we consider the Bayesian Nash Equilibrium (BNE) in first price auctions with discrete value distributions.
We study the characterization of the BNE in the first price auction and provide an algorithm to compute the BNE at the same time. Moreover, we prove the existence and the uniqueness of the BNE.
Some of the previous results in the case of continuous value distributions do not apply to the case of discrete value distributions. In the meanwhile, the uniqueness result in discrete case cannot be implied by the uniqueness property in the continuous case. 

Unlike in the continuous case, we do not need to solve ordinary differential equations and thus do not suffer from the solution errors therein.
Compared to the method of using continuous distributions to approximate discrete ones, our experiments show that our algorithm is both faster and more accurate.

The results in this paper are derived in the asymmetric independent private values model, which assumes that the buyers' value distributions are common knowledge.
\end{abstract}

\newpage

\section{Introduction}

Recently, the display
advertising industry has switched from second-price to
first-price auctions,\footnote{The winners now {\em pay the bid price} in the
	{\em Exchange Bidding} auctions by Google Ad Manager (Google's display
	ad-serving system for publishers). In Exchange Bidding auctions, the exchanges
	bid to compete for ad requests from publishers
	\url{https://support.google.com/admanager/answer/7128958}.} and one important
reason is that some advertisers no longer trust the exchange to honestly
follow the second-price auction rules \cite{sluis2017big}. 
From a bidder's point of view, he does not trust the auctioneer since the auctioneer could also benefit from manipulating the auction rules after observing the sealed bids \cite{akbarpour2018credible}.
\citet{akbarpour2018credible} show that the
first-price auction is the unique \emph{credible} and \emph{static} optimal auction, which may be one potential backing theory for the trend of adopting the first-price auction
  in the ad exchange industry. \citet{akbarpour2018credible} also prove that no
  mechanism is static, credible, and strategy-proof (incentive compatible)
  at the same time. In particular, being credible means that it is incentive compatible for the
  auctioneer to follow the rules and being static roughly means that the auction is
  sealed-bid. Therefore, the first-price auction naturally becomes the only
  choice for the ad exchange industry, in which credibility becomes a major
  concern and sealed-bid is also critical to keep the
  auction process time-efficient for production needs.

  In contrast to the crucial needs from practice, the understanding of the
  first-price auction from auction theory remains shallow compared with that of
  other direct auctions. The essential obstacle is the complex equilibrium
  structure in first-price auctions. Following the first step by
  \citet{vickrey1961counterspeculation} for the symmetric setting, it has been a
  tough and long journey towards the existence, the uniqueness and the computation of the Bayesian Nash equilibrium of first-price auctions in general settings.\citet{plum1992characterization} covers the power distribution $F_1(x)=x^\mu$ and $F_2(x)=(\frac{x}{\beta})^\mu$ with the same support. \citet{kaplan2012asymmetric} solve the problem for uniform distributions with different support.

  In this paper, with the application in ad auctions as one of the important
  motivations, we focus on the computation of BNEs in first-price auctions 
  where the bidders' values are independently drawn
  from discrete prior distributions.
We study the discrete value setting for several reasons. First, it is a basic setting with a different structure from the continuous case. The results in the discrete setting can provide us with more structural insights that can not be obtained from the continuous case. Second, such a setting is more realistic in practice. Consider the situation where prior information is expensive or impossible to acquire. In such situations, we might only have historical samples of the buyer's values, and it is reasonable to aggregate these discrete values to form a discrete empirical distribution.

%

Our algorithm does not involve ordinary differential equations, which makes our algorithm robust and much faster.


\subsection{Our Contributions}
	\begin{itemize}
		\item
 	We give an efficient algorithm to find the BNE of the first price auction. For any possible bid, by scrutinizing the bidders who might report it, we give a clear characterization of the BNE in the discrete setting. 
Previous methods make use of Nash's Theorem to prove the existence of the equilibrium in the continuous case, while we provide a constructively proof in the discrete case.

	\item We show that the equilibrium is unique in the discrete case (Theorem \ref{thm:unique}).
	The uniqueness result by \citet{lebrun2006uniqueness} relies on a technical assumption about buyers' value distributions. In contrast, we do not need any assumption. Furthermore, in the continuous case, we need to be very careful when a buyer's value is near the smallest value.
	In the discrete case, each buyer's strategy around the smallest value has relatively nice properties.
	\end{itemize}

\subsection{Related Works}
\label{sec:related_works}
	Besides the closed-form solution of the equilibrium, there is also a line of papers that focus on other aspects of the problem  \cite{shubik1970different,riley1981optimal,milgrom1982theory}.
The existence of a Bayesian Nash equilibrium is given by \citet{lebrun1996existence,maskin2000equilibrium,athey2001single}. They first show the existence for discrete distributions by applying Nash's Theorem. Then they show the existence in the continuous case by constructing a series of discrete distributions that approaches the actual continuous case. In this paper, we prove the existence result by construction.

	After proving the existence of the BNE, researchers began to consider the its uniqueness.
	For symmetric distributions, \citet{chawla2013auctions} prove the uniqueness by ruling out asymmetric equilibria.
	For asymmetric distributions,
	\citet{maskin2003uniqueness} show that the equilibrium is unique for symmetric distributions with the assumption that there is positive mass at the lower possible value.
	\citet{lebrun1999first,lebrun2006uniqueness} prove the uniqueness for more general settings but still with the assumption that the cumulative value distribution functions are strictly log-concave at certain points.  \citet{escamocher2009existence} investigates the existence and the computation of BNEs in the discrete case, under the assumption that bidders can only place discrete bids. They consider both the randomized tie-breaking and the Vickery tie-breaking and give different results.
    
    However, both the continuous and the discrete case without assumptions are still left open. In this paper, we solve the discrete value distribution case.

	In the numerical analysis literature, \citet{marshall1994numerical} give the first numerical analysis for two special distributions. Their backward-shooting method then become the standard method for computing the equilibrium strategies of asymmetric first-price auctions \cite{bajari2001comparing,fibich2003asymmetric,li2007auction}. The backward-shooting method first computes the smallest winning bid, then repeatedly guess the largest winning bid and then solving ordinary differential equations all the way down in the bid space to see if the smallest winning bid given by the solution to the differential equations matches the actual one. 
	One common issue of this method is the computation error in solving ordinary differential equations. \citet{bajari2001comparing} uses a polynomial to approximate the inverse bidding strategy. To compute a solution with high precision, \citet{gayle2008numerical} use Taylor-series expansions. Our method belongs to the backward-shooting category. We do not need to solve ordinary differential equations, but the algorithm still needs to repeatedly guess the largest winning bid. \citet{fibich2012asymmetric} propose a forward-shooting method and numerically solve the case with power-law distributions. However this forward-shooting method does not work in the discrete case.


\section{Preliminaries}
	\label{sec:pre}
\subsection{Model}
	Suppose the seller has one item for sale and there are $n$ potential buyers $N=\{1,...,n\}$. The item is sold through a sealed-bid first-price auction. Each buyer has a private value for the item, which is drawn according to a publicly known value distribution. In our setting, we consider the case where the each buyer's value distribution is discrete. Also, we assume that for buyer $i$, the value support is a finite set $\{v_i^1,v_i^2,...,v_i^{d_i}\}$ with cumulative distribution function $G_i$, with $G_i(v)=\mathrm{Prob}\{v_i\le v\}$. Without loss of generality, we assume $0\le v_i^1<v_i^2<\ldots<v_i^{d_i}$. 
	
	Every buyer places a nonnegative bid $b_i$ simultaneously. Let $F_i(b_i)$ denote the cumulative distribution function of buyer $i$'s bids. We assume that buyers have quasi-linear utilities and no buyer overbids, i.e., no buyer will place a bid that is higher than his value. The buyer with the highest bid wins the item and pays what he bids. Each buyer's strategy is a mapping from his private value to his bid. The strategies form a Bayesian Nash Equilibrium (BNE) if no bidder has an incentive to change his strategy unilaterally in the Bayesian setting.	
	
	In the continuous value setting, each buyer's strategy maps a value to a bid. For example, suppose there are two i.i.d. buyers with value uniformly distributed between $[0,1]$. In the BNE, each buyer bids half of his private value.
	
	But in the discrete setting, each buyer's strategy is randomized, and maps a value to a set of possible bids, with a certain probability distribution. Consider the following example.
	\begin{example}
	\label{discrete2buyer}
	There are two i.i.d. buyers. Each buyer has value 1 and 2 with probability 0.5. 
	In the equilibrium, when a buyer's value is 2, it is possible for him to place any bid in $[1,1.5]$, and the bid density function is $\frac{1}{(2-x)^2}, \forall x\in[1,1.5]$.
	When a buyer's value is 1, the buyer bids 1 with probability 1.
	\end{example}
	\begin{figure}[h]
		\center
		\includegraphics[width=0.5\linewidth]{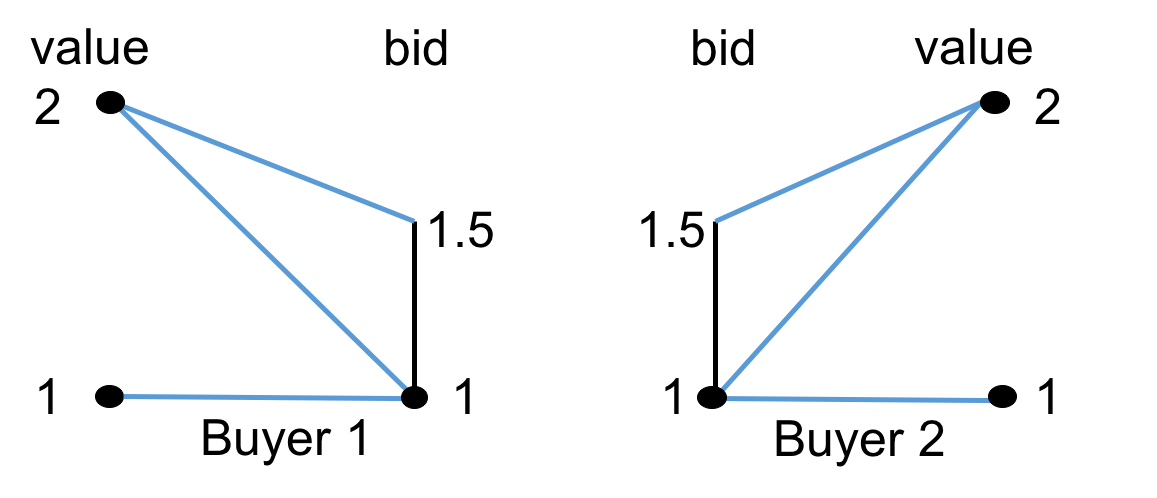}
		\caption{The equilibrium strategy of two i.i.d. buyers with uniform $\{1,2\}$ value distribution. Although the value distribution is discrete, the bids are continuous.}
		\label{diseg}
	\end{figure}

	Our objective is to find the bidding strategies that constitute a Bayesian Nash equilibrium. 	
	Before we start, we need to make an assumption of the tie-breaking rule to guarantee the existence of an equilibrium. To see this, suppose there are multiple highest bids, and we allocate the item randomly among the corresponding buyers. In the following example, there is no BNE.
	\begin{example}[\citet{maskin2000equilibrium}]
	There are two buyers. Buyer 1 has two possible values $\{1, 2\}$, both with probability 0.5, and Buyer 2 has only one possible value 1, with probability 1.

Suppose there is a BNE. There are 2 possible cases in the equilibrium.
  \begin{itemize}
  \item Buyer 2 always bids 1. Buyer 1 has value 2, he wins with probability 1 if he bids $1+\epsilon$ where $\epsilon$ is a small positive number. Buyer 1's utility is $2-1-\epsilon$. Thus buyer 1 always prefers a smaller $\epsilon$ as long as $\epsilon>0$. However if he chooses $\epsilon$ to be zero,
  Buyer 1 wins with probability 0.5 and the utility drops to $(2-1)\times 0.5=0.5$.   Thus Buyer 1 with value 2 has no best response.
  \item Buyer 2 bids 1 with probability strictly less than 1. Consider the lowest possible bid $\underline{b}$ that any buyer would place. Since no buyer overbids, we have $\underline{b}< 1$. If both buyers bid $\underline{b}$ with positive probability then they both have incentives to bid $\underline{b}+\epsilon$. It cannot be a BNE.
  If one buyer bids $\underline{b}$ with zero probability then the other buyer will not bid $\underline{b}$ because he will lose. It cannot be a BNE.
  \end{itemize}
  \end{example}

It turns out that there is no BNE in this example if we break ties randomly.
We consider an alternative tie-breaking rule introduced by \citet{maskin2000equilibrium}. When there are multiple highest bids, we will allocate the item to the buyer with the highest value.
	\begin{assumption}[\citet{maskin2000equilibrium}]
		\label{assumption}
	Ties are broken by running a Vickrey auction among the highest buyers.\footnote{This can be implemented by letting the highest buyers submit new bids.}
	\end{assumption}

	In the continuous value setting, this assumption is unnecessary, but in the discrete value setting, we need this assumption to deal with best response issue. 
	Without this assumption, we will still get an approximate BNE using our algorithm.
	We will discuss more about this assumption(see Example \ref{classic}).

\subsection{Basic structure of the BNE}
	To assist later arguments, we restate several properties of the Bayesian Nash equilibrium, mainly summarized by \citet{maskin2003uniqueness}.\footnote{Although they assume twice continuously differentiable value distributions and the buyers have the same upper limit of values, the lemmas still hold for the present setting.
} 
Giving all buyers' strategies, if any buyer can win with a certain probability by bidding $b$ then we call $b$ a \emph{winning bid}.
Without loss of generality, we assume the set of winning bids is closed and there exists a smallest winning bid $\underline{b}$. 
As long as a buyer's bid is higher than or equal to $\underline{b}$, he can win with a certain probability. 
If buyers' strategies form a BNE, the smallest winning bid $\underline{b}$ will be determined uniquely by the buyers' value distributions:
\begin{lemma}[\citet{maskin2003uniqueness}]
    \label{lem:min_winning_bid}
    Assume buyer $i^*$ has the largest smallest value, i.e., $v_{i^*}^1=\max_j v_j^1$. Then the smallest winning bid is
    \begin{gather*}
    \underline{b}= \argmax_b (v_{i^*}^1-b)\prod_{i\ne i^*} G_i(b).
    \end{gather*}
\end{lemma}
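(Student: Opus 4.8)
The plan is to characterize $\underline{b}$ as the bid that buyer $i^*$ optimally chooses when he holds his lowest value $v_{i^*}^1$, and then to evaluate the winning probability of that bid in closed form. The first step is to establish that buyer $i^*$'s lowest type anchors the bottom of the bid distribution: he bids exactly $\underline{b}$, so that in particular $\underline{b}\le v_{i^*}^1$. This rests on the standard structural properties of first-price-auction equilibria \cite{maskin2003uniqueness} — no buyer overbids, bids strictly below $\underline{b}$ win with probability zero, equilibrium strategies are monotone in value, and ties are broken by value (Assumption~\ref{assumption}). Concretely: any buyer with value above $\underline{b}$ must bid in $[\underline{b},\text{value}]$; since $v_{i^*}^1=\max_j v_j^1$, monotonicity forces $i^*$ to bid at least $\underline{b}$ at every value, so his lowest type bids in $[\underline{b},v_{i^*}^1]$; and if this bid were strictly above $\underline{b}$, then no type of $i^*$ — and hence, since $i^*$ would then outbid them with certainty, no buyer at all — would bid just above $\underline{b}$, contradicting that $\underline{b}$ is a winning bid.

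The second step computes buyer $i^*$'s probability of winning with an arbitrary bid $b\le v_{i^*}^1$. I claim that, up to zero-probability ties among buyers who share the value $v_{i^*}^1$, buyer $i^*$ bidding $b$ beats a given competitor $i\ne i^*$ exactly on the event $\{v_i\le b\}$. For the ``$\Leftarrow$'' direction: if $v_i\le b$ then no overbidding gives $b_i\le v_i\le b$, and either $b_i<b$ (so $i^*$ outbids $i$) or $b_i=b=v_i<v_{i^*}^1$ (so $i^*$ wins the Vickrey tie-break). For ``$\Rightarrow$'': a competitor with $v_i\in(b,v_{i^*}^1)$ cannot be bidding $\le b$, since such a bid would lose to $i^*$ — who always bids at least $\underline{b}$ and wins any tie against a strictly lower value — with probability one, so that type strictly prefers a higher, profitable bid; and a competitor with $v_i\ge v_{i^*}^1$ either outbids $b$ or ties at $b$ and wins. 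Independence across buyers then yields $i^*$'s winning probability as $\prod_{i\ne i^*}G_i(b)$. Combining this with Step 1 and the fact that $\underline{b}$ is a best response for $i^*$ at value $v_{i^*}^1$, for every bid $b$,
\[
(v_{i^*}^1-\underline{b})\prod_{i\ne i^*}G_i(\underline{b})\;\ge\;(v_{i^*}^1-b)\prod_{i\ne i^*}G_i(b),
\]
where for $b>v_{i^*}^1$ the right-hand side is nonpositive and thus trivially dominated; hence $\underline{b}=\argmax_b(v_{i^*}^1-b)\prod_{i\ne i^*}G_i(b)$.

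The main obstacle is Step 1 together with the ``$\Rightarrow$'' half of Step 2: arguing that it is precisely the lowest type of the strongest-at-the-bottom buyer $i^*$ that pins the bid distribution at $\underline{b}$, and that every competitor with value strictly between $b$ and $v_{i^*}^1$ is pushed strictly above $b$. Both hinge on combining monotonicity of equilibrium strategies, the non-profitability of bids below $\underline{b}$, and the value-based tie-breaking rule; without Assumption~\ref{assumption} the identity only holds approximately. One must also dispose of the knife-edge case $\underline{b}=v_{i^*}^1$ and of ties among buyers whose smallest value equals $v_{i^*}^1$, in which the maximizer is understood to be the largest one; in both situations the payoffs involved are zero, so the stated characterization is unaffected. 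The remaining algebra in the optimality step is then routine.
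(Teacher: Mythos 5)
First, note that the paper itself does not prove this lemma: it is quoted from \citet{maskin2003uniqueness}, and the text only adds the informal remark that buyer $i^*$'s smallest equilibrium bid must coincide with $\underline{b}$ (which matches your Step 1). So your attempt has to be judged on its own; its overall architecture (anchor $i^*$'s lowest type at $\underline{b}$, compute his winning probability, invoke best response) is the standard one.

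The genuine gap is in the ``$\Rightarrow$'' half of Step 2. You claim that for every bid $b\le v_{i^*}^1$, buyer $i^*$ bidding $b$ beats competitor $i$ exactly on $\{v_i\le b\}$, justified by ``such a bid would lose to $i^*$ --- who always bids at least $\underline{b}$ --- with probability one.'' That justification is only valid at $b=\underline{b}$: for $b>\underline{b}$ a competitor bidding in $(\underline{b},b]$ does \emph{not} lose to $i^*$ with probability one, since $i^*$ himself may bid below that competitor's bid (e.g.\ $i^*$'s lowest type bids $\underline{b}$). Indeed the claimed characterization is false in general, and the paper's own Example~\ref{classic} refutes it: there $i^*=3$, $v_{i^*}^1=9$, $\underline{b}=2$, yet buyer 1 with value $10\ge v_{i^*}^1$ bids throughout $[2,6]$, so for $b=5$ buyer 3 bidding $5$ beats buyer 1 with probability $F_1(5)>G_1(5)$; this also contradicts your sub-claim that a competitor with $v_i\ge v_{i^*}^1$ ``either outbids $b$ or ties at $b$ and wins.'' The conclusion is salvageable, because the two sides of the best-response inequality only need one-sided bounds: the correct ``$\Leftarrow$'' direction already gives $W(b)\ge\prod_{i\ne i^*}G_i(b)$ for the deviation payoff at any $b\le v_{i^*}^1$, and the exact identity $W(\underline{b})=\prod_{i\ne i^*}G_i(\underline{b})$ is needed only at $b=\underline{b}$, where your ``loses with probability one'' argument does apply. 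But you must rewrite Step 2 in this asymmetric form, and the equality at $\underline{b}$ still requires explicitly ruling out that a competitor with value in $(\underline{b},v_{i^*}^1)$ puts an atom at $\underline{b}$ or bids below it; that in turn uses the fact that bidding slightly above $\underline{b}$ wins with strictly positive probability (no buyer's bid distribution can be bounded away from $\underline{b}$ from above, else $\underline{b}$ would not be a winning bid), which you currently take for granted. Finally, your argument shows $\underline{b}$ is \emph{a} maximizer of $(v_{i^*}^1-b)\prod_{i\ne i^*}G_i(b)$, not that the maximizer is unique; the lemma's statement is loose on this point, but it deserves a remark beyond the knife-edge cases you mention.
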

%
%

We claim that smallest bid placed by buyer $i^*$ is exactly $\underline{b}$:
\begin{itemize}
    \item On the one hand, $\underline{b}$ must be larger than or equal to the smallest bid of buyer 1. Otherwise, other buyers will have no incentive to place bid $\underline{b}$, since giving $\underline{b}$ always loses.
    \item On the other hand, $\underline{b}$ must be smaller than or equal to the smallest bid of buyer $i^*$, otherwise he will have negative utilities.
\end{itemize}  

According to Example \ref{discrete2buyer}, a buyer with a certain value $v^j_i$ may place multiple bids.
Let $S_i(v^j_i)$ be the set of possible bids for buyer $i$ when he has value $v^j_i$. For ease of presentation, we assume that $S_i(v^j_i)$ is a closed set, otherwise, we can take the closure of the support as $S_i(v^j_i)$. Denote buyer $i$'s all possible bids by $S_i$, i.e., $S_i=\bigcup_j S_i(v^j_i)$.

\citet{maskin2003uniqueness} show that for any winning bid $b$, there are at least two buyers $i$ and $j$, such that $b\in S_i$ and $b\in S_j$. 
The intuition is that any buyer who places a winning bid $b$ needs a competitor, otherwise, the buyer can place $b-\epsilon$ to increase his utility.
	\begin{lemma}[\citet{maskin2003uniqueness}]
		\label{lemma4}
		In the BNE of the first-price auction, for buyer $i$ and any $b_i>\underline{b}$, if $b_i\in S_i$, then there must exist another buyer, who bids in $(b_i-\epsilon, b_i)$ with positive probability for any $\epsilon$.
	\end{lemma}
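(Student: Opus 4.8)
The plan is to argue by contradiction: suppose there is a buyer $i$ and a winning bid $b_i > \underline{b}$ with $b_i \in S_i$, such that some $\epsilon > 0$ exists for which \emph{no} buyer (including $i$ himself) places a bid in the punctured interval $(b_i - \epsilon, b_i)$ with positive probability. I want to show buyer $i$ could profitably deviate. Since $b_i \in S_i$, there is a value type $v_i^j$ of buyer $i$ that places $b_i$ with positive probability; fix that type. Because $b_i > \underline{b}$, type $v_i^j$ wins with positive probability when bidding $b_i$, so in particular $v_i^j > b_i$ (no overbidding, and the bid must be profitable conditional on winning with positive probability). The key observation is that the interim winning probability of buyer $i$ as a function of his bid $b$ is $\prod_{k \ne i} F_k(b)$ (under Assumption~\ref{assumption}, the Vickrey tie-break, which only helps the deviator when ties occur at a point mass, so it does not hurt this argument).

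The core step is that this winning probability is \emph{constant} on the interval $[b_i - \epsilon, b_i]$: for each competitor $k \ne i$, since $k$ places no bid in $(b_i - \epsilon, b_i)$, the measure $F_k$ assigns zero mass to that open interval, hence $F_k(b)$ is constant for $b \in [b_i - \epsilon, b_i)$, and $F_k(b_i)$ can differ from this constant only by the atom of $k$ at exactly $b_i$. Therefore, deviating type $v_i^j$ from bidding $b_i$ to bidding $b_i - \epsilon/2$ keeps the probability of being a strict highest bidder at least as large as the probability of \emph{strictly} outbidding all competitors at $b_i$ — the only thing lost is the chance of tying some competitor exactly at $b_i$, but under the Vickrey tie-break that event contributes at most the second-highest-value term to the utility, and crucially the win-outright probability at $b_i - \epsilon/2$ is the same as at $b_i$ minus the combined atom. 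I would phrase this cleanly by comparing expected utilities directly: moving the bid down by $\epsilon/2$ strictly decreases the payment by $\epsilon/2$ on the event of winning outright (a positive-probability event since $b_i - \epsilon/2 \ge \underline{b}$ after shrinking $\epsilon$, or more carefully since $b_i > \underline{b}$ we may take $\epsilon$ small enough that $b_i - \epsilon/2 > \underline{b}$), while not decreasing the winning probability except possibly by giving up ties at $b_i$. A short case analysis on whether any competitor has an atom at $b_i$ finishes it: if none does, the deviation is strictly profitable, contradicting BNE; if some competitor $k$ has an atom at $b_i$, then $k$ itself would prefer to shift that atom slightly below $b_i$ (same outbidding probability against the others, strictly lower price, and by the no-gap assumption nothing of $i$'s is in between to worry about), which is the standard "no two atoms / no atom just above a gap" argument and again contradicts equilibrium.

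I would organize the write-up as: (1) set up the contradiction hypothesis and extract the type $v_i^j$ with $v_i^j > b_i > \underline{b}$; (2) record the formula for the interim allocation probability and note that the no-gap hypothesis forces $F_k$ to be flat on $[b_i - \epsilon, b_i)$ for every $k \ne i$; (3) choose the deviation bid $b_i' = b_i - \delta$ for small $\delta$ with $b_i' > \underline{b}$, and compare utilities, showing the gain $\delta \cdot \Pr[\text{win outright at } b_i']$ against the only possible loss, the tie term at $b_i$; (4) dispose of the tie term using Assumption~\ref{assumption} together with a sub-argument that any atom at $b_i$ (by a competitor or by $i$ itself) is itself non-equilibrium given the gap below it.

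The main obstacle I anticipate is handling the tie-breaking carefully: under the Vickrey rule a buyer who ties at $b_i$ does not simply get a $1/(\text{multiplicity})$ share but rather wins iff he has the highest value among those tied and then pays the second-highest tied value, so the "loss" from giving up ties is not obviously dominated by the "gain" $\delta \cdot \Pr[\text{win outright}]$ unless one is careful. The clean resolution is to not try to bound the tie term pointwise but instead to first show that in a BNE no competitor $k \ne i$ can have a positive-probability atom at $b_i$ when there is a gap $(b_i - \epsilon, b_i)$ empty of all bids — because such an atom is strictly improvable by $k$ shifting it down into the gap — after which the tie event at $b_i$ has zero probability against competitors and the deviation for $i$ is unambiguously profitable. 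I expect this "no atom above a gap" sub-lemma to be the technical heart, and it is itself a short exchange-argument of exactly the same flavor.
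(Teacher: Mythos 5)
Your overall strategy is the same as the paper's: argue by contradiction, note that buyer $i$'s type bidding $b_i$ has $v_i > b_i$ (positive equilibrium utility since $b_i>\underline{b}$), observe that the competitors' distributions are flat on the empty interval, and conclude that moving the bid down from $b_i$ into the gap saves payment without losing outright wins. The paper's proof is exactly this, except for one step: it first rules out any atom at $b_i$ by an \emph{upward} deviation (if some buyer bids exactly $b_i$ with positive probability, buyer $i$ gains by bidding $b_i+\epsilon$), after which the downward deviation is unambiguously strictly profitable.

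Where your write-up has a genuine hole is in your substitute for that step, the ``no atom above a gap'' sub-lemma, which you state as: any competitor $k$ with an atom at $b_i$ would strictly gain by shifting it slightly below $b_i$. Under Assumption \ref{assumption} this is false as a blanket claim. Since $b_i\in S_i$ is compatible with buyer $i$ himself having an atom exactly at $b_i$ (Lemma \ref{nomass} is not yet available here, and you correctly do not use it), consider the configuration in which both $i$ and $k$ place atoms at $b_i$ with $v_k>v_i>b_i$. Then $k$ wins the Vickrey tie-break and pays $v_i$, collecting a \emph{discrete} expected profit $(v_k-v_i)$ times the product of the two atom probabilities; shifting his atom down by $\delta$ forfeits all of that in exchange for a price saving of only $\delta$ on outright wins, which need not be an improvement for the admissible (small) $\delta$. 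So the deviation you assign to $k$ does not refute this configuration. The configuration is of course still non-equilibrium, but to see that you must either (i) case-split and let the \emph{tie-losing} atom-owner (here buyer $i$, whose tie term is worth zero) be the one who deviates downward, or (ii) do what the paper does and dispose of any atom at $b_i$ first via the upward deviation $b_i+\epsilon$ by buyer $i$, which works whether the atom belongs to $i$'s opponent of higher or lower value (in the former case he converts lost ties into wins at negligible extra cost; in the latter he replaces a Vickrey payment of $v_k>b_i$ by $b_i+\epsilon$). With either repair your argument goes through and matches the paper's in substance; as written, the sub-lemma you yourself identify as the technical heart is not correct in the stated generality.
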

\begin{proof}
Buyer $i$'s utility is positive since he can bid $\underline{b}+\epsilon$ which implies positive winning probability and positive utility conditioned on winning.
There is no buyer who bids $b_i$ with positive probability, otherwise buyer $i$ can increase his utility by bidding $b_i+\epsilon$.
If there is another buyer who bids in $(b_i-\epsilon, b_i)$ with zero probability, then buyer $i$ has an incentive to decrease his bid to $b_i-\epsilon$ which strictly increases his utility.
\end{proof}
	
	\citet{maskin2003uniqueness} show that, in first price auctions, a buyer would not give a particular bid with positive probability when this bid is larger than or equal to $\underline{b}$.
	\begin{lemma}[\citet{maskin2003uniqueness}]
	\label{nomass} 
		For any buyer $i$, there is no mass point above $\underline{b}$ in buyer $i$'s bid distribution.
	\end{lemma}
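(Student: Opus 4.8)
The plan is a proof by contradiction. Suppose some buyer $i$ places a bid $b>\underline b$ with positive probability $p_i>0$, arising from a value $v$ with $v\ge b$ (no overbidding). The first observation is that, since $b\ge\underline b$, bidding $b$ wins with positive probability; as winning requires every other buyer to bid at most $b$, this forces $\Pr[\,\forall k\ne i:\ b_k\le b\,]=\prod_{k\ne i}F_k(b)>0$, so in particular no other buyer bids strictly above $b$ with probability one.

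Second, I would show that at most one buyer can have an atom at $b$. If buyer $k\ne i$ also had an atom at $b$, then by Assumption~\ref{assumption} only the atomic bidder with the strictly largest value wins such a tie with certainty, so there is an atomic bidder $k$ whose winning probability, conditioned on all the maximal-value atomic bidders also bidding $b$ (an event of positive probability, being a product of positive atom masses), is strictly below $1$. Moving buyer $k$'s bid from $b$ up to $b+\delta$ then lifts his winning probability by at least a positive constant $p_i\cdot\Pr[\,\forall l\ne i,k:\ b_l\le b\,]$ times that deficit, while costing only $\delta$; since $k$'s value exceeds $b$ (the case ``value $=b$'' is ruled out because then a bid slightly above $\underline b$ would already be strictly better), this is a profitable deviation for small $\delta$ --- a contradiction. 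Hence $b$ is atomic only for buyer $i$, and consequently $\Pr[b_k<b]>0$ for every $k\ne i$ (otherwise $\Pr[b_k\le b]=0$, contradicting the first step), so $\prod_{l\ne i,j}\Pr[b_l<b]>0$ for every further index $j\ne i$.

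Third, apply Lemma~\ref{lemma4} to buyer $i$ and $b\in S_i$: there is another buyer $j$ who, for every $\epsilon>0$, bids in $(b-\epsilon,b)$ with positive probability. As buyer $j$ has finitely many values, there is a single value $w$ and bids $b'_n\uparrow b$ with $b'_n\in S_j(w)$, so $w\ge b$; and $w=b$ is again impossible (a bid just above $\underline b$ beats $b'_n$ for large $n$), so $w>b$ and each $b'_n$ is a best response for buyer $j$ at value $w$. Writing $P_j(x)$ for buyer $j$'s winning probability at bid $x$, observe that whenever buyer $i$ bids exactly $b$ and all $l\ne i,j$ bid at most $b'_n$, buyer $j$ loses at $b'_n$ but wins at $b+\delta$; hence $P_j(b+\delta)-P_j(b'_n)\ge p_i\prod_{l\ne i,j}\Pr[b_l\le b'_n]$, which converges to $c:=p_i\prod_{l\ne i,j}\Pr[b_l<b]>0$ as $n\to\infty$. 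A short calculation then shows the utility of bidding $b+\delta$ exceeds that of bidding $b'_n$ by approximately $(w-b)c>0$ once $\delta$ is small and $n$ large, contradicting the optimality of $b'_n$.

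The crux --- and the main obstacle --- is the second step: ruling out coincident atoms and, above all, ensuring the upward jump in the winning probability is bounded away from zero rather than merely positive. This is exactly where the Vickrey tie-breaking rule (Assumption~\ref{assumption}) is genuinely used, and it rests on the remaining buyers $l\ne i,j$ placing bids strictly below $b$ with positive probability, which is itself a consequence of $b$ being a winning bid together with there being at most one atom at $b$. The boundary case of a buyer whose value equals $b$ is minor and is dispatched by comparing against a bid just above $\underline b$.
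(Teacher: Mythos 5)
Your argument is correct and follows essentially the same route as the paper's (Maskin--Riley) proof: an atom of buyer $i$ at $b>\underline{b}$ forces, via Lemma~\ref{lemma4}, a rival $j$ bidding just below $b$ whose value must exceed $b$, and that rival profitably jumps over the atom to $b+\delta$, a contradiction. Your second step (ruling out coincident atoms via Assumption~\ref{assumption}) is harmless but not actually needed: since $j$'s bids $b'_n\ge\underline{b}$ are winning bids, every other buyer already places positive probability at or below $b'_n<b$, which is all the positivity your step 3 requires, so the tie-breaking rule plays no essential role in this lemma.
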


	The following lemma shows that when a buyer's value is larger than or equal to $\underline{b}$, his bidding strategy is monotone in his value.
	\begin{lemma}[\citet{maskin2003uniqueness}]
		For each buyer, his bidding strategy is monotone in value, i.e., $\max S_i(v^j_i)\leq \min S_i(v^{j+1}_i)$ for $v^j_i\geq \underline{b}$.
		\label{1pmonotone}
	\end{lemma}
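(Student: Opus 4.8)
The plan is a revealed-preference (exchange) argument in the spirit of \citet{maskin2003uniqueness}, after disposing of a boundary case. Fix a buyer $i$ and an index $j$ with $v_i^j\ge\underline b$; since $v_i^{j+1}>v_i^j$ this forces $v_i^{j+1}>\underline b$. It suffices to show that $b\le b'$ for every $b\in S_i(v_i^j)$ and every $b'\in S_i(v_i^{j+1})$.

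If $v_i^j=\underline b$, the value-$v_i^j$ type cannot earn positive utility (bids below $\underline b$ lose for sure, $\underline b=v_i^j$ itself yields zero surplus, and overbidding is forbidden), so $S_i(v_i^j)\subseteq[0,\underline b]$; on the other hand the value-$v_i^{j+1}$ type gets strictly positive equilibrium utility since it can deviate to some $\underline b+\epsilon<v_i^{j+1}$, so every bid in $S_i(v_i^{j+1})$ is a winning bid and thus $\ge\underline b$. Hence $\max S_i(v_i^j)\le\underline b\le\min S_i(v_i^{j+1})$. From now on assume $v_i^j>\underline b$; then both types have strictly positive equilibrium utility, so all their bids are winning bids lying strictly below the respective value.

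The key object is $\hat\beta_i(x):=\prod_{k\ne i}F_k(x)$, which is weakly increasing, strictly positive on winning bids, and --- importantly --- independent of $i$'s own value. By Lemma~\ref{nomass} there are no mass points above $\underline b$, so for $x>\underline b$ this $\hat\beta_i(x)$ is exactly buyer $i$'s winning probability when he bids $x$. I would first show that the equilibrium utility of the value-$v_i^j$ type equals $(v_i^j-b)\hat\beta_i(b)$ for every $b\in S_i(v_i^j)$: this is immediate when $b>\underline b$, and when $b=\underline b$ it holds because, if the actual winning probability at $\underline b$ were strictly below $\hat\beta_i(\underline b)$ (which can happen only through losing a Vickrey tie to a higher-valued competitor who also bids $\underline b$), then deviating to $\underline b+\epsilon$ would be strictly profitable, contradicting optimality of $\underline b$. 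The same statement holds for the value-$v_i^{j+1}$ type. Consequently, by comparing each type's equilibrium utility with the payoff it can attain (or approach, if the target bid is $\underline b$, by bidding $\underline b+\epsilon$) by mimicking the other type, we obtain $(v_i^{j+1}-b')\hat\beta_i(b')\ge(v_i^{j+1}-b)\hat\beta_i(b)$ and $(v_i^j-b)\hat\beta_i(b)\ge(v_i^j-b')\hat\beta_i(b')$.

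Adding these and cancelling gives $(v_i^{j+1}-v_i^j)\bigl(\hat\beta_i(b')-\hat\beta_i(b)\bigr)\ge0$, hence $\hat\beta_i(b')\ge\hat\beta_i(b)$. Now suppose $b>b'$. Then weak monotonicity of $\hat\beta_i$ forces $\hat\beta_i(b)=\hat\beta_i(b')$, and substituting this into the second inequality and dividing by $\hat\beta_i(b)>0$ yields $v_i^j-b\ge v_i^j-b'$, i.e.\ $b\le b'$ --- a contradiction. Therefore $b\le b'$ for all such $b,b'$, which is the claim. The main obstacle, and the only place where the discrete structure and the Vickrey tie-breaking of Assumption~\ref{assumption} genuinely enter, is the behavior at $\underline b$: since mass points can sit exactly at $\underline b$, one must be careful that the ``effective'' winning probability feeding the exchange argument is $\hat\beta_i(\underline b)=\prod_{k\ne i}F_k(\underline b)$ rather than the literal, possibly smaller, winning probability at $\underline b$; the remaining steps are the routine Maskin--Riley computation.
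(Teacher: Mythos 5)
Your argument is correct, but note that the paper never proves this lemma at all: it is imported from \citet{maskin2003uniqueness} with only a footnote asserting that the result carries over to discrete distributions, so there is no ``paper proof'' to match --- what you have written is precisely the missing adaptation. Your route is the standard Maskin--Riley revealed-preference (exchange) argument: equilibrium utilities at own bids, mimicry inequalities, add and cancel to get $(v_i^{j+1}-v_i^j)\bigl(\hat\beta_i(b')-\hat\beta_i(b)\bigr)\ge 0$, then use monotonicity of $\hat\beta_i$ and positivity of the low type's utility to rule out $b>b'$. The genuinely discrete-specific content, which you identify correctly, is twofold: (i) the boundary case $v_i^j=\underline b$, where no-overbidding pins $S_i(v_i^j)\subseteq[0,\underline b]$ while the higher type's strictly positive utility (via $\underline b+\epsilon$, using that $\underline b$ is the smallest winning bid) forces its bids above $\underline b$; and (ii) the possible atom at $\underline b$, where the literal winning probability can fall below $\hat\beta_i(\underline b)$ because of the Vickrey tie-break of Assumption \ref{assumption}, and your observation that a type with value strictly above $\underline b$ would otherwise profitably deviate to $\underline b+\epsilon$ restores the identity needed for the exchange step; above $\underline b$, Lemma \ref{nomass} makes $\hat\beta_i$ the exact winning probability. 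Two cosmetic remarks: when the low type ``mimics'' a bid $b'>v_i^j$ the deviation is formally excluded by the no-overbidding convention, but your second inequality is then trivially true since its right-hand side is negative while the equilibrium utility is positive (and in the contradiction case $b'<b\le v_i^j$ anyway); and since $S_i(\cdot)$ is defined as a closure, the utility identities extend to limit points because $\hat\beta_i$ is continuous above $\underline b$ --- both are gaps in exposition only, not in substance.
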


Here we provide an example of what the BNE looks like in the discrete case. The computation of such a BNE will be clear after the analysis of our algorithm.
	\begin{example}
    \label{classic}
    Suppose there are 4 buyers with the following discrete value distributions:
    \begin{gather*}
    G_1(x)=\begin{cases}
    1& x=20\\
    \frac{11\sqrt{7}}{24\sqrt{3}}& x=10\\
    \frac{\sqrt{77}}{12\sqrt{2}}& x=2\\
    \end{cases},
    G_2(x)=\begin{cases}
    1& x=14\\
    \frac{4}{\sqrt{21}}& x=13\\
    \frac{2\sqrt{22}}{7\sqrt{7}}& x=1\\
    \end{cases},
    \end{gather*}
    
    \begin{gather*}
    G_3(x)=\begin{cases}
    1& x=20\\
    \frac{11}{12}& x=9\\
    \end{cases},
    G_4(x)=\begin{cases}
    1& x=12\\
    \frac{3\sqrt{3}}{2\sqrt{7}}& x=1\\
    \end{cases}.
    \end{gather*}
    
    In the BNE, the buyers bid according to the following bid distributions:
    
    \begin{gather*}
    F_1(x)=\begin{cases}
    \frac{11}{20-x}& x\in (8,9]\\
    \frac{11}{12}\sqrt{\frac{2(20-x)}{(14-x)(12-x)}}& x\in (6,8]\\
    \frac{77}{48}\sqrt{\frac{10-x}{(9-x)(13-x)}}& x\in [2,6]\\
    \end{cases},
    F_2(x)=\begin{cases}
    \sqrt{\frac{8(14-x)}{(20-x)(12-x)}}& x\in (6,8]\\
    \frac{8}{7}\sqrt{\frac{13-x}{(10-x)(9-x)}}& x\in [2,6]\\
    \frac{2\sqrt{22}}{7\sqrt{7}}& x=1\\
    \end{cases},
    \end{gather*}
    
    \begin{gather*}
    F_3(x)=\begin{cases}
    \frac{11}{20-x}& x\in [8,9]\\
    \frac{11}{6}\sqrt{\frac{7(9-x)}{3(10-x)(13-x)}}& x\in [2,6]\\
    \end{cases},
    F_4(x)=\begin{cases}
    \sqrt{\frac{18(12-x)}{(20-x)(14-x)}}& x\in [6,8]\\
    \frac{3\sqrt{3}}{2\sqrt{7}}& x=1\\
    \end{cases}.
    \end{gather*}

    \begin{figure}[h]
        \centering
        \includegraphics[width=0.5\linewidth]{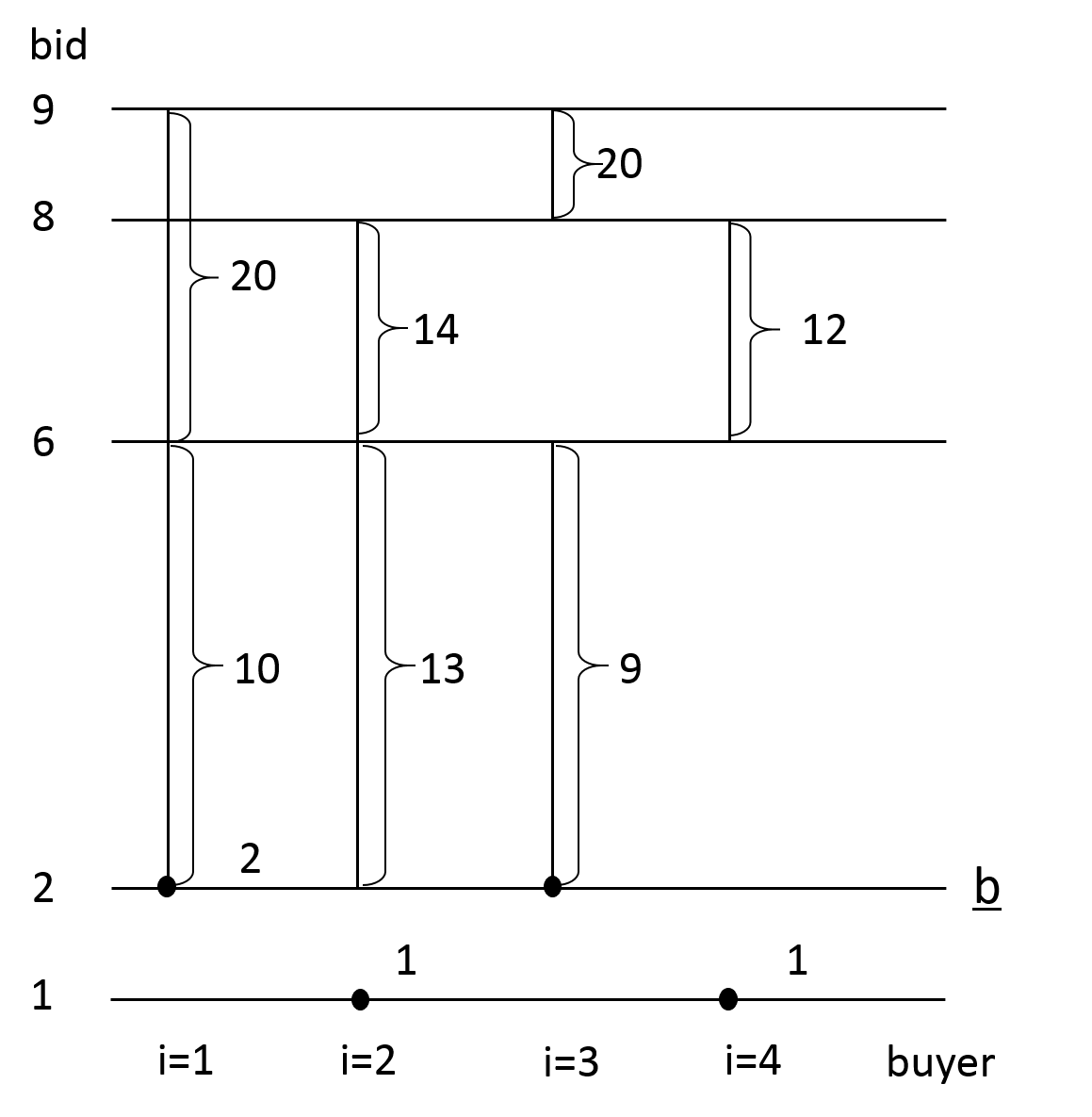}
        \caption{Each value's corresponding bid interval is indicated by braces. A dot implies a positive probability (point mass). Buyer 1 with value 2 and Buyer 3 with value 9 are both likely to bid 2. According to Assumption \ref{assumption}, Buyer 3 wins in this case.}
        \label{fig001}
    \end{figure}
    
\end{example}

\section{Overview of our algorithm}
As mentioned in Section \ref{sec:related_works}, our algorithm falls in the backward-shooting category. In first-price auctions, if the largest winning bid $\bar{b}$ is given, our theoretical analysis enables us to compute the buyers' bid distributions all the way down to the smallest winning bid $\underline{b}$. However, it turns out that we only know how to compute the smallest winning bid $\underline{b}$ (Theorem \ref{lem:min_winning_bid}), but have no idea about $\bar{b}$. Therefore, the backward-shooting algorithms just repeatedly guess $\bar{b}$ and see if the computed $\underline{b}(\bar{b})$ matches the actual $\underline{b}$, reducing the computation of the BNE to a searching problem. An overview of the backward-shooting algorithm is shown in Algorithm \ref{alg:overview}, where a binary search algorithm is used.
\begin{algorithm}[h!]
    \SetAlgoLined
    \SetKwInOut{Input}{Input}
    \SetKwInOut{Output}{Output}
    
    \Input{Buyers' value distributions $G_i$.}
    \Output{Buyers' bid distributions $F_i$.}
    
    Compute the smallest winning bid $\underline{b}$ using Lemma \ref{lem:min_winning_bid} \;
    $UB\gets\max\{\cup_{i\in N} ~\mathrm{supp}(G_i)\}$, $LB\gets0$ \;
    \While{some exit condition is not met}{
        $\bar{b}\gets \frac{1}{2}(UB+LB)$\;
        Compute $F_i$ all the way down from $\bar{b}$ to the corresponding smallest winning bid $\underline{b}(\bar{b})$\;
        \eIf{$\underline{b}(\bar{b})>\underline{b}$}{$UB\gets\bar{b}$ \;}{$LB\gets \bar{b}$ \;}
    }
    \Return  $F_i$\;
    \caption{Overview of the backward-shooting algorithm}
    \label{alg:overview}
\end{algorithm}

In Algorithm \ref{alg:overview}, $UB$ and $LB$ are the upper bound and the lower bound of the largest winning bid. According to Lemma \ref{lem:min_winning_bid}, the smallest winning bid $\underline{b}$ can be easily determined. The exit condition measures how close our guess of $\bar{b}$ is to the actual largest winning bid, for example, we can check if $UB-LB<\epsilon$ or we can compare whether $\underline{b}(\bar{b})$ is close enough to $\underline{b}$. And guaranteed by Theorem \ref{thm:mono}, \ref{continuity} and Corollary \ref{strict}, we know that $\underline{b}(\bar{b})$ is monotone with respect to the current guess $\bar{b}$, and how to adjust $UB$ and $LB$ accordingly.

\begin{algorithm}
    \SetAlgoLined
    \SetKwInOut{Input}{Input}
    \SetKwInOut{Output}{Output}
    
    \Input{the largest winning bid guess $\bar{b}$.}
    \Output{the smallest winning bid $\underline{b}(\bar{b})$.}
    
    Initialize $b \leftarrow \bar{b}, \Lambda(b) \leftarrow \emptyset$ \;
    Update $\Lambda(b)$ by repeatedly adding buyers to $\Lambda(b)$ according to Theorem \ref{thmenter}\;
    \While{$|\Lambda(b)|\geq 2$ and $b> 0$}{
        Predict the next change position $b'$ according to Theorem \ref{thmrule1leave} and
        \ref{thmenter}\;
        Set $b\gets b'$\;
        Update the bidding set $\Lambda(b')$ by removing buyers  according to Theorem \ref{thmrule1leave}\;
        Update the bidding set $\Lambda(b')$ by adding buyers  according to Theorem \ref{thmenter}\;
    }
    \Return  $\underline{b}(\bar{b})=b$\;
    \caption{Compute $\underline{b}(\bar{b})$ given guess $\bar{b}$.}
    \label{alg}
\end{algorithm}

In the continuous value distribution case, the computation of $\underline{b}(\bar{b})$ given $\bar{b}$ is done through solving ordinary differential equations. However, in the discrete case, we compute $\underline{b}(\bar{b})$ with Algorithm \ref{alg}. We define a core structure called the ``bidding set'' (Definition \ref{def:bidding_set}), and starting from $\bar{b}$, we update the bidding set as the bid goes down (i.e., compute where each buyer enters or leaves the bidding set with all his values). Each buyer enters the bidding set with his largest ``unconsumed'' value when certain conditions are met (Theorem \ref{thmenter}), and leaves the bidding set when the probability of the corresponding value is used up (Theorem \ref{thmrule1leave}), hence the value is ``consumed''. After all buyers consumed all their values, the corresponding $\underline{b}(\bar{b})$ is found. In our setting, Algorithm \ref{alg} is used in Algorithm \ref{alg:overview} as a sub-routine.

\section{The bidding set}
\label{sec:bidding_set}
Our objective is to compute every buyer's strategy in BNE.
Since a buyer's bidding strategy is monotone, it suffices to compute the bid distribution because we can map a buyer's value to a bid with the same quantile in his bid distribution. However, when the bid support is not continuous, there might exist multiple bids with the same quantile. 

In this section, we analyze the structure of the BNE. We introduce a useful tool called the ``bidding set'', and analyze how the bidding set changes in the bid space. With the analysis of the structure, an algorithm of computing the BNE can be naturally derived.
	\begin{definition}[Bidding set and waiting list]
		\label{def:bidding_set}
	The set of buyers whose bidding strategies include bid $x$ is called the \emph{bidding set}, denoted by $\Lambda(x)$, i.e., $\Lambda(x)=\{i\mid x\in S_i \}$.
    
    The set of other buyers $N-\Lambda(x)$ is called the \emph{waiting list} at bid $x$.
\end{definition}	
When there is no ambiguity, we use $\Lambda$ instead. The following theorem is about the relationship between the bid  distribution and $\Lambda$.
For any buyer set $\Lambda$, let $F_\Lambda(x)$ denote the product of the cumulative bid distribution of buyers in $\Lambda$, i.e., $F_\Lambda(x)=\prod_{i\in \Lambda(x)} F_i(x)$.
We abuse notation and use $v_i(x)$ to represent player $i$'s value when he bids $x$ in the equilibrium ($v_i(x)$ is well-defined according to Lemma \ref{1pmonotone}). 

Before discussing how the bidding set $\Lambda$ changes in the bid space, let's first consider some properties of the bidding set.

Define function
\begin{gather*}
h_i(x)=\frac{1}{|\Lambda(x)|-1}\left(\sum_{j\in \Lambda(x)}\frac{1}{v_j-x}\right)-\frac{1}{v_i-x}, \forall i\in \Lambda(x).
\end{gather*}
	\begin{theorem}
	Suppose $\Lambda(x)$ does not change in bid interval $(b_1,b_2)$, and $v_i(x)$ is constant for $x\in(b_1,b_2), i\in \Lambda(x)$. Then the bid distribution of every buyer in $\Lambda$ is differentiable  in this interval. In fact, for any $x\in (b_1,b_2)$ we have
	$$ \frac{f_i(x)}{F_i(x)}=h_i(x), \forall i \in \Lambda.$$
	
	\label{thmeqa}
	\end{theorem}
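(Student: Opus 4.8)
The plan is to derive a system of ODEs from the first-order optimality conditions of each active buyer and then solve it locally. First I would fix $x \in (b_1,b_2)$ and recall that, by Lemma \ref{lemma4}, every buyer in $\Lambda$ who bids $x$ needs a competitor, so $|\Lambda| \ge 2$ throughout the interval, and by Lemma \ref{nomass} no buyer has an atom above $\underline b$, so each $F_j$ is continuous on $(b_1,b_2)$. For a buyer $i \in \Lambda$ with value $v_i := v_i(x)$ (constant on the interval by hypothesis), bidding $x$ must be a best response, so $x \in \argmax_b (v_i - b)\prod_{j \in \Lambda, j \ne i} F_j(b)$ — here I use that the competing buyers bidding near $x$ are exactly the other members of $\Lambda$, and that buyers not in $\Lambda$ contribute a locally constant factor that does not affect the maximization. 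Writing $\Pi_i(b) = \prod_{j \ne i} F_j(b)$, the interior first-order condition gives $\Pi_i(x) = (v_i - b)\Pi_i'(x)$ evaluated appropriately; dividing through, this is equivalent to
\begin{gather*}
\frac{1}{v_i - x} = \frac{\Pi_i'(x)}{\Pi_i(x)} = \sum_{j \in \Lambda, j \ne i} \frac{f_j(x)}{F_j(x)}.
\end{gather*}
(Strictly, I first argue differentiability: since $x$ is an interior maximum of a function that is a product of monotone continuous factors, standard envelope/regularity arguments for first-price auctions — as in \citet{maskin2003uniqueness} — give that each $F_j$ is continuously differentiable on $(b_1,b_2)$ with positive density, so the $f_j$ exist.)

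Next I would assemble these $|\Lambda|$ equations into a linear system for the unknowns $g_j(x) := f_j(x)/F_j(x)$, $j \in \Lambda$. Summing the equation displayed above over all $i \in \Lambda$ gives $\sum_{i \in \Lambda} \frac{1}{v_i - x} = (|\Lambda| - 1)\sum_{j \in \Lambda} g_j(x)$, since on the right each $g_j$ is counted once for every $i \ne j$, i.e. $|\Lambda| - 1$ times. Hence $\sum_{j \in \Lambda} g_j(x) = \frac{1}{|\Lambda|-1}\sum_{j \in \Lambda} \frac{1}{v_j - x}$. Substituting this back into the individual equation $\frac{1}{v_i - x} = \sum_{j \in \Lambda} g_j(x) - g_i(x)$ and solving for $g_i(x)$ yields exactly
\begin{gather*}
g_i(x) = \frac{f_i(x)}{F_i(x)} = \frac{1}{|\Lambda|-1}\left(\sum_{j \in \Lambda} \frac{1}{v_j - x}\right) - \frac{1}{v_i - x} = h_i(x),
\end{gather*}
which is the claimed identity.

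The main obstacle I anticipate is not the linear algebra but justifying the differentiability of the $F_j$ on the open interval in the first place, and — relatedly — making precise that the relevant competitor set in buyer $i$'s optimization is exactly $\Lambda(x) \setminus \{i\}$ with locally constant contributions from everyone else. For the competitor set: if $j \notin \Lambda$ near $x$ then $F_j$ is locally constant there (buyer $j$ places no bids in a neighborhood), so $\Pi_i$ factors as (constant) $\times \prod_{j \in \Lambda \setminus \{i\}} F_j$, and the FOC only involves the $\Lambda$-factors; I should also note that no buyer in $\Lambda$ has value $\le x$ on this interval (else $v_j - x \le 0$ and $h_i$ blows up or changes sign — ruled out because a bid $x > \underline b$ with value $\le x$ gives nonpositive utility, contradicting the positive-utility argument of Lemma \ref{lemma4}), so every denominator $v_j - x$ is strictly positive and the expressions are well-defined. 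For differentiability, the cleanest route is: $F_\Lambda$ is continuous and weakly increasing, the best-response correspondence pins down $\prod_{j \ne i} F_j$ as a reciprocal-linear function of $x$ on the interval (from the FOC rearranged as $\Pi_i(x)(v_i - x) = $ const, which forces $\Pi_i$ to be $C^\infty$ where $v_i - x \ne 0$), and then one bootstraps from $\log \Pi_i - \log F_i$ being differentiable for all $i$ to conclude each $\log F_i$ is differentiable. I would present this bootstrap carefully since it is the one genuinely non-mechanical step; everything after the FOC is elementary.
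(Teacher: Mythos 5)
Your proposal is correct, and its decisive step ends up coinciding with the paper's own argument, so the difference is mostly one of packaging. The paper does not use first-order conditions at all: it writes the constant-utility (indifference) condition $u_i(v_i)=(v_i-x)F_{N\setminus\Lambda}(x)F_{\Lambda\setminus i}(x)$ for every $i\in\Lambda$ and every $x$ in the interval, multiplies these over $i\in\Lambda$ to isolate $\left(F_\Lambda(x)\right)^{|\Lambda|-1}$, and solves for each $F_i(x)$ as an explicit algebraic function of $x$ (with $u_i(v_i)$, $v_i$, and the locally constant $F_{N\setminus\Lambda}$ as constants); differentiability is then immediate and $f_i/F_i=h_i$ follows by differentiating the closed form. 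Your main derivation instead imposes the FOC $\frac{1}{v_i-x}=\sum_{j\in\Lambda\setminus\{i\}}f_j(x)/F_j(x)$ and solves the resulting linear system in the unknowns $f_j/F_j$ --- an equivalent computation, but one that presupposes exactly the differentiability the theorem asserts, and your appeal to ``standard envelope/regularity arguments'' would be circular if left on its own. You recognize this, and the bootstrap you sketch at the end is precisely the paper's argument: note that $(v_i-x)\Pi_i(x)=\text{const}$ is the zeroth-order indifference condition over the support, not a ``rearranged FOC,'' and it pins down $\Pi_i(x)=u_i/(v_i-x)$ explicitly; taking products (or sums of logs) over $i\in\Lambda$, together with local constancy of $F_j$ for $j\notin\Lambda$, then recovers differentiability of each individual $F_i$. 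So the net effect is the paper's proof with the formula extraction repackaged as a linear system; the paper's ordering (closed form first, then differentiate) is slightly cleaner because differentiability comes for free rather than needing the patch.
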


	\begin{proof}
	For any bid $x\in(b_1,b_2)$, the utility of buyer $i$ is:
	\begin{eqnarray}
	u_i(v_i)=(v_i-x)F_{N\setminus \Lambda}(x)F_{\Lambda\setminus i}(x), \forall i\in \Lambda\label{eq1}
	\end{eqnarray}
	Since buyers in $N\setminus\Lambda$ do not bid in interval $(b_1,b_2)$, we have $F_{N\setminus \Lambda}(x)=F_{N\setminus \Lambda}(b_1)$. 
	After multiplication over $i\in \Lambda$, we have
	$$\prod_{i\in \Lambda} u_i(v_i)=\prod_{i\in \Lambda}(v_i-x)(F_{\Lambda}(x))^{|\Lambda|-1}(F_{N\setminus \Lambda}(b_1))^{|\Lambda|},$$
	where $|\Lambda|$ is the number of buyers in the bidding set $\Lambda$.
	We use Equation \eqref{eq1} to cancel out the term $F_\Lambda(b_1)$ and get 
	$$F_i(x)=\frac{v_i-x}{u_i(v_i)}\left(\prod_{i\in \Lambda} \frac{u_i(v_i)}{v_i-x}\right)^{\frac{1}{|\Lambda|-1}} \left(F_{N\setminus \Lambda}(x)\right)^{-\frac{1}{|\Lambda|-1}}$$
	Since the right hand side of the equation is differentiable with $x$, the left hand side $F_i(x)$ is differentiable.
	Take derivatives on both sides, we get 
    \begin{align*}
    f_i(x)=&\left[\frac{1}{|\Lambda|-1}\left(\sum_{j\in \Lambda}\frac{1}{v_j-x}\right)-\frac{1}{v_i-x}\right]F_i(x)\\
    =&h_i(x)F_i(x).
    \end{align*}
	\end{proof}

   
If we know what the bidding set $\Lambda(x)$ is for every possible $x$ in the BNE, we can construct each buyer's bid distribution $f_i(x)$ according to Theorem \ref{thmeqa}. Therefore, the rest of this section is devoted to the analysis of how the bidding set changes.
   
    Now we discuss the basic component $S_i(v^j_i)$.
	When buyers' values are continuous, it is known that the support of the bidding strategy is connected for every buyer\cite{lebrun1999first}.
	This result no longer holds in the discrete value setting. In Example \ref{classic}, Buyer 3's possible bids have two connected parts. However, we can prove a weaker version of this structural result.
	\begin{theorem}
		$S_i(v^j_i)$ is an interval when $v^j_i\geq \underline{b}$.
		\label{thmbidcont}
	\end{theorem}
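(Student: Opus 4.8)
Proof proposal for Theorem \ref{thmbidcont} ($S_i(v_i^j)$ is an interval when $v_i^j \ge \underline{b}$):

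The plan is to argue by contradiction. Suppose $S_i(v_i^j)$ is not an interval, so there exist bids $b' < b''$ in $S_i(v_i^j)$ with a "gap" in between: there is an open interval $(c,d) \subseteq (b', b'')$ with $(c,d) \cap S_i(v_i^j) = \emptyset$, while $c, d \in S_i(v_i^j)$ (taking $c,d$ to be the endpoints of a maximal gap, which exist since $S_i(v_i^j)$ is closed). On the gap, buyer $i$ with value $v_i^j$ places no bids. I would then compare buyer $i$'s equilibrium utility when bidding $c$ versus when bidding $d$. Since both bids are used by value $v_i^j$ in equilibrium, indifference gives $(v_i^j - c)\,\Pr[\text{win} \mid \text{bid } c] = (v_i^j - d)\,\Pr[\text{win} \mid \text{bid } d]$, where the winning probability against the other buyers is $F_{N \setminus i}(\cdot) = \prod_{k \ne i} F_k(\cdot)$ (using Lemma \ref{nomass}, there are no mass points above $\underline{b}$, so ties happen with probability zero and we need not worry about Assumption \ref{assumption} here, since $c, d > \underline{b}$ when the gap is nondegenerate — the case $b' = \underline{b}$ I would handle separately using Lemma \ref{1pmonotone} and the structure near $\underline{b}$).

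The key step is to show $F_{N\setminus i}$ is \emph{constant} on $(c,d)$: if some other buyer $k$ placed bids in $(c,d)$ with positive probability, I claim buyer $i$ with value $v_i^j$ would strictly prefer to move mass from $c$ (or from $d$) into the gap. More carefully, since value $v_i^j$ bids both $c$ and $d$ with the intervening region empty, and since $F_{N\setminus i}$ is nondecreasing, the function $(v_i^j - x) F_{N\setminus i}(x)$ must take the same value at $c$ and $d$; but if $F_{N\setminus i}$ strictly increases somewhere on $(c,d)$, then by continuity of $F_{N\setminus i}$ (Lemma \ref{nomass}) there is a point $x^* \in (c,d)$ with $(v_i^j - x^*) F_{N\setminus i}(x^*) > (v_i^j - c) F_{N\setminus i}(c)$ — because between $c$ and the first point where $F_{N\setminus i}$ rises, the utility of bidding there strictly exceeds that of bidding $c$ (strictly smaller price, weakly larger win probability). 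This contradicts the optimality of bidding $c$. Hence $F_{N\setminus i}$ is constant on $[c,d]$, which forces $(v_i^j - c) = (v_i^j - d)$ from the indifference equation, i.e., $c = d$, contradicting $c < d$.

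The main obstacle I anticipate is the boundary case and the precise handling of which buyers can bid in the gap: one must rule out the possibility that $b' = \underline{b}$ creates a genuinely disconnected support for value $v_i^j$ (as in Example \ref{classic}, where Buyer 3's \emph{overall} support $S_i$ is disconnected — but that is a union over \emph{different} values, which is exactly why the theorem is stated per-value rather than for $S_i$). So I must be careful that the "gap argument" is applied to a single value $v_i^j$, and that Lemma \ref{1pmonotone} (monotonicity in value) is what prevents the pieces coming from different values from interfering. A secondary subtlety is that if $c = \underline{b}$ I cannot immediately invoke "no mass points," so that degenerate case needs the separate treatment: but if $v_i^j \ge \underline{b}$ and $i = i^*$ (the buyer with the largest smallest value), the smallest bid of value $v_i^1$ is exactly $\underline{b}$, and monotonicity pins down the lower end cleanly; for $i \ne i^*$ the relevant bids in $S_i(v_i^j)$ exceed $\underline{b}$ strictly unless $v_i^j$ is itself the bottom of $i$'s support in a degenerate configuration, which I would dispatch directly.
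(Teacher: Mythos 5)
Your key step is false, and it is exactly the point where the real difficulty of the theorem lies. You claim that $F_{N\setminus i}$ must be constant on the gap $(c,d)$, i.e.\ that no other buyer bids inside the gap. But by Lemma~\ref{lemma4}, since $d\in S_i(v_i^j)$ and $d>\underline{b}$, some \emph{other} buyer must bid in $(d-\epsilon,d)$ with positive probability for every $\epsilon$, so $F_{N\setminus i}$ strictly increases just below $d$; a gap in one buyer's support is, in equilibrium, necessarily filled by other buyers. Indifference of buyer $i$ between $c$ and $d$ only forces $F_{N\setminus i}(d)/F_{N\setminus i}(c)=(v_i^j-c)/(v_i^j-d)>1$, which is perfectly compatible with other buyers bidding throughout $(c,d)$. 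The deviation argument you give for constancy is also reversed: a bid $x^*\in(c,d)$ carries a \emph{higher} price than $c$, not a lower one, so on any stretch where $F_{N\setminus i}$ has not yet risen such a deviation is strictly worse than $c$, and where it has risen the deviation utility can stay below the equilibrium level (it tends to $u(d)=u(c)$ as $x^*\to d$), so no contradiction is obtained. Once the false constancy claim is removed, nothing in your argument rules out a disconnected $S_i(v_i^j)$.

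The paper's proof has to work much harder precisely because the gap is occupied. It first proves a comparison lemma (Lemma~\ref{2pmonotone}): if buyer $i$ with value $v_i$ bids $b_1$, skips $(b_1,b_2)$, and buyer $j$ with value $v_j$ bids $b_2$, then $v_j\ge v_i$ (with the equal-value case degenerate). It then takes the \emph{largest} value over all buyers whose bid set has a jump, so every buyer filling the gap has a strictly larger value and a connected bid interval, shows the upper-neighborhood bidding set $\Lambda_1$ at $b_1$ is contained in the lower-neighborhood set $\Lambda_2$ at $b_2$, and derives a contradiction by combining the first-order optimality conditions of buyer $i$ at $b_1^+$ and $b_2^-$ (the $h_i$-type inequalities from Theorem~\ref{thmeqa}) over $\Lambda_1\subseteq\Lambda_2$. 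If you want to salvage your approach, you would need to replace the constancy claim with an argument of this kind — identifying who bids in the gap, bounding their values, and exploiting single-crossing/first-order conditions at both endpoints — which is essentially reconstructing the paper's proof.
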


We defer the proof to Appendix \ref{appendix:interval_proof}.

\begin{remark}
    In Example \ref{classic},  ties happen with zero probability when the winning bid is greater than $\underline{b}$, but with strict positive probability when the winning bid is $\underline{b}$.
Buyer 1 with value 2 bids 2 and Buyer 3 with value 3 is also likely to bid 2. 
By Assumption \ref{assumption}, Buyer 3 is the winner since he has a greater value. 
Actually, Assumption \ref{assumption} is only used in dealing with the best response issue when players tie at the smallest winning bid $\underline{b}$. Suppose the unique BNE under this assumption is $\mathcal{E}$. We can create an approximate BNE with $\epsilon$-BNE $\mathcal{E}'$ which does not need the tie-breaking assumption, i.e., the highest bid are resolved randomly. The only change is that when a player is supposed to bid $\underline{b}$ in $\mathcal{E}$ and this player’s value is larger than $\underline{b}$, we let this player bid $\underline{b}+\epsilon$ in $\mathcal{E}'$ instead. Therefore, the tie-breaking rule can be viewed as a way of obtaining an approximate BNE. Once the output of the algorithm is obtained, it is still an approximate BNE even without the tie-breaking rule.
\end{remark}

%
%

\subsection{Change points of the bidding set}
	\label{sec:cha2}
	In this section, we consider some properties of the bidding set at its change points. These results are helpful for computing these change points.
	
	\begin{definition}
When bidding set changes at x, we use $\Lambda^+(x)$ and $\Lambda^-(x)$ to denote the buyers who bid in the upper neighborhood and lower neighborhood around $x$, i.e.,
		$$\Lambda^+(x)=\left\{i \mid \exists \epsilon>0, (x,x+\epsilon)\subseteq S_i \right\},$$
		$$\Lambda^-(x)=\left\{i \mid\exists \epsilon>0, (x-\epsilon,x)\subseteq S_i \right\}.$$
	\end{definition}

	\begin{example}
		In Figure \ref{fig001}, $\Lambda(8.5)=\{1,3\}$, $\Lambda^+(6)=\{1,2,4\}$ and $\Lambda^-(6)=\{1,2,3\}$. The waiting list at $x=8.5$ is $\{2,4\}$.
	\end{example}
	
	Clearly, when a bidding interval $S_i(v^j_i)$ starts or ends at a certain bid $x$, $\Lambda(x)$ changes. Therefore, to characterize how the bidding set changes, it suffices to determine when a bid interval $S_i(v^j_i)$ starts and ends, or equivalently, when a buyer enters the bidding set from the waiting list and vice verse. Our method falls in the backward-shooting category, thus we compute the buyers' bidding strategy from the largest winning bid all the way down.
	\begin{definition}
		We say a buyer \emph{enters} the bidding set with value $v_i$ at bid $x$ if $x=\max S_i(v_i)$. Similarly, we say a buyer \emph{leaves} the bidding set if $x=\min S_i(v_i)$.
	\end{definition}
	 \begin{remark}
	 	Notice that entering the bidding set with value $v_i$ is different from:
	\begin{gather*}
	i\not \in \Lambda^+(x),\text{ and }i\in \Lambda^-(x).
	\end{gather*}
	The reason is that it is possible for the buyer to leave the bidding set with another value $v'_i$ and enters immediately at the same point, but with a different value $v_i$. In this case, the buyer is always in the bidding set around $x$, but his value changes.
	 \end{remark}


%
	
\subsection{When to enter the bidding set}
\label{middle_change}

The following lemma gives an important property at the point where a buyer enters the bidding set.

	\begin{lemma}
	If buyer $i$ with value $v_i$ enters the bidding set at point $b$, then we have
    \begin{gather*}
    \frac{1}{|\Lambda^+(b)|-1}\sum_{j\in \Lambda^+(b)}\frac{1}{v_j-x}\leq \frac{1}{v_i-x}, \forall x\in (b,b+\epsilon),\\
    \frac{1}{|\Lambda^-(b)|-1}\sum_{j\in \Lambda^-(b)}\frac{1}{v_j-x}\geq \frac{1}{v_i-x}.\forall x\in (b-\epsilon,b).
    \end{gather*}
	\label{lemmachange}
	\end{lemma}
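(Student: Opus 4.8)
The plan is to prove the two inequalities separately: the one for $\Lambda^-(b)$ is essentially a restatement of Theorem~\ref{thmeqa}, while the one for $\Lambda^+(b)$ comes from a one-shot deviation argument for type $v_i$ of buyer $i$. Throughout I take $b>\underline b$, so that the denominators $|\Lambda^{\pm}(b)|-1$ are positive and Lemma~\ref{nomass} applies (the case $b=\underline b$ is the smallest winning bid, pinned down separately by Lemma~\ref{lem:min_winning_bid}).

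For the $\Lambda^-(b)$ inequality: since $b=\max S_i(v_i)$ and $S_i(v_i)$ is a closed interval by Theorem~\ref{thmbidcont} which cannot be a single point (the value $v_i$ has positive probability and there is no mass point above $\underline b$), buyer $i$ bids with value exactly $v_i$ throughout some interval $(b-\epsilon,b)$. Because there are only finitely many intervals $S_\ell(v_\ell^k)$, shrinking $\epsilon$ makes both $\Lambda(x)$ and every $v_j(x)$ constant on $(b-\epsilon,b)$, with $\Lambda(x)=\Lambda^-(b)$; Theorem~\ref{thmeqa} then yields $f_i(x)/F_i(x)=h_i(x)$ there. Since $f_i(x)\ge 0$ and $F_i(x)>0$ (buyer $i$ bids in the interior of $S_i(v_i)$ with positive density), we get $h_i(x)\ge 0$, which is exactly $\frac{1}{|\Lambda^-(b)|-1}\sum_{j\in\Lambda^-(b)}\frac{1}{v_j-x}\ge\frac{1}{v_i-x}$.

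For the $\Lambda^+(b)$ inequality, let $u(x)=(v_i-x)\prod_{j\neq i}F_j(x)$ be the payoff of type $v_i$ of buyer $i$ from the bid $x$ (for $x>\underline b$ there are no mass points, so Assumption~\ref{assumption} is irrelevant). Optimality of $b$ for type $v_i$ gives $u(x)\le u(b)$ for all $x$, and $u(b)>0$ since bidding $\underline b+\eta$ already yields positive payoff; hence $\prod_{j\neq i}F_j>0$ near $b$, so $u$ is continuous at $b$ and, by Theorem~\ref{thmeqa}, differentiable and positive on a right-neighborhood $(b,b+\epsilon)$ on which $\Lambda=\Lambda^+(b)$ with constant values. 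Put $H(x):=\frac{1}{|\Lambda^+(b)|-1}\sum_{j\in\Lambda^+(b)}\frac{1}{v_j-x}$. In the generic case $i\notin\Lambda^+(b)$, summing the identity of Theorem~\ref{thmeqa} over $j\in\Lambda^+(b)$ gives $\sum_{j\in\Lambda^+(b)}f_j(x)/F_j(x)=H(x)$, so $\frac{d}{dx}\log u(x)=H(x)-\frac{1}{v_i-x}$ on $(b,b+\epsilon)$. This is a rational function of $x$; either it is identically zero, in which case $H(x)=\frac{1}{v_i-x}$ and we are done, or it has finitely many zeros and hence a constant strict sign on some $(b,b+\epsilon_0)$. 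In the latter case a positive sign would make $u$ strictly increasing there and exceed $u(b)$, contradicting optimality of $b$; so the sign is negative and $H(x)<\frac{1}{v_i-x}$, as claimed.

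The main obstacle is the remaining case $i\in\Lambda^+(b)$: buyer $i$ leaves the bidding set with value $v_i$ and immediately re-enters with his next value $v_i'>v_i$ at the same bid $b$, exactly the situation flagged in the Remark above. Here the same computation, combined with the indifference condition of type $v_i'$ (which forces $\sum_{j\in\Lambda^+(b)\setminus\{i\}}f_j/F_j=\frac{1}{v_i'-x}$, whence $H(x)=f_i(x)/F_i(x)+\frac{1}{v_i'-x}$), only gives $\frac{d}{dx}\log u(x)=\frac{1}{v_i'-x}-\frac{1}{v_i-x}<0$; this is consistent with $b$ being optimal for type $v_i$ but does \emph{not} by itself compare $H(x)$ with $\frac{1}{v_i-x}$, because the remaining degree of freedom $f_i(x)/F_i(x)\ge 0$ is only bounded from below by the local conditions. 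To close this case I would leave the local picture: buyer $i$'s value $v_i'$ entered the bidding set at the higher bid $\max S_i(v_i')\ge b$, where the $\Lambda^+$-inequality for $v_i'$ is available (by the generic case, or inductively downward from the largest winning bid), and one has to propagate the resulting upper bound on $H$ --- equivalently on $\prod_{j\neq i}F_j$ --- through the intermediate change points down to $b$ and turn it into $H(x)\le\frac{1}{v_i-x}$ there. Tracking how $H$ moves relative to $\frac{1}{v_i-x}$ across those change points is the real work, and it is precisely the kind of accounting the later characterization results (Theorems~\ref{thmenter} and \ref{thmrule1leave}) are set up to do.
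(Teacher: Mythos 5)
Your handling of the two cases you do prove is exactly the paper's (one-line) argument, just spelled out: the lower-neighborhood inequality is Theorem~\ref{thmeqa} plus $f_i\ge 0$, and the upper-neighborhood inequality comes from type $v_i$ having no profitable upward deviation, using $\sum_{j\in\Lambda^+(b)}f_j(x)/F_j(x)=\frac{1}{|\Lambda^+(b)|-1}\sum_{j\in\Lambda^+(b)}\frac{1}{v_j-x}$ when $i\notin\Lambda^+(b)$.

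The case you leave open, $i\in\Lambda^+(b)$ (buyer $i$ leaves with $v_i'$ and immediately re-enters with $v_i<v_i'$), should not be attacked by the propagation plan you sketch, because the first inequality is simply false there as literally stated; the paper's own Example~\ref{classic} is a counterexample. At $b=6$, buyer 2 enters with value $13$ while $\Lambda^+(6)=\{1,2,4\}$ with values $20,14,12$, and as $x\downarrow 6$,
\begin{gather*}
\frac{1}{2}\left(\frac{1}{20-x}+\frac{1}{14-x}+\frac{1}{12-x}\right)\;\longrightarrow\;\frac{61}{336}\;>\;\frac{48}{336}=\frac{1}{13-6},
\end{gather*}
even though this is the (unique) BNE. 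Your own computation explains why: in this case optimality of $b$ for type $v_i$ only yields $\sum_{j\neq i}f_j/F_j=\frac{1}{v_i'-x}\le\frac{1}{v_i-x}$, which holds automatically and places no upper bound on the full average over $\Lambda^+(b)$; indeed in the example $\frac{1}{14-x}\le\frac{1}{13-x}$ while the averaged sum exceeds $\frac{1}{13-x}$. So the residual case is a defect of the lemma's statement (silently skipped by the paper's proof, which tacitly assumes the entering buyer is not already bidding above $b$), not a missing step for you to supply. The statement that is both true in all cases and actually used later (in the proof of Theorem~\ref{thmenter} and in Lemma~\ref{lemmavstarchange}, where the entering buyer comes from the waiting list) is either the lemma restricted to $i\notin\Lambda^+(b)$, or equivalently the deviation inequality $\sum_{j\in\Lambda^+(b)\setminus\{i\}}f_j(x)/F_j(x)\le\frac{1}{v_i-x}$, which your argument already establishes in full generality.
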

	\begin{proof}
    The first inequality is because buyer $i$ has no incentive to bid $b+\epsilon$ instead of $b$.
	The second inequality is because of Theorem \ref{thmeqa}. 
	\end{proof}
	To determine the exact entering point, we introduce $\phi^*(x)$.
	\begin{definition}
		Given the bid $x$, define virtual value $\phi^*(x)$ that satisfies the function:
		$$\frac{1}{|\Lambda(x)|-1}\sum_{i\in \Lambda(x)}\frac{1}{v_i-x}=\frac{1}{\phi^*(x)-x}$$
	\end{definition}
	The definition of $\phi^*(x)$ is based on $\Lambda(x)$. When $\Lambda(x)$ changes, $\phi^*(x)$ also changes as a consequence.
	
	\begin{theorem} \label{strict_decrease}
		$\phi^*(x)$ strictly decreases with respect to $x$.
	\end{theorem}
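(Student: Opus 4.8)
\textbf{Proof proposal for Theorem \ref{strict_decrease}.}

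The plan is to show that $\phi^*$ is strictly decreasing by analyzing it within each maximal bid interval where $\Lambda(x)$ is constant, and then checking the behavior at the change points where $\Lambda$ jumps. On an interval $(b_1,b_2)$ where $\Lambda(x)=\Lambda$ and each $v_i(x)=v_i$ is constant, the defining equation reads
$$\frac{1}{\phi^*(x)-x}=\frac{1}{|\Lambda|-1}\sum_{i\in\Lambda}\frac{1}{v_i-x}.$$
First I would note that the right-hand side is strictly increasing in $x$ (each summand $\tfrac{1}{v_i-x}$ is strictly increasing on $x<v_i$, and every $v_i>x$ on the relevant range since no buyer overbids). Hence $\tfrac{1}{\phi^*(x)-x}$ is strictly increasing, and since it is positive, $\phi^*(x)-x$ is strictly decreasing and positive. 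The key observation is then that $\phi^*(x)-x$ strictly decreasing does \emph{not} immediately give $\phi^*(x)$ strictly decreasing — I would need $\phi^*(x)-x$ to decrease fast enough to overcome the $+x$. To get this I would use Theorem \ref{thmeqa}: $f_i/F_i=h_i$, and summing $h_i$ over $\Lambda$ gives $\sum_i h_i(x)=\tfrac{1}{|\Lambda|-1}\sum_i\tfrac{1}{v_i-x}$ as well; more useful is to differentiate the defining identity directly. Writing $g(x)=\tfrac{1}{|\Lambda|-1}\sum_{i\in\Lambda}\tfrac1{v_i-x}$, so $\phi^*(x)=x+1/g(x)$, we get $\frac{d}{dx}\phi^*(x)=1-\frac{g'(x)}{g(x)^2}$, and $g'(x)=\tfrac{1}{|\Lambda|-1}\sum_i\tfrac1{(v_i-x)^2}$. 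So I must show $g'(x)>g(x)^2$, i.e.
$$\frac{1}{|\Lambda|-1}\sum_{i\in\Lambda}\frac{1}{(v_i-x)^2}>\left(\frac{1}{|\Lambda|-1}\sum_{i\in\Lambda}\frac{1}{v_i-x}\right)^2.$$
Setting $a_i=\tfrac1{v_i-x}>0$ and $m=|\Lambda|$, this is $\tfrac1{m-1}\sum a_i^2>\tfrac1{(m-1)^2}(\sum a_i)^2$, i.e. $(m-1)\sum a_i^2>(\sum a_i)^2$, which follows from Cauchy–Schwarz: $(\sum a_i)^2\le m\sum a_i^2$ is the easy direction, so I actually need the sharper inequality $(\sum a_i)^2 \le (m-1)\sum a_i^2$. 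This is false in general for $m$ positive reals (equality cases aside), so this naive route does not close — which signals that the statement must rely on something more than one fixed interval.

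The resolution, which I expect to be the crux, is that $m=|\Lambda(x)|\ge 2$ always (Lemma \ref{lemma4}: every winning bid is shared by at least two buyers), but more importantly we should not fix $\Lambda$: as $x$ decreases, buyers enter and leave, and the monotonicity must be argued across the whole bid range. I would instead prove the result by combining two facts: (i) within any constant-$\Lambda$ interval, $\phi^*$ is continuous and I would show it is strictly decreasing there by a direct argument tailored to the structure — in fact the cleaner statement is that $1/(\phi^*(x)-x)$ strictly increases, and I would show $\phi^*$ decreases by using that new buyers who enter always have larger values (they enter with their largest unconsumed value, Theorem \ref{thmenter}), so the harmonic-type average is pulled in a controlled direction; and (ii) at a change point $b$, Lemma \ref{lemmachange} gives exactly the inequality needed to glue: if buyer $i$ enters at $b$ with value $v_i$, then on $(b,b+\epsilon)$ the $\Lambda^+$-average of $\tfrac1{v_j-x}$ is $\le\tfrac1{v_i-x}$ while on $(b-\epsilon,b)$ the $\Lambda^-$-average is $\ge\tfrac1{v_i-x}$; translating through the definition of $\phi^*$, this says $\phi^*$ does not jump up as we pass through $b$ going downward (equivalently $\phi^*(b^+)\le v_i\le$ the value controlling $\phi^*(b^-)$), so no upward jump is introduced at change points.

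So the steps, in order, would be: (1) fix a constant-$\Lambda$, constant-values interval and show $\tfrac{1}{\phi^*(x)-x}$ is strictly increasing hence $\phi^*(x)-x$ strictly decreasing there; (2) upgrade to $\phi^*$ strictly decreasing on that interval — I would do this not by the failed Cauchy–Schwarz attempt but by invoking the upper-bound constraint that $\phi^*(x)<\min_{i\in\Lambda}v_i$ together with the convexity of $t\mapsto 1/t$, or alternatively by observing that Theorem \ref{thmeqa} forces the $v_i$ to be such that the relevant inequality $g'>g^2$ does hold in equilibrium (this reduction to an equilibrium-specific inequality is what I'd need to pin down carefully); (3) handle change points using Lemma \ref{lemmachange} to show $\phi^*$ has no upward jumps when buyers enter, and a symmetric argument (or Theorem \ref{thmrule1leave}, to be stated) when buyers leave; (4) conclude that $\phi^*$ is strictly decreasing on its whole domain. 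The main obstacle, as flagged, is step (2): the naive per-interval computation reduces to an inequality that is \emph{not} an unconditional identity among positive reals, so the proof must exploit an equilibrium constraint — most likely that in equilibrium the active values $v_i(x)$ and the bid $x$ satisfy a spacing condition (coming from monotonicity, Lemma \ref{1pmonotone}, and the no-overbidding/best-response conditions) that forces $g'(x)>g(x)^2$. Identifying and justifying that constraint cleanly is where the real work lies; everything else is bookkeeping with the definitions and Lemma \ref{lemmachange}.
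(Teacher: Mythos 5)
There is a genuine gap, and it starts with the direction of monotonicity. As the paper's own Lemma~\ref{virtualvalue} states and as the proof of Theorem~\ref{thmenter} uses ($\phi^*(y)<\phi^*(x^-)$ for $y<x^-$), the claim is that $\phi^*$ strictly decreases \emph{as $x$ decreases}, i.e.\ $(\phi^*)'(x)>0$ on constant-$\Lambda$ intervals. You read the statement literally and set out to prove $(\phi^*)'(x)<0$, which led you to the target inequality $g'(x)>g(x)^2$, i.e.\ $(|\Lambda|-1)\sum_i (v_i-x)^{-2}>\bigl(\sum_i (v_i-x)^{-1}\bigr)^2$. That inequality is not only unprovable from positivity alone (as you correctly noticed); in equilibrium the \emph{reverse} strict inequality holds, so no "spacing condition" can rescue it. The ingredient you were hunting for in step (2) is exactly Theorem~\ref{thmeqa}: since $f_j(x)=h_j(x)F_j(x)$ and densities are nonnegative, every active buyer satisfies $h_j(x)\ge 0$, equivalently $(|\Lambda|-1)\frac{1}{v_j-x}\le\sum_{i\in\Lambda}\frac{1}{v_i-x}$ (i.e.\ $\phi^*(x)\le v_j$). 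Multiplying this by $\frac{1}{v_j-x}$ and summing over $j\in\Lambda$ gives $(|\Lambda|-1)\sum_j (v_j-x)^{-2}\le\bigl(\sum_i (v_i-x)^{-1}\bigr)^2$, and equality for all $j$ is impossible, so the inequality is strict; this yields $g'(x)<g(x)^2$ and hence $(\phi^*)'(x)>0$. So the crux you explicitly left open ("where the real work lies") is resolved by a one-line use of Theorem~\ref{thmeqa}, but it resolves it with the opposite sign to the one your computation requires.

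Your step (3) — gluing at change points via Lemma~\ref{lemmachange}, showing no upward jump of $\phi^*$ as the bid moves downward — does match the paper's Lemma~\ref{lemmavstarchange}, but note that this conclusion supports the correct direction (weak decrease of $\phi^*$ as $x$ decreases) and is therefore inconsistent with the within-interval direction you were trying to prove in step (2). In short: the architecture (interval-by-interval analysis plus change points) is the same as the paper's, and you correctly identified that an equilibrium constraint is needed, but the proposal neither completes that key step nor has the monotonicity pointing the right way, so as written it does not prove the theorem.
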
	
	
The following theorem determines when should a buyer enter the bidding set.
\begin{theorem}
    \label{thmenter}
    Suppose buyer $i$ has the largest unconsumed value $v_i$ in the waiting list, he will enter the bidding set immediately when either one of the following two conditions is satisfied.
    \begin{itemize}
        \item $|\Lambda|\leq 1$ and $v_i>x$;
        \item $\frac{1}{v_i-x}\leq \frac{1}{|\Lambda|-1}\sum_{i\in \Lambda}\frac{1}{v_j(x)-x}$ (or equivalently $h_i(x)\ge 0$) and $v_i>x$.
    \end{itemize}
\end{theorem}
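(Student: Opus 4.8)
The plan is to show two things: (i) under either condition, it is a best response for buyer $i$ to place a bid at or arbitrarily close to $x$ with his largest unconsumed value $v_i$, so he must be in the bidding set at $x$ (using Lemma \ref{lemmachange} as the necessary condition and arguing it cannot fail to be met); and (ii) buyer $i$ cannot have already entered the bidding set at any bid strictly above $x$, because $v_i$ is the \emph{largest unconsumed} value — if he had entered earlier with $v_i$, that probability mass would partly be consumed above $x$, contradicting the hypothesis that we are scanning downward and $v_i$ is still unconsumed at $x$. So "enters immediately" means the entering point is exactly $x$, not above it.

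First I would handle the $|\Lambda|\le 1$ case. If $|\Lambda(x)|=0$ just below the current scan position, then the only candidate winning bids are below, and for a winning bid to exist at all there must be at least two buyers in the bidding set (Lemma \ref{lemma4}); so some buyer must enter, and among the waiting list the one with the largest unconsumed value is the natural candidate — I would show that if a buyer with a \emph{smaller} unconsumed value entered first, monotonicity of bidding strategies in value (Lemma \ref{1pmonotone}) would be violated, since the buyer with value $v_i$ would then have to bid in a lower range than a buyer with a smaller value. If $|\Lambda(x)|=1$, the single buyer currently bidding needs a competitor for his bids just below $x$ to be rationalizable (Lemma \ref{lemma4} again), forcing entry; and $v_i>x$ is exactly the no-overbidding feasibility condition that lets buyer $i$ be that competitor with positive utility.

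Second, for the general condition $h_i(x)\ge 0$, i.e. $\frac{1}{v_i-x}\le\frac{1}{|\Lambda|-1}\sum_{j\in\Lambda}\frac{1}{v_j(x)-x}$: I would argue by contradiction that if buyer $i$ stays in the waiting list on an interval $(x-\delta,x)$ just below $x$, then on that interval the bidding set is $\Lambda$ (the current one, minus whoever has already left), and by Theorem \ref{strict_decrease} the quantity $\phi^*(\cdot)$ — whose reciprocal gap equals the averaged sum — is strictly decreasing, so the inequality $\frac{1}{v_i-y}\le\frac{1}{|\Lambda(y)|-1}\sum_{j}\frac{1}{v_j(y)-y}$ persists (and strictly, by Corollary \ref{strict} / the strict-decrease theorem) for all $y$ slightly below $x$. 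But the second inequality of Lemma \ref{lemmachange}, evaluated at the point where buyer $i$ \emph{does} eventually enter (which must happen before $\underline b$, since all values get consumed), requires the reversed inequality $\frac{1}{|\Lambda^-|-1}\sum \frac{1}{v_j-y}\ge\frac{1}{v_i-y}$ in a left neighborhood of that entry point; tracking $\phi^*$ monotonically downward, the averaged sum only decreases relative to $\frac{1}{v_i-\cdot}$ as the bid decreases (this needs care because $\Lambda$ itself changes — each departure from $\Lambda$ only decreases the sum further, and each arrival is of a buyer with value below $v_i$ hence also cannot reverse the inequality), so once $h_i(x)\ge0$ holds it can never flip back, meaning buyer $i$ can never satisfy the entry condition of Lemma \ref{lemmachange} at a strictly lower bid. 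Hence he must enter exactly at $x$.

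The main obstacle I anticipate is the bookkeeping in the second part: $\Lambda$ is not fixed below $x$, so "the inequality persists" is not immediate from Theorem \ref{strict_decrease} alone — I need to verify that every admissible change to $\Lambda$ below $x$ (a buyer with a smaller value entering, or any buyer leaving) preserves or strengthens $h_i\ge0$ for buyer $i$, and that $\phi^*$'s strict monotonicity is robust across these change points (which is presumably the content of the stated-but-not-yet-proved Theorem \ref{strict_decrease} and the reason it is placed right before this theorem). I would also need to rule out the degenerate possibility that buyer $i$'s entry is "blocked" forever — but that is impossible since the algorithm terminates only when every value is consumed, so $v_i$ must be consumed somewhere, and it can only be consumed on an interval where $i\in\Lambda$, forcing an entry point; combined with the monotone-$\phi^*$ argument, that entry point is pinned to $x$.
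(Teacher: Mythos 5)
Your proposal is essentially the paper's own argument: the paper proves Theorem \ref{thmenter} by exactly the contradiction you sketch --- if buyer $i$ satisfies the condition at $x$ (so $\phi^*(x)\le v_i$) but enters only at a strictly lower point $y$, then the strict monotonicity of $\phi^*$ across the bid space (Theorem \ref{strict_decrease}, whose Lemma \ref{lemmavstarchange} is precisely what handles the changes of $\Lambda$ you flag as the main obstacle) gives $\phi^*(y)<v_i$, contradicting the necessary condition of Lemma \ref{lemmachange} at the true entry point. Two small notes: the binding requirement at the hypothetical lower entry point is the first (upper-neighborhood) inequality of Lemma \ref{lemmachange}, not the second, and as the bid decreases the averaged sum $\tfrac{1}{|\Lambda|-1}\sum_j \tfrac{1}{v_j-\,\cdot}$ grows (does not shrink) relative to $\tfrac{1}{v_i-\,\cdot}$ --- but the conclusion you draw from it ($h_i\ge 0$ never flips back, so entry is pinned at $x$) is exactly the intended one.
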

\begin{example}
    Consider bid 6 in Example \ref{classic} and Figure \ref{fig001}.
    Bidding set $\Lambda^+(6)$ is $\{1,2,4\}$, with corresponding values 20, 14, and 12. 
    all buyers in $\Lambda^+(6)$ have consumed the probability of their current value at bid 6, and
    they all leave the bidding set. Thus the bidding set becomes empty and the waiting list contains all buyers. Buyer 2 has the largest unconsumed value 13 in the waiting list.
    Since there is no buyer in the bidding set, according to Theorem \ref{thmenter},
    buyer 2 enters the bidding set.
    
    Next, buyer 1 has the largest unconsumed value in the waiting list.
    Since there is only one buyer (buyer 2) in the bidding set, according to Theorem \ref{thmenter},
    he also enters the bidding set.
    
    Then, buyer 3 and 4 are in the waiting list with values $9$ and $1$.
    Now the first condition in Theorem \ref{thmenter} cannot be satisfied since the current bidding set already contains two buyers.
    For buyer 3, we have $1/(9-6)\leq 1/(10-6)+1(13-6)$, satisfying the second condition.
    Therefore, buyer 3 also enters the bidding set.
    However, buyer 4 has value 1 which is smaller than the current bid, so buyer 4 is not eligible to enter the bidding set.
    After the update, we have $\Lambda^-(6)=\{1,2,3\}$.
\end{example}
	
    \subsection{When to exit the bidding set}
	The probability of a buyer's value being $v_i$ should equal the probability that he bids in the interval $S_i(v_i)$. By Theorem \ref{thmbidcont}, the bid set $S_i(v_i)$ of a specific value $v_i$ is a connected interval. Therefore, once we know the maximum bid in $S_i(v_i)$, we can compute his bid distribution all the way down (according to Theorem \ref{thmeqa}) until the bid distribution consumes all the corresponding value probability. 
	\begin{theorem}
	\label{thmrule1leave}
	Buyer $i$ with value $v_i^k$ leaves the bidding set at $x$ when the cumulative probability of bidding set equals to the probability of the value, i.e.,
	$F_i\left(\max S_i(v_i^k)\right) -F_i(x)=G_i(v_i^k)-G_i(v_i^{k-1})$.
	\end{theorem}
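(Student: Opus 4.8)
The plan is to read the claimed identity as a conservation-of-probability statement for buyer $i$'s bid distribution: the mass that $F_i$ places on the interval $S_i(v_i^k)$ must be exactly the prior probability of the event $\{v_i=v_i^k\}$, namely $G_i(v_i^k)-G_i(v_i^{k-1})$. So the real work is to show that the bids of buyer $i$ lying in $S_i(v_i^k)$ correspond precisely to the realization $v_i=v_i^k$, and that no probability leaks in or out at the endpoints.

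First I would use monotonicity of the bidding strategy (Lemma \ref{1pmonotone}) together with the fact that each $S_i(v_i^j)$ is an interval (Theorem \ref{thmbidcont}) to conclude that the intervals $S_i(v_i^1),S_i(v_i^2),\dots$ are ordered with $\max S_i(v_i^j)\le\min S_i(v_i^{j+1})$, so that consecutive value-intervals overlap at most at a single endpoint. In particular, whenever buyer $i$'s realized bid lies strictly below $\min S_i(v_i^k)$ his value is at most $v_i^{k-1}$, and whenever it lies strictly above $\max S_i(v_i^k)$ his value is at least $v_i^{k+1}$. Next I would invoke Lemma \ref{nomass}: since $v_i^k\ge\underline b$ forces the endpoints $\min S_i(v_i^k)$ and $\max S_i(v_i^k)$ to be at least $\underline b$, the CDF $F_i$ is continuous at these points, so the (possibly shared) endpoints carry zero mass and the ordering above becomes an honest decomposition of the probability space. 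Hence $F_i\!\left(\min S_i(v_i^k)\right)=\Pr[v_i\le v_i^{k-1}]=G_i(v_i^{k-1})$ and $F_i\!\left(\max S_i(v_i^k)\right)=\Pr[v_i\le v_i^k]=G_i(v_i^k)$; subtracting and setting $x=\min S_i(v_i^k)$ — the point at which buyer $i$ leaves the bidding set with value $v_i^k$ — gives exactly the stated equation.

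To justify that this equation actually pins the leaving point down, which is how Algorithm \ref{alg} uses it, I would note that $F_i$ is strictly increasing on $S_i(v_i^k)$: by Theorem \ref{thmeqa}, on any subinterval where $i$ is in the bidding set with constant value $v_i^k$ we have $f_i=h_iF_i$ with $F_i>0$, and $h_i$ cannot vanish identically on a subinterval without that subinterval dropping out of the support; therefore, as the bid decreases from $\max S_i(v_i^k)$, the cumulative decrease of $F_i$ is strictly increasing, so $x=\min S_i(v_i^k)$ is the unique bid at which this decrease first reaches $G_i(v_i^k)-G_i(v_i^{k-1})$.

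The main obstacle I anticipate is bookkeeping at the interval endpoints and, especially, at the smallest winning bid $\underline b$. Two consecutive value-intervals may share an endpoint, which the zero-mass argument above handles cleanly; but at $\underline b$ itself Lemma \ref{nomass} no longer rules out an atom (this is precisely where Assumption \ref{assumption} is invoked), so when $v_i^k$ is the value for which $\min S_i(v_i^k)=\underline b$ one must be careful about whether $F_i$ at $\underline b$ is taken as the left or the right limit. I would treat this last leaving point as a boundary case, reading $F_i$ at $\underline b$ as the left limit and observing that the resulting identity reduces to the consistency condition already guaranteed by Lemma \ref{lem:min_winning_bid} and the construction of the strategies at and below $\underline b$.
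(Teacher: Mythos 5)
Your proposal is correct and follows essentially the same route as the paper, which treats the theorem as an immediate consequence of conservation of probability: by the monotonicity of Lemma \ref{1pmonotone} and the interval structure of Theorem \ref{thmbidcont}, the mass that $F_i$ places on $S_i(v_i^k)$ must equal $G_i(v_i^k)-G_i(v_i^{k-1})$, so the buyer leaves exactly when this mass is consumed. Your additional bookkeeping --- zero mass at shared endpoints via Lemma \ref{nomass}, strict monotonicity of $F_i$ on the interval to pin down the leaving point, and the possible atom at $\underline{b}$ --- only makes explicit what the paper leaves implicit.
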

\begin{example}
		Consider Example \ref{classic} and Figure \ref{fig001}, $S_1(20)$ begins at bid 9 and consumes the probability of value 20 at bid 6. The probability that buyer 1 bids in $S_1(20)$ is 
	\begin{gather*}
	F_1(9)-F_1(6)=\frac{11}{20-9}-\frac{77}{48}\sqrt{\frac{10-6}{(9-6)(13-6)}}=1-\frac{11}{24}\sqrt{\frac{7}{3}},
	\end{gather*}
	which equals the probability of value 20.
\end{example}

\subsection{Monotonicity of entering and exiting points}

Now we present some monotonicity results in the discrete setting. These results are similar to the continuous case, but with different proofs.

For convenience, we define
\begin{gather*}
p_i^j=\ln G_i(v_i^{j})-\ln G_i(v_i^{j-1}), \forall i=1,\dots,n, j=2,\dots,i_k.
\end{gather*}
So $\{p_i^j\}_{i=1,\dots,n, j=2,\dots,i_k}$  uniquely determines the value distribution $G$. When there is no ambiguity, we use $\{p_i^j\}$ for simplicity.	We use $\mathcal{E}(\bar{b}, \{p_i^j\})$ to denote the set of bidding intervals given by Algorithm \ref{alg} with a guessed largest bid $\bar{b}$ and distribution $G$.

\begin{theorem}
    \label{thm:mono}
    The extreme points of every bid interval in $\mathcal{E}(\bar{b}, \{p_i^j\})$ is monotone in $\bar{b}$.
\end{theorem}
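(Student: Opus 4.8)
The plan is to recast the claim and then prove it by an induction that sweeps Algorithm~\ref{alg} from the top of the bid space downward. Observe that the extreme points of the bid intervals $S_i(v_i^j)$ are precisely the change points of the bidding set, so the theorem is equivalent to the statement that the position of the $k$-th change point (counting from $\bar b$ downward) is a monotone function of $\bar b$, on the range of $\bar b$ for which that change point exists. Fix two guesses $\bar b_1<\bar b_2$ and run Algorithm~\ref{alg} on both. I would carry an induction on $k$ with hypothesis: (i) the first $k$ change points have the same type (enter/leave), the same buyer, and the same value in the two runs, and the bidding set $\Lambda$ together with the associated values $v_j$ agree between consecutive change points; and (ii) the $k$-th change point sits weakly higher for $\bar b_2$ than for $\bar b_1$. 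The base case is immediate: the $0$-th change point is $\bar b$ itself, so it moves with $\bar b$, and the buyer(s) of largest value enter there with $F_i$ equal to $G_i(v_i^{d_i})=1$, independent of $\bar b$.

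For the inductive step I would use two $\bar b$-independent "anchors": by Lemmas~\ref{1pmonotone} and~\ref{nomass} together with no overbidding, $F_i$ at the top of $S_i(v_i^j)$ equals $G_i(v_i^j)$ and at the bottom equals $G_i(v_i^{j-1})$. Suppose the next change point is an \emph{entering} point of buyer $i$. If it occurs through the first case of Theorem~\ref{thmenter} ($|\Lambda|\le 1$), then $i$ enters immediately at the previous change point, and monotonicity is inherited from hypothesis~(ii). If it occurs through the second case, it is the unique crossing $\{x:\phi^*(x)=v_i\}$, and since by hypothesis~(i) $\phi^*$ is the same strictly decreasing function (Theorem~\ref{strict_decrease}) for the two runs on that sub-interval, this position is actually identical for the two runs, hence trivially monotone. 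Suppose instead the next change point is a \emph{leaving} point of buyer $i$ with value $v_i^j$. Writing $r$ for the already-controlled top of $S_i(v_i^j)$, Theorems~\ref{thmrule1leave} and~\ref{thmeqa} identify the leaving point $\ell$ as the unique solution of
\begin{gather*}
\int_{\ell}^{r} h_i(x)\,dx = p_i^j,
\end{gather*}
where the integrand is nonnegative ($f_i=h_iF_i$ with $F_i>0$ on $S_i$) and, on each sub-interval between consecutive change points inside $[\ell,r]$, is the same explicit function for the two runs by hypothesis~(i); equivalently $\ell=F_i^{-1}\!\big(G_i(v_i^{j-1})\big)$ with $F_i$ increasing. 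Because $r$ and all of these intermediate change points sit weakly higher for $\bar b_2$, a monotone-comparison over the break points forces $\ell$ to sit weakly higher for $\bar b_2$ too, closing the induction as long as the combinatorial structure of the two runs stays aligned.

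To remove the alignment assumption I would move $\bar b$ from $\bar b_1$ to $\bar b_2$ continuously: on any subinterval over which the sequence of change-point types is constant the argument above applies, and at a value $\bar b^\star$ where the structure changes this happens only because some bid interval degenerates to a point (two consecutive change points collide) or a new one is created, so by the continuity of the algorithm's output (Theorem~\ref{continuity}) the change points from the two sides coincide at $\bar b^\star$ and the monotone pieces glue into one globally monotone function. I expect the main obstacle to be precisely this gluing and the bookkeeping it rests on: controlling how the combinatorial structure evolves with $\bar b$ and verifying that it only ever changes through such degenerate collisions, never through a genuine reordering of change points that could reverse monotonicity. A secondary technical point is making the monotone-comparison in the leaving case rigorous --- that raising the top of an interval and the intermediate break points, while the total $h_i$-mass over the interval stays pinned at $p_i^j$, necessarily raises the bottom endpoint.
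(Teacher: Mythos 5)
Your induction rests on hypothesis (i): that the two runs with guesses $\bar b_1<\bar b_2$ stay combinatorially aligned (same sequence of change-point types, buyers, values, and the same bidding sets on corresponding segments). This is precisely what fails in general and is essentially the hard content of the theorem: because the change points of the $\bar b_2$-run sit (weakly) higher, at a given bid the two runs can be in different combinatorial phases, so in one run a buyer's probability can be consumed (a ``leave'') before the crossing $\phi^*(x)=v_i$ that triggers an ``enter'' in the other run, and the sequences of events genuinely reorder. Your own inner steps already show the strain: the claim that a case-2 entering point is ``identical for the two runs'' is wrong in general, since whether entry happens at the crossing of $\phi^*$ with $v_i$ or instead at the preceding change point (where $h_i\ge 0$ may already hold because of a leave) depends on where that segment sits, which differs between the runs. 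Your proposed repair --- vary $\bar b$ continuously, argue the structure only changes through degenerate collisions, and glue via Theorem~\ref{continuity} --- is exactly the missing argument, and you acknowledge it is not carried out; worse, it is circular as stated, because the paper's proof of Theorem~\ref{continuity} itself uses the direction of movement of the intervals, i.e., the monotonicity you are trying to prove. The ``secondary'' point you flag in the leaving case is also not automatic: one needs the fact that when a buyer with $h_j\ge 0$ leaves the bidding set the quantity $\frac{1}{|\Lambda|-1}\sum_{j\in\Lambda}\frac{1}{v_j-x}$ weakly increases, which is what propagates the upward shift through misaligned segments.

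The paper avoids comparing two misaligned runs altogether: it deforms in the space of probability parameters $\{p_i^j\}$ rather than in $\bar b$. It first inflates the top-value probabilities of the buyers bidding at $\bar b^1$ so that the run from $\bar b^2$ with the modified parameters reproduces the $\bar b^1$-structure exactly below $\bar b^1$; it then restores the original parameters one bid interval at a time, ordered by lower endpoints, and shows each restoration only pushes endpoints weakly up (using the monotone effect of a departure on $\frac{1}{|\Lambda|-1}\sum\frac{1}{v_j-x}$ and Theorems~\ref{thmenter} and~\ref{thmrule1leave}). Each step keeps the structure above the current interval fixed, so the comparison is always between aligned structures --- which is the alignment your sketch assumes but does not establish. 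To salvage your route you would need an independent proof that the event sequence changes only at isolated values of $\bar b$ through collisions of change points, together with a non-circular continuity argument at those values.
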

The proof is different from the continuous distribution. 
We prove it by analyzing the algorithm directly.

\begin{corollary}
    \label{strict}
    The position $\underline{b}(\bar{b})$ where Algorithm \ref{alg} stops is strictly monotone in $\bar{b}$.
\end{corollary}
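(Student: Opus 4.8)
The plan is to get the \emph{weak} monotonicity of $\underline b(\bar b)$ essentially for free from Theorem~\ref{thm:mono}, and to devote the work to \emph{strictness}, which I would prove by contradiction through a direct analysis of the run of Algorithm~\ref{alg}. For the first part, note that $\underline b(\bar b)$ is just one of the extreme points appearing in $\mathcal{E}(\bar b,\{p_i^j\})$ --- the left endpoint of the last interval processed, i.e.\ the bid at which $|\Lambda|$ first drops below $2$ --- so Theorem~\ref{thm:mono} already gives that $\underline b(\cdot)$ is monotone, and comparing with the update rule of Algorithm~\ref{alg:overview} fixes the direction (nondecreasing). Now assume toward a contradiction that $\underline b(\bar b)\equiv\beta$ on a nondegenerate interval. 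Reusing the case analysis behind Theorem~\ref{thm:mono} (which already tracks how the sequence of enter/leave events reacts to a perturbation of $\bar b$), the ordered list of events produced by Algorithm~\ref{alg}, each tagged with the buyer and value involved, changes only at finitely many values of $\bar b$; hence there is a subinterval $[\bar b_1,\bar b_2]$ on which this list is fixed. On that subinterval each $h_i$ of Theorem~\ref{thmeqa} is a fixed function of the bid $x$ (it depends only on the current set $\Lambda$ and the current values, now both independent of $\bar b$), and every entering bid of Theorem~\ref{thmenter} is a crossing of fixed $\phi^*$-type functions and hence $\bar b$-independent.

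The engine of the argument is a claim proved by induction along the run: $F_i(x)$ is nonincreasing in $\bar b$ for every reached bid $x$ and $i\in\Lambda(x)$, and it is \emph{strictly} decreasing once $x$ lies at or below the highest $\bar b$-dependent change point. The base case is the top interval: the at-least-two buyers (Lemma~\ref{lemma4}) who enter at $\bar b$ satisfy $F_i(x)=\exp\!\big(\int_{\bar b}^{x}h_i\big)$ there, which is strictly decreasing in $\bar b$ for each fixed $x<\bar b$ since $h_i=f_i/F_i\ge 0$ (Theorem~\ref{thmeqa}); by Theorem~\ref{thmrule1leave} the bid at which such a buyer consumes its top value then moves strictly monotonically in $\bar b$ (implicitly differentiate $F_i(\bar b)-F_i(x)=G_i(v_i^{d_i})-G_i(v_i^{d_i-1})$: the left side is strictly monotone in $x$ because densities are positive on active intervals, and strictly monotone in $\bar b$). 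For the inductive step one uses that at every change point $\rho$ every $h_i$ with $i$ remaining in $\Lambda$ across $\rho$ has a genuine jump ($|\Lambda|$ or some current value changes there); so if $\rho=\rho(\bar b)$ is strictly monotone, then by continuity of $F_i$ at $\rho$ (Lemma~\ref{nomass}) and differentiating $\ln F_i(x)=\ln F_i(\rho)+\int_{\rho}^{x}h_i$ one obtains $\partial_{\bar b}\ln F_i(x)=\big[h_i(\rho^+)-h_i(\rho^-)\big]\,\rho'(\bar b)\neq 0$, with a sign consistent with the hypothesis; combined again with Theorem~\ref{thmrule1leave} this propagates strict monotonicity to the next leaving bid, while entering bids stay $\bar b$-independent.

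To finish: Algorithm~\ref{alg} terminates exactly when $|\Lambda|$ drops below $2$, so its last event is a leaving event, say buyer $P$ consuming value $v_P^k$ at $\beta$. Theorem~\ref{thmrule1leave} forces $F_P$ to reach the level $G_P(v_P^{k-1})$ at $\beta$; since $v_P^{d_P}\ge v_P^k$ values of $P$ have been consumed strictly above $\beta$, $\beta$ is at or below the highest $\bar b$-dependent change point, so by the claim the value $F_P(\beta^-)$ genuinely (strictly) depends on $\bar b$ --- contradicting that it equals the constant $G_P(v_P^{k-1})$ while $\beta$ is held fixed. Hence $\underline b(\cdot)$ is constant on no interval, and being monotone it is strictly monotone.

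I expect the genuinely delicate point to be the inductive step of the middle paragraph --- the sign bookkeeping that shows the strict $\bar b$-dependence seeded at the top anchors never cancels as it is carried down through the (possibly many) regime changes, and in particular survives the points where a buyer leaves and immediately re-enters with a smaller, $\bar b$-independent anchor $G_i(v_i^k)$ (where the dependence must instead be inherited from the $\bar b$-dependent \emph{location} of the regime boundary). A clean way to organize this is to carry as an induction invariant that at every reached bid $x$ the state vector (``$x$-as-a-change-point'', $(F_i(x))_{i\in\Lambda(x)}$) depends on $\bar b$ with all the relevant partial derivatives nonvanishing and of one consistent sign, so that each application of Theorem~\ref{thmrule1leave} or Theorem~\ref{thmenter} preserves the invariant. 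The remaining ingredient is the combinatorial bookkeeping needed to justify the reduction to a fixed event list, which piggybacks on the proof of Theorem~\ref{thm:mono}.
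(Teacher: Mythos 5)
Your first step (reading $\underline b(\bar b)$ as the lower extreme point of the last processed interval and invoking Theorem~\ref{thm:mono} for weak monotonicity) is exactly how the paper positions this statement; the paper in fact offers nothing more than that plus an informal remark and an example for the strict part. The problem is that the strict part is precisely where your proposal has genuine gaps, and you largely defer them rather than close them. The crux of your induction is the non-cancellation claim: you write $\partial_{\bar b}\ln F_i(x)=\bigl[h_i(\rho^+)-h_i(\rho^-)\bigr]\rho'(\bar b)$, but the correct expression also carries the inherited term $\partial_{\bar b}\ln F_i(\rho)$ from above the change point, and the total is a sum of boundary contributions plus the top-anchor contribution; nothing you prove rules out cancellation among these terms. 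Moreover the jump $h_i(\rho^+)-h_i(\rho^-)$ can vanish at a set change (Lemma~\ref{lemmavstarchange} only gives that $\phi^*$ \emph{weakly} decreases across changes, so a change point need not move $h_i$ at all), so even the local mechanism you rely on is not guaranteed. You yourself flag this ``sign bookkeeping'' as the delicate point you expect to be hard --- that is the theorem, not a technicality, so as written the proof is incomplete.

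Two further specific problems. First, your assertion that entering bids are $\bar b$-independent within a fixed event list is false in general: a waiting buyer can be triggered to enter exactly \emph{at} a leaving point $\rho(\bar b)$ of another buyer, because it is the change of $\Lambda$ at $\rho$ that first makes $h_i\ge 0$; such an entering location inherits the $\bar b$-dependence of $\rho$, and your induction must handle (not assume away) this case. Second, the reduction to a locally constant ordered event list (``changes only at finitely many values of $\bar b$'') cannot simply ``piggyback'' on Theorem~\ref{thm:mono}: the paper's proof of that theorem is a transformation/coupling argument on the probability parameters $\{p_i^j\}$, not a perturbation analysis of the event sequence in $\bar b$, so it supplies no such finiteness or local-constancy statement; you would need a separate (e.g.\ piecewise-analyticity) argument for it. Until the inherited-term bookkeeping, the degenerate-jump case, and the event-list stability are actually established, the contradiction at $\beta$ (that $F_P(\beta)$ both equals the constant $G_P(v_P^{k-1})$ and strictly depends on $\bar b$) is not yet justified.
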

It is possible that some bid intervals remain at the same positions. 
But the position where the algorithm stops increases strictly. 
Consider Example \ref{classic} and Figure \ref{comparet}, the end point in $\mathcal{E}(8.5, \{p_i^j\})$ is 0.35 and 
the end point in $\mathcal{E}(9.5, \{p_i^j\})$ is 3.76.

\begin{figure}[ht]
    \centering
    \includegraphics[width=0.7\linewidth]{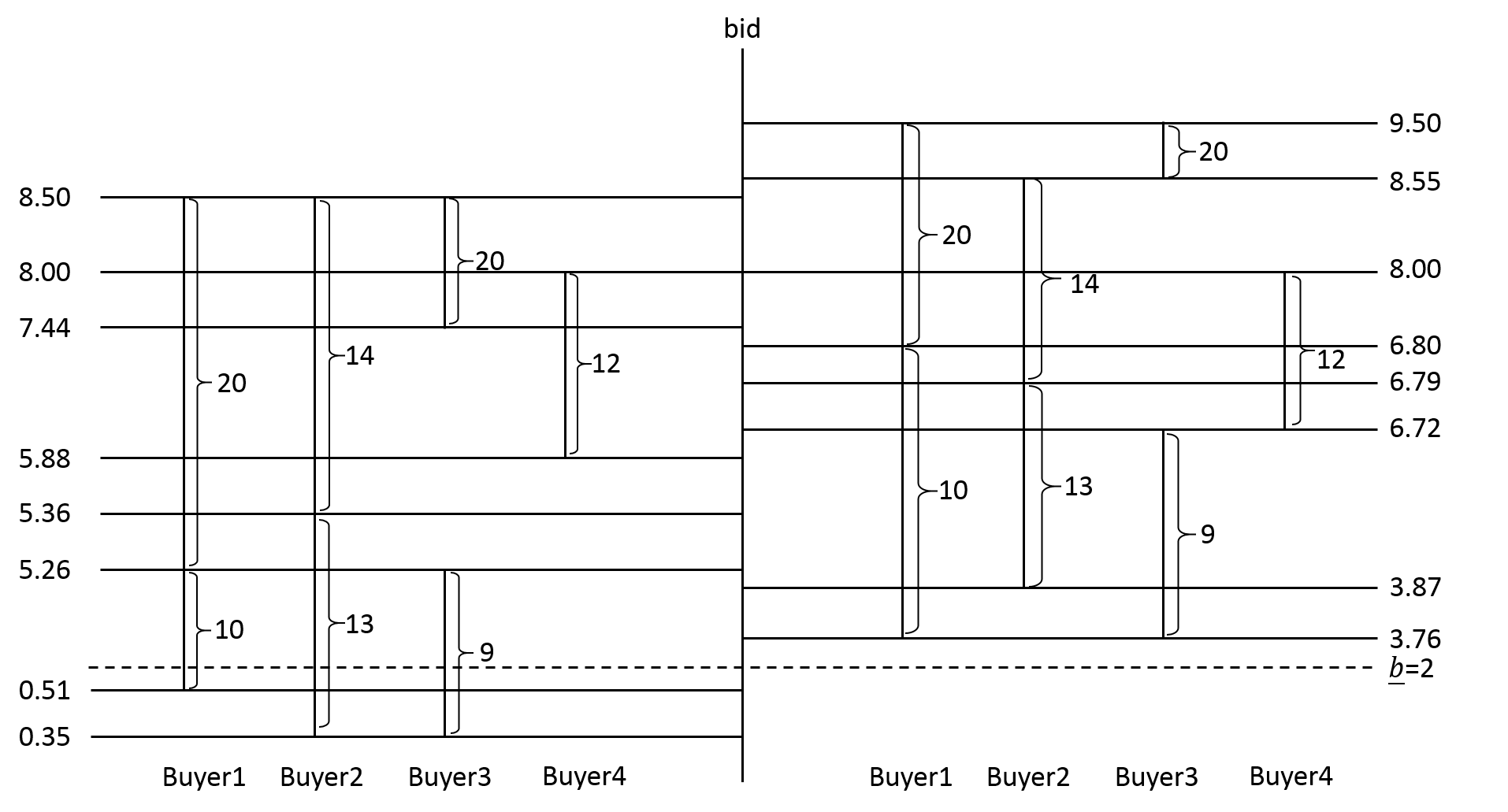}
    \caption{Monotonicity of $\underline{b}(\bar{b})$. In the left part, the guess is $\bar{b}=8.5$ and the algorithm stops at $\underline{b}(\bar{b})=0.35$. In the right part, the guess is $\bar{b}=9.5$, and the algorithm stops at $\underline{b}(\bar{b})=3.76$.}
    \label{comparet}
\end{figure}

Next we prove the continuity of the extreme points of each bid interval when the guess $\bar{b}$ changes.
\begin{theorem}
    The limit of each bid interval constructed by Algorithm \ref{alg} with the largest winning bid approaching to $\bar{b}^1$, is same as the bid interval constructed with the largest winning bid $\bar{b}^1$.
    \label{continuity}
\end{theorem}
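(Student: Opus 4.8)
The plan is to treat a run of Algorithm~\ref{alg} as a finite composition of continuous maps and to track how it depends on the guess $\bar b$ near $\bar b^1$. A run at $\bar b$ produces a finite decreasing sequence of change points $\bar b=c_0(\bar b)\ge c_1(\bar b)\ge\cdots\ge c_m(\bar b)=\underline b(\bar b)$ together with, between consecutive change points, a \emph{phase} on which the bidding set $\Lambda$ and the value labels $v_i(\cdot)$ are constant; every extreme point of every interval in $\mathcal{E}(\bar b,\{p_i^j\})$ is one of the $c_j(\bar b)$. The number of phases is bounded uniformly in $\bar b$: by Theorem~\ref{thmbidcont} each $S_i(v_i^j)$ is a single interval, so each buyer enters and exits with each of his values at most once, giving $m\le 2\sum_i d_i$. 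So it suffices to show each $c_j$ is continuous at $\bar b^1$, since the interval labelled by a given $(i,v_i^k)$ has its two endpoints among the $c_j$.

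I would prove this by induction on $j$, carrying along the state of the algorithm just below $c_j$: the bidding set with its value labels, and the current value of $F_i$ at that bid for each active buyer. The base levels here are fixed by the distribution, not free unknowns --- when buyer $i$ enters with $v_i^k$ at a bid $b$, monotonicity of bids and Lemma~\ref{nomass} give $F_i(b)=G_i(v_i^k)$ --- and inside a phase $F_i$ then evolves by the closed-form solution of $f_i/F_i=h_i$ from Theorem~\ref{thmeqa}, an elementary function jointly continuous in $x$, in the phase's top endpoint, and in $F_i$ at that endpoint. Given continuity of the state just below $c_j$, the $F_i$'s throughout phase $j+1$ are explicit continuous functions of $x$ and of that state, and the phase ends at the first $x<c_j$ at which a rule fires: an exit (Theorem~\ref{thmrule1leave}) when some $F_i(x)$ drops to the fixed level $G_i(v_i^{k-1})$, or an entry (Theorem~\ref{thmenter}) when $|\Lambda|\le 1$ or the waiting-list buyer with the largest unconsumed value $v_i$ first satisfies $h_i(x)\ge 0$, equivalently $v_i\ge\phi^*(x)$. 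In each case the triggering quantity is continuous in $(x,\text{state})$ and \emph{strictly monotone} in $x$ along the phase --- for the entry rule this is Theorem~\ref{strict_decrease}, and for the exit rule it is strict monotonicity of $F_i$ along the phase --- so the triggering bid $c_{j+1}$ is a simple root and depends continuously on the state. A single change point may fire several coincident exits and entries; processing them in the algorithm's order (exits, then entries by decreasing value) is a finite sequence of continuous updates, so the reshuffled state below $c_{j+1}$ is again continuous in $\bar b$; the initial reshuffle at $c_0=\bar b$ is handled the same way. Composing the $m\le 2\sum_i d_i$ continuous steps makes every $c_j$ continuous at $\bar b^1$, hence every interval endpoint; an interval that is degenerate at $\bar b^1$ simply has both endpoints converging to its one limit point.

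The main obstacle is the boundary combinatorics at $\bar b^1$: one of the entry/exit inequalities may hold with equality there, and two change points may coincide, so the exact sequence of events need not be the same for $\bar b$ slightly to either side of $\bar b^1$, and a naive ``locally constant combinatorial type'' claim is false. This is where the strict monotonicities are essential. Because $\phi^*(\cdot)$ is strictly monotone (Theorem~\ref{strict_decrease}) and each $F_i$ is strictly monotone along a phase, an equality at $\bar b^1$ is crossed transversally: for $\bar b$ on either side the same buyer still enters, or the same level $G_i(v_i^{k-1})$ is still reached, at a nearby bid; the tie-break for simultaneous entries (largest value first) does not depend on $\bar b$, so the relative order of coincident events is stable; and when two change points collide at $\bar b^1$ they may split for nearby $\bar b$ but the two endpoints in question still converge to the common limit, leaving the limiting interval unchanged. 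I would also invoke Theorem~\ref{thm:mono} and Corollary~\ref{strict} to exclude degeneracies such as $h_i\equiv 0$ on a sub-interval (which would make $S_i(v_i^k)$ not a genuine support) and to confirm that $c_m(\bar b)=\underline b(\bar b)$ moves strictly, so no interval disappears into $\underline b$ discontinuously. Assembling the pieces, $\bar b\mapsto\mathcal{E}(\bar b,\{p_i^j\})$ is a finite composition of maps continuous at $\bar b^1$, which is the claim.
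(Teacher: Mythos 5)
Your proposal is sound, but it reaches Theorem~\ref{continuity} by a genuinely different route than the paper. The paper's proof leans on the monotonicity of the constructed structure in $\bar{b}$ (Theorem~\ref{thm:mono}), so that the limit $\lim_{\bar{b}\to\bar{b}^1}\mathcal{E}(\bar{b},\{p_i^j\})$ exists and can only deviate from $\mathcal{E}(\bar{b}^1,\{p_i^j\})$ in one direction; it then looks at the highest bid where the two structures differ, argues (since everything above agrees, the consumed probabilities and hence the exit points agree) that the only possible difference is an interval starting later, and derives a contradiction with the entering rule of Theorem~\ref{thmenter}. You instead prove two-sided continuity of every change point directly, by induction down the finitely many phases, exploiting that the entry trigger depends only on the combinatorial state through the strictly monotone $\phi^*$ (Theorem~\ref{strict_decrease}) while the exit trigger is a simple root of the strictly monotone $F_i$ hitting the fixed level $G_i(v_i^{k-1})$ (Theorem~\ref{thmrule1leave}), and treating coincident or reordered events via transversality of these crossings. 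What each buys: the paper's contradiction-at-first-disagreement is shorter precisely because Theorem~\ref{thm:mono} does the heavy lifting and makes the comparison one-sided, whereas your argument does not need the monotone limit at all and, unlike the paper, makes the degenerate cases (ties, merging/splitting change points, event reordering near $\bar{b}^1$) explicit — at the cost of more bookkeeping, and those degenerate cases are exactly where your sketch would need to be fleshed out to be fully rigorous. One small correction: ruling out $h_i\equiv 0$ on a subinterval follows from the strict decrease of $\phi^*$ within a phase (Theorem~\ref{strict_decrease}), not from Theorem~\ref{thm:mono} or Corollary~\ref{strict}; the latter are only needed, as you say, to ensure the terminal point $\underline{b}(\bar{b})$ moves continuously and strictly.
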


\section{Existence and uniqueness of the BNE}
	\label{sec:exist}
		\subsection{Existence}
	In Algorithm \ref{alg}, if the point $\underline{b}(\bar{b})$ where the algorithm terminates does not match the actual smallest winning bid $\underline{b}$, 
	the bidding strategies we get do not form a BNE. But we show that if it does match $\underline{b}$, then the corresponding strategies do form a BNE.
%
	
	\begin{lemma}\label{lem:match}
		If $\underline{b}(\bar{b})$ matches the smallest winning bid $\underline{b}$, the bidding strategies given by Algorithm \ref{alg} is indeed a BNE.
	\end{lemma}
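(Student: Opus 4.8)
The plan is to verify the Bayesian Nash equilibrium conditions directly for the strategy profile $\{F_i\}$ produced by Algorithm \ref{alg}, under the hypothesis that its termination point $\underline{b}(\bar b)$ equals the true smallest winning bid $\underline{b}$ characterized in Lemma \ref{lem:min_winning_bid}. First I would check that $\{F_i\}$ is a genuine strategy profile: each $F_i$ is a valid cumulative bid distribution whose induced marginals over values match $G_i$. This is essentially bookkeeping built into the algorithm — every value $v_i^k$ is "consumed" exactly when $F_i$ has accumulated mass $G_i(v_i^k)-G_i(v_i^{k-1})$ on the interval $S_i(v_i^k)$ (Theorem \ref{thmrule1leave}), the intervals are nested by value (Lemma \ref{1pmonotone}), and they are genuine intervals (Theorem \ref{thmbidcont}), so stacking them reconstructs $G_i$. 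I would also note no buyer overbids, since a buyer only ever enters the bidding set with value $v_i$ when $v_i > x$ (Theorem \ref{thmenter}).

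Next comes the core step: showing no buyer with any value $v_i^k$ can profitably deviate. Fix buyer $i$ and value $v_i^k$, and let $u_i(x) = (v_i^k - x)\,F_{N\setminus i}(x)$ be the interim utility from bidding $x$ (using Assumption \ref{assumption} to pin down the payoff at ties, which occur with positive probability only at $\underline{b}$). On the interval $S_i(v_i^k)$, Theorem \ref{thmeqa} gives $f_i/F_i = h_i$, which is exactly the first-order stationarity condition $\frac{d}{dx}\log u_i(x) = 0$ (differentiating $\log(v_i^k-x) + \sum_{j\in\Lambda, j\ne i}\log F_j(x)$ and using $f_j/F_j = h_j$ for the competitors); hence $u_i$ is constant on $S_i(v_i^k)$, call this value $\bar u$. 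For bids outside $S_i(v_i^k)$ I would split into cases. For $x$ above the top of $S_i(v_i^k)$: in the region where $i$'s optimal value exceeds $v_i^k$, the entering/leaving conditions (Lemma \ref{lemmachange}, Theorem \ref{thmenter}) say precisely that $\frac{1}{v_i^k - x} \ge \frac{1}{|\Lambda|-1}\sum_{j\in\Lambda}\frac{1}{v_j(x)-x}$, i.e. $h_i$-type quantity is $\le 0$ when computed with $v_i^k$, so $u_i$ is (weakly) decreasing there; above the largest winning bid $\bar b$ the win probability is $1$ and raising the bid only loses money. For $x$ below the bottom of $S_i(v_i^k)$ down to $\underline{b}$: the value $v_i^k$ has been consumed, so the relevant inequality flips and $u_i$ is weakly increasing on the way down to $S_i(v_i^k)$ — combined with the previous case this makes $x \in S_i(v_i^k)$ a global maximizer over $[\underline b, \bar b]$. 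Finally, bidding below $\underline b$ yields zero win probability hence zero utility, which is dominated because $\bar u = (v_i^k - x)F_{N\setminus i}(x) > 0$ on $S_i(v_i^k)$ (here is where matching $\underline b$ to the value in Lemma \ref{lem:min_winning_bid} is used: it guarantees $\underline b$ is the \emph{correct} floor, so the buyer with the largest smallest value is indifferent at $\underline b$ and everyone else strictly prefers to bid at or above it, and in particular no buyer wants to bid below $\underline b$).

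The monotonicity claims about $u_i$ off the support $S_i(v_i^k)$ are where the real work sits, so I expect that to be the main obstacle. The subtlety is that as $x$ moves away from $S_i(v_i^k)$ the bidding set $\Lambda(x)$ changes at discrete points, and one must argue that the sign of $\frac{d}{dx}\log u_i$ — which depends on whether $\frac{1}{v_i^k-x}$ is above or below the average $\frac{1}{|\Lambda|-1}\sum_{j}\frac{1}{v_j(x)-x}$ — is preserved across every such change point and never crosses zero in the wrong direction. I would handle this by using Theorem \ref{strict_decrease}: the virtual value $\phi^*(x)$ is strictly decreasing, and the entering/leaving rules are exactly comparisons of $v_i^k$ against $\phi^*(x)$, so the monotonicity of $u_i$ on each side of $S_i(v_i^k)$ follows from the single-crossing behavior of $\phi^*(x)$ versus the constant $v_i^k$; the delicate case is exactly at $\underline b$, where Assumption \ref{assumption} and the exact matching of $\underline b$ are both needed to rule out a profitable downward deviation into the tie.
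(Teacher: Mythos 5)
Your interior argument (indifference on each $S_i(v_i^k)$ via Theorem \ref{thmeqa}, single-crossing off the support via the monotone virtual value $\phi^*$ and the entering/leaving rules) is sound in outline, and it actually spells out the part that the paper's proof of Lemma \ref{lem:match} leaves implicit. The genuine gap is at the bottom, which is where the paper's proof spends essentially all of its effort. Algorithm \ref{alg} only outputs bid distributions down to $\underline{b}(\bar b)$: at termination the bidding set has shrunk to the single buyer $i^*$ with the largest smallest value, part of the probability of $v_{i^*}^1$ is still unconsumed, and no bids at all have been assigned to the value types at or below $\underline{b}$. Before any equilibrium condition can be verified, the strategy profile must be completed there; the paper does this explicitly, placing buyer $i^*$'s remaining mass as an atom at $\underline{b}$ and letting every other buyer with value at most $\underline{b}$ bid $\underline{b}$ deterministically, so that under Assumption \ref{assumption} buyer $i^*$ wins all such ties.

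Your proposal never performs this completion, yet several of your steps silently rely on it. The claim that a bid below $\underline{b}$ wins with probability zero is only true because the low types are made to bid exactly $\underline{b}$; without specifying their behavior the claim cannot be checked. The claim that the buyer with the largest smallest value is ``indifferent at $\underline{b}$'' requires connecting the terminal quantities $F_j(\underline{b}(\bar b))$ produced by the algorithm to $G_j(\underline{b})$ and to the argmax characterization in Lemma \ref{lem:min_winning_bid}; this is exactly where the hypothesis $\underline{b}(\bar b)=\underline{b}$ does its work, and you assert it rather than derive it. You also never check that the completed low-value types are themselves best responding (they lose every Vickrey tie because their values are smaller, get zero utility, and cannot gain by winning at a price above their value), and your statement that ``everyone else strictly prefers to bid at or above $\underline{b}$'' is false for them --- they are merely indifferent. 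In short, your write-up is strongest on the region $(\underline{b},\bar b]$ but is missing the construction and incentive check at and below $\underline{b}$, which is the actual content of the paper's proof.
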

	
	\begin{theorem}
	A Bayesian Nash Equilibrium always exists when buyers have discrete value distributions.
	\end{theorem}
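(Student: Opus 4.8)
The plan is to reduce the existence of a BNE to an intermediate value argument on Algorithm~\ref{alg}, and then invoke Lemma~\ref{lem:match}. Write $\underline b$ for the smallest winning bid determined by Lemma~\ref{lem:min_winning_bid}, set $v_{\max}=\max_i v_i^{d_i}$, and for each guessed largest winning bid $\bar b$ let $g(\bar b):=\underline b(\bar b)$ denote the terminal bid returned by Algorithm~\ref{alg} together with the strategy profile it produces. A BNE exists as soon as we exhibit some $\bar b^\ast$ (in the admissible range of guesses, roughly $[\underline b, v_{\max})$) with $g(\bar b^\ast)=\underline b$: Lemma~\ref{lem:match} then certifies that the profile output by Algorithm~\ref{alg} on input $\bar b^\ast$ is a BNE.

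First I would record the properties of $g$ already established: $g$ is continuous in $\bar b$ (Theorem~\ref{continuity}) and strictly increasing in $\bar b$ (Corollary~\ref{strict}). Hence it suffices to show that $\underline b$ lies in the range of $g$, i.e.\ that $g$ takes a value $\le\underline b$ and a value $\ge\underline b$. The first is immediate: Algorithm~\ref{alg} initializes $b\gets\bar b$ and thereafter only decreases $b$, so $g(\bar b)\le\bar b$ for every $\bar b$, and in particular $g(\underline b)\le\underline b$ (if equality holds we are already done via Lemma~\ref{lem:match}, so assume $g(\underline b)<\underline b$).

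For the second, I would analyze the algorithm as $\bar b\uparrow v_{\max}$. A buyer holding the top value $v_{\max}$ has, by Theorem~\ref{thmeqa}, a bid distribution of the form $C/(v_{\max}-x)$ on the relevant interval, and the normalizing constant vanishes, $C=\Theta(v_{\max}-\bar b)\to 0$; by the consumption rule of Theorem~\ref{thmrule1leave}, the interval $S_i(v_{\max})$ must then shrink into a neighborhood of $v_{\max}$ in order to absorb the fixed probability mass $G_i(v_{\max})-G_i(v_i^{d_i-1})$. Propagating this estimate through the cascade of subsequent enter/exit events shows $g(\bar b)\to v_{\max}$; since $\underline b\le v_{i^\ast}^1<v_{\max}$ outside the degenerate case in which some buyer is single-valued at the global maximum (which can be checked separately), we get $g(\bar b)>\underline b$ for $\bar b$ close enough to $v_{\max}$. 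The intermediate value theorem now yields $\bar b^\ast$ with $g(\bar b^\ast)=\underline b$, and Lemma~\ref{lem:match} finishes the proof.

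The main obstacle is the upper-endpoint estimate $g(\bar b)\to v_{\max}$: one has to rule out the terminal point of Algorithm~\ref{alg} ``escaping downward'' in this limit, which requires controlling how the normalizing constants from Theorem~\ref{thmeqa} interact with the consumption condition of Theorem~\ref{thmrule1leave} along the entire sequence of bidding-set changes. Everything else is a short application of the intermediate value theorem (using Theorem~\ref{continuity} and Corollary~\ref{strict}) combined with Lemma~\ref{lem:match}.
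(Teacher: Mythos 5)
Your overall route is the same as the paper's: bracket the target value $\underline{b}$, use the continuity of $\underline{b}(\bar{b})$ (Theorem~\ref{continuity}, with monotonicity from Theorem~\ref{thm:mono}/Corollary~\ref{strict}) to apply an intermediate value argument, and then certify the resulting profile with Lemma~\ref{lem:match}. Your lower bracket ($\underline{b}(\bar b)\le\bar b$, hence $\underline{b}(\underline b)\le\underline b$) is fine. The one place your writeup is genuinely incomplete is exactly the step you flag as the ``main obstacle'': the claim $\underline{b}(\bar b)\to v_{\max}$ as $\bar b\uparrow v_{\max}$, which you only sketch via asymptotics of the normalizing constants in Theorem~\ref{thmeqa} propagated ``through the cascade'' of enter/exit events. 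That machinery is unnecessary, and the paper avoids it entirely: simply evaluate the algorithm at the guess $\bar b=\max_{i,j} v_i^j$. Since the entering conditions in Theorem~\ref{thmenter} require $v_i>x$ strictly, no buyer can enter the bidding set at $x=v_{\max}$, so the while loop in Algorithm~\ref{alg} never starts and $\underline{b}(v_{\max})=v_{\max}\ge\underline b$ exactly (the paper brackets with $\bar b=\max v_i^j$ and $\bar b=\min v_i^j$). With that substitution, no control of how the constants from Theorem~\ref{thmeqa} interact with the consumption rule of Theorem~\ref{thmrule1leave} is needed, and your argument coincides with the paper's proof; as written, though, the upper-endpoint limit remains an unproved assertion rather than a completed step.
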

\begin{proof}
    If we guess $\bar{b}=\max{v^j_i}$, the $\underline{b}(\bar{b})$ given by Algorithm \ref{alg} would be $\max{v^j_i}$.
    And if we guess $\bar{b}=\min{v^j_i}$, the corresponding $\underline{b}(\bar{b})$ given by Algorithm \ref{alg} would be smaller than $\underline{b}$.
    By Theorem \ref{continuity}, there always exists a $\bar{b}^*$ such that the $\underline{b}(\bar{b}^*)=\underline{b}$. According to Lemma \ref{lem:match}, the guess $\bar{b}^*$ indeed gives a BNE.
\end{proof}

	Previous studies use Nash's Theorem to prove the existence of the BNE in the discrete value case. They first prove the existence in the setting where the bid space is restricted to a set of finite discrete bids. Then the existence in the continuous bid space is proved by taking the limit of the equilibrium in the restricted bid space setting. However, how to find a Bayesian Nash equilibrium is previously unknown.

%
 	\citet{maskin2003uniqueness} also propose a forward shooting method to compute the equilibrium. 
	They construct the BNE from the smallest winning bid. Under some conditions, they give the solution to the ordinary differential equations around the smallest winning bid.
	However, this method does not work for the discrete value setting. Because unlike the continuous value setting where the bidding set contains all buyers, in the discrete value setting, the bidding set at the smallest winning bid is unknown in the beginning.

	Following Algorithm \ref{alg}, it is easy to see that for two buyers with identical value distributions, they must enter and leave the bidding set exactly at the same points. 
\begin{corollary} 
	Buyers with identical value distributions have identical bidding strategies in the BNE. Furthermore, if all buyers have identical value distributions, i.e. symmetric distributions, the Bayesian Nash equilibrium is also symmetric.
	\end{corollary}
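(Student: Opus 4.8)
The plan is to combine the uniqueness of the BNE (Theorem~\ref{thm:unique}) with the explicit, constructive description of the equilibrium given by Algorithm~\ref{alg}. By uniqueness, the BNE is exactly the family of bidding intervals $\mathcal{E}(\bar b^*,\{p_i^j\})$ produced by Algorithm~\ref{alg} for the unique guess $\bar b^*$ with $\underline b(\bar b^*)=\underline b$. So it suffices to show that the run of Algorithm~\ref{alg} on input $\bar b^*$ treats two buyers with identical value distributions identically, and I would do this by induction on the (finite) sequence of change points of the bidding set, processed from $\bar b^*$ downward.

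For the induction, fix two buyers $i$ and $i'$ with $G_i=G_{i'}$, so they share the same value support and the same increments $p_i^j=p_{i'}^j$. The inductive hypothesis at a change point $b$ is: just above $b$ one has $i\in\Lambda$ iff $i'\in\Lambda$; when both are in the bidding set they carry the same value and $F_i\equiv F_{i'}$ there; and a value consumed by one has been consumed by the other, so the ``largest unconsumed value'' coincides for $i$ and $i'$ whenever both are on the waiting list. I then verify that each update rule preserves this. By Theorem~\ref{thmrule1leave}, buyer $i$ leaves with value $v^k$ exactly when $F_i(\max S_i(v^k))-F_i(x)=G_i(v^k)-G_i(v^{k-1})$; since the right-hand sides agree and $F_i\equiv F_{i'}$ on the current interval, $i$ and $i'$ reach this threshold at the same bid and leave together. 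For the entering rule (Theorem~\ref{thmenter}), suppose $i$ and $i'$ are both on the waiting list with common largest unconsumed value $v$ and the rule admits $i$ at bid $b$. After $i$ enters with value $v$, buyer $i'$ still has the largest unconsumed value, and it is still eligible: writing $k=|\Lambda|$ before $i$ enters, the admission inequality $\tfrac{1}{v-b}\le\tfrac{1}{k-1}\sum_{j\in\Lambda}\tfrac{1}{v_j(b)-b}$ gives $\sum_{j\in\Lambda}\tfrac{1}{v_j(b)-b}+\tfrac{1}{v-b}\ge k\cdot\tfrac{1}{v-b}$, which is precisely the admission inequality for $i'$ against the enlarged bidding set $\Lambda\cup\{i\}$ of size $k+1$ (the $|\Lambda|\le 1$ branch is handled the same way). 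Hence $i'$ is admitted immediately after $i$, carrying the same value. Finally, on any interval where the bidding set and the carried values are constant and both $i,i'\in\Lambda$, Theorem~\ref{thmeqa} gives $f_i/F_i=h_i$ and $f_{i'}/F_{i'}=h_{i'}$; since $v_i(x)=v_{i'}(x)$ forces $h_i=h_{i'}$, and the two cumulative distributions share the same boundary value at the top of the interval by the inductive hypothesis, we get $F_i\equiv F_{i'}$ on the entire interval, closing the induction.

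Running this induction down to $\underline b(\bar b^*)=\underline b$ yields $S_i=S_{i'}$, $S_i(v)=S_{i'}(v)$ for every value $v$, and $F_i\equiv F_{i'}$; that is, buyers with identical value distributions have identical equilibrium strategies. The symmetric-equilibrium claim then follows by applying this to every pair of buyers when all value distributions coincide. The one point that requires care — and the main obstacle to stating cleanly — is the bookkeeping when several waiting buyers (in particular $i$ and $i'$) share the same largest unconsumed value at one bid: Algorithm~\ref{alg} admits them one at a time, so one must check, as above, that admitting one of the tied buyers does not destroy another's eligibility and that the insertion order among tied buyers does not affect the resulting bidding set or the ODE solutions on the next interval. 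Everything else is a routine propagation of the inductive hypothesis through the entering and exiting rules.
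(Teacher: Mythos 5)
Your proposal is correct and follows essentially the same route as the paper: the paper simply observes that, by following Algorithm~\ref{alg}, two buyers with identical value distributions must enter and leave the bidding set at exactly the same points, and your induction over the change points (together with Theorem~\ref{thmrule1leave}, Theorem~\ref{thmenter}, Theorem~\ref{thmeqa}, and uniqueness via Theorem~\ref{thm:unique} and Lemma~\ref{lem:match}) is just a careful write-up of that observation. The tie-handling check you flag (admitting one of two identical buyers does not destroy the other's eligibility) is the right detail to verify and is exactly the step the paper leaves implicit.
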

	
	
%
%

	\subsection{Uniqueness}
	According to the monotonicity of $\underline{b}(\bar{b})$, we can check whether the guessed bid is too high or too low. 
	Corollary \ref{strict} implies that only one guess of the largest winning bid $\bar{b}$ can possibly equate the corresponding $\underline{b}(\bar{b})$ and the actual smallest winning bid $\underline{b}$.
	
	\begin{theorem}\label{thm:unique}
	There exists a unique Bayesian Nash equilibrium when buyers have discrete value distributions.
	\end{theorem}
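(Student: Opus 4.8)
The plan is to prove uniqueness by combining the existence argument with the strict monotonicity of the map $\bar b \mapsto \underline b(\bar b)$ produced by Algorithm~\ref{alg}. The starting observation is Lemma~\ref{lem:min_winning_bid}: in \emph{any} BNE, the smallest winning bid $\underline b$ is pinned down uniquely by the value distributions, independent of which equilibrium we are looking at. So it suffices to show that the entire equilibrium profile is determined by the pair $(\bar b, \underline b)$ together with the distributions, and that only one admissible value of $\bar b$ is consistent with the known $\underline b$.

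First I would argue that every BNE arises as the output of Algorithm~\ref{alg} for some guess $\bar b$. Take an arbitrary BNE $\mathcal E$; let $\bar b$ be its largest winning bid. I claim the bidding intervals of $\mathcal E$ coincide with $\mathcal E(\bar b,\{p_i^j\})$. This is the structural heart of the paper: starting from $\bar b$, the bidding set must be exactly the set of buyers forced in by Theorem~\ref{thmenter} (a buyer with the largest unconsumed value enters precisely when staying out would be a profitable-deviation contradiction, via Lemma~\ref{lemmachange}), the bid densities are forced by Theorem~\ref{thmeqa}, each buyer leaves exactly when his value-probability is exhausted (Theorem~\ref{thmrule1leave}), and $S_i(v_i^j)$ is an interval (Theorem~\ref{thmbidcont}), so there is no freedom in how to ``route'' the probability mass. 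Hence the whole downward construction is deterministic given $\bar b$, and $\mathcal E = \mathcal E(\bar b,\{p_i^j\})$; in particular its terminal point is $\underline b(\bar b)$, which must equal the true $\underline b$ by Lemma~\ref{lem:min_winning_bid}.

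Now invoke Corollary~\ref{strict}: $\underline b(\bar b)$ is \emph{strictly} monotone in $\bar b$. Therefore there is at most one $\bar b$ with $\underline b(\bar b) = \underline b$. Combined with the previous paragraph, any two BNEs must share the same $\bar b$, and then — since the construction from $\bar b$ down to $\underline b$ is fully determined — they must be the same profile of bid distributions $\{F_i\}$. Existence (already proved) gives at least one such equilibrium, so the BNE is unique. I would also remark that by Lemma~\ref{nomass} there is no mass above $\underline b$, so identifying the $F_i$ identifies the randomized strategies $S_i(v_i^j)$ up to the measure-zero ambiguities discussed after Lemma~\ref{1pmonotone}, and behavior exactly at $\underline b$ is fixed by Assumption~\ref{assumption}; hence ``same $\{F_i\}$'' really does mean ``same equilibrium.''

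The main obstacle is the first step — showing that an \emph{arbitrary} BNE, not assumed to be generated by the algorithm, must in fact obey the enter/leave rules of Theorems~\ref{thmenter} and~\ref{thmrule1leave} at every change point. One has to rule out, e.g., a buyer who could have entered the bidding set (the second condition of Theorem~\ref{thmenter} holds) but does not, or a buyer who lingers past the point where his value-mass is used up, or two distinct buyers whose intervals could be swapped. Each of these is a local best-response/complementary-slackness argument built on Lemmas~\ref{lemma4}, \ref{nomass}, \ref{1pmonotone} and~\ref{lemmachange}, but assembling them into a clean induction ``from $\bar b$ downward, the bidding set at every bid is forced'' is the delicate part; the monotonicity corollaries then do the rest almost for free.
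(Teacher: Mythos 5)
Your proposal is correct and follows essentially the same route as the paper: the smallest winning bid $\underline{b}$ is fixed by Lemma~\ref{lem:min_winning_bid}, the structural results (Theorems~\ref{thmeqa}, \ref{thmbidcont}, \ref{thmenter}, \ref{thmrule1leave}) force any BNE to coincide with the algorithm's output for its largest winning bid $\bar{b}$, and Corollary~\ref{strict} then leaves only one admissible $\bar{b}$, so uniqueness follows from existence. The step you flag as delicate is exactly what the paper's characterization theorems (stated as properties of an arbitrary BNE) are meant to supply, so your argument matches the paper's, just spelled out more explicitly.
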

\begin{remark}
By uniqueness, we mean the equilibrium above the smallest winning bid is unique.
In Example \ref{classic}, we can change buyer 2's bids below 2 to any other bids below 2, and still get an equilibrium. But we only focus on the structure above the smallest winning bid.
\end{remark}	
	
In the continuous distribution case, in order to prove the uniqueness result, \citet{lebrun2006uniqueness} relies on the assumption that the distribution is strictly log-concave ($f_i/F_i$ is strictly decreasing) at the highest lower extremity of the supports, i.e., $v^1_1$ in our example. Briefly speaking, he uses this assumption to handle the case where some buyers always give bids that is larger than $\underline{b}$. However, in the discrete setting, according to our results, we do not need to deal with such cases. 
	


Furthermore, our algorithm does not have the sensitivity issue that are difficult to address in continuous algorithms.

\section{Experiments}
\subsection{Implementation}
\label{sec:det}

Algorithm \ref{alg:overview} computes the BNE of the first-price auction by repeatedly guessing the largest winning bid $\bar{b}$. According to Theorem \ref{thmeqa}, the change points of the bidding set completely determines the buyers' bidding strategies in the BNE.  And given a guess $\bar{b}$, we compute the change points of the bidding set $\Lambda(x)$ using Algorithm \ref{alg}.

Given a guess of the largest winning bid $\bar{b}$, we need to first determine the initial bidding set (line 2 in Algorithm \ref{alg}). This can be easily done by setting the initial bidding set $\Lambda(\bar{b})=\emptyset$, and then adding buyers to $\Lambda(\bar{b})$ according to Theorem \ref{thmenter}.




In Algorithm \ref{alg:overview}, different guesses of the highest winning bid may result in different bidding strategies of the buyers.  However, only the correct guess of the largest winning bid will produce the actual BNE. Since $\underline{b}(\bar{b})$ is monotone in $\bar{b}$, we can compare $\underline{b}(\bar{b})$ with the actual smallest winning bid given by Lemma \ref{lem:min_winning_bid} to check whether the guess is too high or too low, and shorten the interval where the actual highest winning bid can lie.

\subsubsection{Predicting the change points of the bidding set}
Given a guess of the largest winning bid $\bar{b}$, our algorithm computes the bidding strategies all the way down from $\bar{b}$ to $\underline{b}(\bar{b})$.
A naive way to compute how the bidding set changes is to decrease the bid gradually and see whether any buyer would leave or enter the bidding set. However, we can accelerate this process by predicting the next change point directly.

Suppose the current change point is $b$ and we want to predict the next change point $b'<b$. Assume that the bidding set changes at $b'$ because of buyer $i$\footnote{It is possible that the bidding set changes because of multiple buyers. However, the analysis still applies}. There are two possibilities:
\begin{itemize}
    \item buyer $i$ enters the bidding set with value $v_i^j$;
    \item buyer $i$ leaves the bidding set with value $v_i^j$.
\end{itemize}

For the first possibility, the two conditions in Theorem \ref{thmenter} are satisfied at point $b'$. Therefore, we must have
\begin{gather}
\label{eq:predict_enter}
v_i^j\ge b'\text{~and~}h_i(b')\ge 0.
\end{gather}

By assumption, the bidding set $\Lambda$ does not change in the interval $(b',b)$. Thus $h_i(x)$ can be easily computed in this interval and we can easily find the largest solution $b'$ to the above inequalities.

It is worth mentioning that it can be sometimes problematic when solving equation $h_i(x)=0$, since it involves division operations. However, this can be easily overcome by solving the following polynomial and discarding inappropriate solutions:
\begin{gather}
\label{eq:enter_poly_form}
0=\left[(v_i-x)\prod_{j\in \Lambda}(v_j-x) \right]h_i(x)
=\frac{v_i-x}{|\Lambda|-1}\sum_{j\in \Lambda}\prod_{k\in \Lambda\setminus\{j\}}(v_k-x)-\prod_{j\in \Lambda}(v_j-x).
\end{gather}



For the second possibility, we need to check when the probability of $v_i^j$ will be consumed completely. Let $\alpha$ be the remaining probability of $v_i^j$ at point $b$:
\begin{gather*}
\alpha=F_i(b)-G_i(v_i^{j-1}).
\end{gather*}
By Theorem~\ref{thmeqa}, we have
\begin{gather*}
\frac{\mathrm{d}\ln F_i(x)}{\mathrm{d}x}=h_i(x).
\end{gather*}
Define
\begin{gather*}
H_i(x)=\ln (v_i-x)-\frac{1}{|\Lambda(x)|-1}\sum_{j\in \Lambda}\ln (v_i-x).
\end{gather*}

It's easy to see that $H'_i(x)=h_i(x)$. Therefore, to compute the leaving point of buyer $i$, we need to solve
\begin{gather*}
\int_{b'}^{b}\,\mathrm{d}\ln F_i(x)=\int_{b'}^{b}h_i(x)\,\mathrm{d}x=\int_{b'}^{b}\,\mathrm{d}\ln H_i(x).
\end{gather*}
Or equivalently,
\begin{gather}
\ln F_i(b)-\ln (F_i(b)-\alpha)=\ln F_i(b)-\ln G_i(v_i^{j-1})=H_i(b)-H_i(b').
\label{eq:predict_leave}
\end{gather}

In fact, every buyer could cause the bidding set to change. Thus, to compute the actual next change point, we need to enumerate all possibilities, i.e., for each buyer $i$, assume the bidding set changes due to this buyer and compute the change point $b'_i$. And the actual change point is the largest among them:
\begin{gather*}
b'=\max_i\{b'_i \}.
\end{gather*}


\subsubsection{Updating the bidding set}
After predicting the next change point $b'$ and the corresponding buyer $i$, we need to update the bidding set $\Lambda(b')$. Of course, buyer $i$ should be removed from or added to the bidding set accordingly. But it is also possible that other buyer may enter or leave the bidding set as well, since the bidding set has changed.

According to Theorem \ref{thmrule1leave}, if a buyer is already in the bidding set, he will only leave when the corresponding value has been consumed. Therefore, at point $b'$, any buyer in $\Lambda$ with unconsumed value will not leave $\Lambda$ no matter how $\Lambda$ changes. So if we remove all buyers whose values are consumed at $b'$, we can guarantee that all the remaining buyers in $\Lambda$ will still be in $\Lambda$ after the update.

The next step is to add the buyers in the waiting list to the bidding set. Suppose buyer $j$ is the buyer with the largest unconsumed value in the waiting list. According to Theorem \ref{thmenter}, we can check whether buyer $j$ will enter the bidding set or not. There are two cases:
\begin{enumerate}
    \item If buyer $j$ enters the bidding set, then we have new buyer with the largest unconsumed value, and we can repeat this process until the buyer with the largest unconsumed value does not enter the bidding set, which becomes the next case;
    \item If buyer $j$ does not enter the bidding set, then all the other buyers in the waiting list does not enter, either. This can be easily proved using Theorem \ref{thmenter} by contradiction.
\end{enumerate}

Therefore, in Algorithm \ref{alg}, when updating the bidding set, we first remove buyers from it and then add buyers from the waiting list.

\subsubsection{Complexity of Algorithm \ref{alg:overview}}

We first consider the complexity of Algorithm \ref{alg}, since it is used as sub-routine in Algorithm \ref{alg:overview}. We have efficient algorithms to solve Equation \eqref{eq:predict_enter} and \eqref{eq:predict_leave}, since Equation \eqref{eq:predict_enter} is equivalent to a polynomial (Equation \eqref{eq:enter_poly_form}), and Equation \eqref{eq:predict_leave} is a root finding problem for a monotone functions. For ease of presentation, we regard solving Equation \eqref{eq:enter_poly_form} or Equation \eqref{eq:predict_leave} as one operation.

Let $m=\sum_i d_i$ be the total number of discrete values. Then we have
\begin{lemma}
    \label{lem:alg2_complexity}
    Algorithm \ref{alg} requires at most $2mn$ operations.
\end{lemma}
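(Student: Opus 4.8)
The plan is to bound separately the number of "enter" events and the number of "leave" events processed by Algorithm \ref{alg}, then argue that between consecutive change points only a bounded amount of extra work is done, so that the total operation count is dominated by the number of events. First I would observe that each buyer $i$ has exactly $d_i$ values, and in the backward sweep from $\bar b$ down to $\underline b(\bar b)$ the algorithm handles these values in decreasing order: by Lemma \ref{1pmonotone} and the definition of "enter"/"leave", value $v_i^j$ is entered exactly once (at $\max S_i(v_i^j)$) and left at most once (at $\min S_i(v_i^j)$), and once a value is consumed it is never revisited. Hence the total number of enter events over all buyers is at most $\sum_i d_i = m$, and likewise the total number of leave events is at most $m$, giving at most $2m$ change-point events in the whole run of Algorithm \ref{alg}.

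Next I would account for the work done \emph{per iteration} of the while-loop. At a given change point $b$, to predict the next change point $b'$ the algorithm must, for each buyer, solve one instance of Equation \eqref{eq:enter_poly_form} (if the buyer is in the waiting list and might enter) or Equation \eqref{eq:predict_leave} (if the buyer is in the bidding set and might have its current value consumed), and then take the maximum over all $n$ buyers; this is $n$ operations by our convention of counting each such solve as one operation. Then, updating $\Lambda(b')$ consists of removing the buyers whose values are consumed and repeatedly admitting the waiting-list buyer with the largest unconsumed value while the entering condition of Theorem \ref{thmenter} holds. Crucially, each admission in this inner loop \emph{is} an enter event, so it is already counted in the budget of $m$ enter events; the only "wasted" test is the single final check that fails (case 2 in the updating discussion), which costs $O(1)$ and can be charged to the change point $b'$ itself.

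Putting these together: the while-loop of Algorithm \ref{alg} executes at most $2m$ times (once per change-point event, since each event either consumes a value or is the terminal step), and each execution costs $n$ operations for the prediction step plus $O(1)$ bookkeeping, for a total of at most $2mn$ operations, as claimed. The initialization in line 2 — building $\Lambda(\bar b)$ from the empty set by repeated admission — only performs enter events, which are again subsumed in the count of $m$ enter events, so it does not change the bound.

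The main obstacle I anticipate is the accounting for the inner "admit buyers from the waiting list" loop inside the update step: one must be careful that its iterations are not double-counted against the per-change-point cost, and the clean way to handle this is the amortization argument above — charge each successful admission to its (unique) enter event and each failed final test to the current change point. A secondary subtlety is confirming that a value can trigger at most one enter and one leave over the entire sweep; this follows from monotonicity (Lemma \ref{1pmonotone}) together with the fact that $\phi^*$ is strictly decreasing (Theorem \ref{strict_decrease}), which prevents a buyer from re-entering with an already-consumed value.
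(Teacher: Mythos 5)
Your proposal is correct and follows essentially the same counting argument as the paper: each buyer--value pair contributes at most one entering and one leaving event (so at most $2m$ change points), and at each change point the algorithm solves at most $n$ equations of the form \eqref{eq:enter_poly_form} or \eqref{eq:predict_leave}, yielding the $2mn$ bound. The extra amortization remarks about the inner admission loop are fine but not needed beyond the paper's simpler accounting.
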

\begin{proof}
    Each buyer with each possible value at most changes the bidding set twice (the entering point and the leaving point).
    So there are at most $2m$ change points. For each change point,  the algorithm solves at most $n$ equations, either in the form of Equation \eqref{eq:enter_poly_form} or Equation \eqref{eq:predict_leave}.
\end{proof}

\begin{theorem}
    Let $L=\max\{\cup_{i\in N} ~\mathrm{supp}(G_i) \}$. Suppose we use $UB-LB<\epsilon$ as the exit condition. Then Algorithm \ref{alg:overview} requires $O(mn\log(L/\epsilon))$ operations.
\end{theorem}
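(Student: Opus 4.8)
The plan is to split the cost of Algorithm~\ref{alg:overview} into (i) the number of iterations of the \texttt{while} loop and (ii) the cost of one iteration, and then multiply. First I would bound the iteration count. The interval $[LB,UB]$ is initialized to $[0,L]$, since line~2 sets $UB\gets\max\{\cup_{i\in N}\mathrm{supp}(G_i)\}=L$ and $LB\gets0$. Each pass through the loop sets $\bar b$ to the midpoint $\tfrac12(UB+LB)$ and then, according to the sign of $\underline b(\bar b)-\underline b$, replaces either $UB$ or $LB$ by $\bar b$; this halves the length of $[LB,UB]$. Hence after $k$ passes the length is $L/2^k$, so the exit condition $UB-LB<\epsilon$ is first met once $L/2^k<\epsilon$, i.e.\ after $k=\lceil\log_2(L/\epsilon)\rceil=O(\log(L/\epsilon))$ iterations.

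Next I would bound the per-iteration cost. The only non-trivial work inside the loop is the invocation of Algorithm~\ref{alg} to compute $\underline b(\bar b)$ from the current guess $\bar b$; the subsequent comparison $\underline b(\bar b)>\underline b$ and the update of $UB$ or $LB$ take $O(1)$ time. By Lemma~\ref{lem:alg2_complexity}, Algorithm~\ref{alg} performs at most $2mn$ operations, where ``operation'' means one solve of the polynomial equation~\eqref{eq:enter_poly_form} or one monotone root-find of~\eqref{eq:predict_leave}, in the cost model stipulated just before that lemma. I would also account for the one-time preprocessing in line~1: computing the actual smallest winning bid $\underline b$ via Lemma~\ref{lem:min_winning_bid} is a single maximization of the piecewise-smooth function $(v_{i^*}^1-b)\prod_{i\ne i^*}G_i(b)$, which costs $O(m)$ operations and is dominated by the loop.

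Combining the two bounds gives a total of $O(m)+O(\log(L/\epsilon))\cdot O(mn)=O(mn\log(L/\epsilon))$ operations, which is the claimed bound.

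As for obstacles: this is essentially bookkeeping and no deep difficulty arises. The one point needing care is that the \texttt{while} loop really is a valid binary search --- that the sign of $\underline b(\bar b)-\underline b$ faithfully indicates whether the true largest winning bid $\bar b^*$ lies above or below $\bar b$, so that $\bar b^*$ stays in $[LB,UB]$ throughout and halving is legitimate. This is exactly what the monotonicity results (Theorem~\ref{thm:mono}, Corollary~\ref{strict}) and the continuity result (Theorem~\ref{continuity}), together with the existence of $\bar b^*$, guarantee; once they are invoked the counting goes through verbatim. A minor remark worth including is that the bound is in the stated unit-cost model (each polynomial solve / monotone root-find counted as $O(1)$), not in bit operations.
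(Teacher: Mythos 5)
Your proposal is correct and follows the same decomposition as the paper's proof: $O(\log(L/\epsilon))$ binary-search iterations over $(0,L)$, each costing $O(mn)$ operations by Lemma~\ref{lem:alg2_complexity}. The extra remarks on preprocessing and on the monotonicity justifying the binary search are fine but not needed beyond what the paper already invokes.
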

\begin{proof}
    Algorithm \ref{alg} requires $O(mn)$ operations according to Lemma \ref{lem:alg2_complexity}.
    Algorithm \ref{alg:overview} does a binary search in the interval $(0, L)$, which requires $O(\log(L/\epsilon))$ runs of Algorithm \ref{alg}.
\end{proof}

\subsection{Experiment results}

%
%
%

\subsubsection{Accuracy comparison between continuous algorithms and our algorithm}
When  buyers have discrete value distributions, one natural way of computing the BNE is to approximate the discrete value distribution with a continuous one. Of course, there are infinitely many ways of approximation. Our choice is to replace a discrete value with a ``triangle'' probability density centered at that value, and cover the interval $[v_i^1, v_i^{d_i}]$ with a small uniform distribution. Formally,

\begin{gather*}
	f_i(t)=
	\begin{cases}
		\left[\frac{p_i^d}{w^2}(t-v_i^d+w)\right](1-\epsilon)+\epsilon\cdot u_i & \exists d,t\in[v_i^d-w, v_i^d] \\
		\left[-\frac{p_i^d}{w^2}(t-v_i^d-w)\right](1-\epsilon)+\epsilon\cdot u_i & \exists d,t\in[v_i^d, v_i^d+w] \\
		0 & t<v_i^1 - w \text{~or~} t>v_i^{d_i}+w \\
		\epsilon\cdot u_i & \text{otherwise}
	\end{cases},
\end{gather*}
where $u_i = \frac{1}{v_i^{d_i}-v_i^1+2w}$.

For simplicity, we assume that for all $i$ and $d$, $|v_i^{d+1} - v_i^d| >2w$ and $w<v_i^d<1-w$. Note that although we stick to this approximation throughout this section, our analysis also applies to other possible ways of approximation.

We implemented the continuous backward-shooting algorithm using the characterization by \citet{maskin2000equilibrium}:

\begin{gather}\label{eq:dvdb}
	\frac{\mathrm{d}t_i(b)}{\mathrm{d}b}=
	\frac{F_i(t_i(b))}{f_i(t_i(b)}\left[\left(\frac{1}{n-1}\sum_{j=1}^{n}\frac{1}{t_j(b)-b} \right)-\frac{1}{t_i(b)-b} \right], \forall i\in [n]
\end{gather}

\begin{algorithm}[h]
	\KwIn{step $s$, max winning bid guess $\bar{b}$}
	\KwOut{the min winning bid}
	$b \leftarrow \bar{b}, t_i \leftarrow 1$\;
	\While{$t_i>b, \forall i\in [n]$}{
		\For{$i \in [n]$}{
			compute $t'_i(b)$ according to Equation (\ref{eq:dvdb})\;
			$t_i \leftarrow t_i - t'_i(b)\cdot s$\;
		}
		$b \leftarrow b-s$\;
	}
	\Return $b$\;
	\caption{The continuous backward-shooting algorithm}
	\label{algo:backward}
\end{algorithm}

Using a smaller step size $s$ could significantly increase the number of loops inside Algorithm \ref{algo:backward}. However, using a smaller $s$ could also make the computation result more accurate. Also, when $t_i\not \in \cup_{i,d}[v_i^d-w, v_i^d+w]$, i.e., $f_i(t_i)=\epsilon$, the ratio $\frac{F_i(t_i)}{f_i(t_i)}$ could be very large. Therefore, using a relatively large $s$ could lead to a sudden decrease of $t_i$, causing the program to skip other $v_i^d$'s in between.

The above problem can be avoided by carefully tuning the parameter $s$. However, the possible large values of $\frac{F_i(t_i)}{f_i(t_i)}$ can also cause other problems that may not have easy solutions:
\begin{enumerate}
	\item a large $\frac{F_i(t_i)}{f_i(t_i)}$ leads to a large $t'_i(b)$, meaning that $t(b)$ decreases much faster than $b$. This problem could cause the program to terminate early if $t(b)$ becomes smaller than $b$.
	\item in the next loop, $t'_i(b)$ can be negative and $t_i$ will oscillate as a result (see Figure~\ref{fig:oscillation}).
\end{enumerate}

To understand the first problem, consider the following example:

\begin{example}
Consider the case where there are 6 buyers. Their value distributions are as follows:
\begin{gather*}
(v_1^1, v_1^2, v_1^3) = (0.08, 0.2, 0.8), (p_1^1, p_1^2, p_1^3) = (0.2, 0.76, 0.04),\\
(v_2^1, v_2^2, v_2^3) = (0.09, 0.3, 0.9), (p_2^1, p_2^2, p_2^3) = (0.3, 0.36, 0.34),\\
(v_3^1, v_3^2, v_3^3) = (0.07, 0.12, 0.7), (p_3^1, p_3^2, p_3^3) = (0.3, 0.36, 0.34),\\
(v_4^1, v_4^2, v_4^3) = (0.07, 0.12, 0.7), (p_4^1, p_4^2, p_4^3) = (0.3, 0.36, 0.34),\\
(v_5^1, v_5^2, v_5^3) = (0.07, 0.12, 0.7), (p_5^1, p_5^2, p_5^3) = (0.2, 0.15, 0.65),\\
(v_6^1, v_6^2, v_6^3) = (0.04, 0.12, 0.8), (p_6^1, p_6^2, p_6^3) = (0.2, 0.15, 0.65).
\end{gather*}

\begin{figure}[h!]
	\centering
	\includegraphics[width=0.5\linewidth]{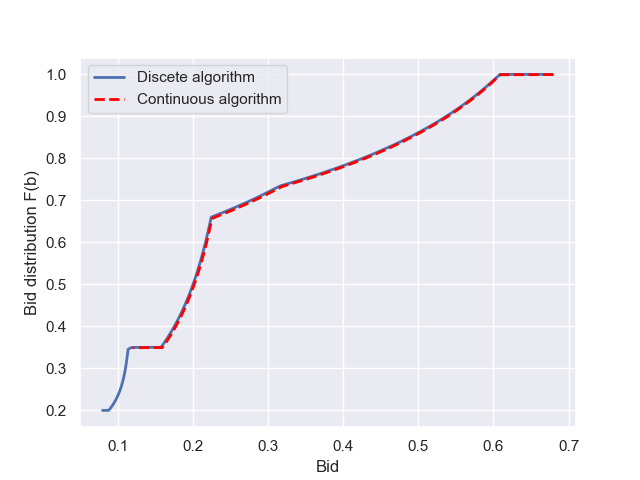}
	\caption[Early termination of continuous algorithm]{}
	\label{fig:earlytermination}
\end{figure}

Figure \ref{fig:earlytermination} shows the bidding strategy of Buyer 5 in the above example. The minimum winning bid computed by the continuous algorithm is about 0.12, while the actual minimum winning bid is 0.08, indicating that Algorithm \ref{algo:backward} terminates early. The reason is that during the execution of Algorithm \ref{algo:backward}, when $b$ is near 0.12, $t_2$ is near 0.25, but $t'_2$ is over 1500. This means that a slight decrease in $b$ could lead to a significant drop in $t_2$, making $t_2<b$ and terminating the algorithm.
\end{example}

To understand the second problem, consider the case where $t_i$ is close to $b$ in the BNE for some $b$. Then $t'_i$ will very likely to be negative according to Equation (\ref{eq:dvdb}), and it is not clear how we could avoid such problems since the computation of $t'_i$ is independent of $s$.

\begin{example}
\label{ep2}
 Consider the case where there are 3 buyers. Their value distributions are as follows:
\begin{gather*}
(v_1^1, v_1^2, v_1^3) = (0.1, 0.2, 0.25), (p_1^1, p_1^2, p_1^3) = (0.25, 0.25, 0.5),\\
(v_2^1, v_2^2, v_2^3) = (0.1, 0.2, 0.25), (p_2^1, p_2^2, p_2^3) = (0.05, 0.45, 0.5),\\
(v_3^1, v_3^2, v_3^3) = (0.1, 0.2, 0.25), (p_3^1, p_3^2, p_3^3) = (0.05, 0.45, 0.5).
\end{gather*}
\end{example}
Figure~\ref{fig:oscillation} shows the bidding strategy of Buyer 1 in BNE. The bid distribution $F(b)$ is computed according to the corresponding $t(b)$. The oscillation in the curve indicates the oscillation in $t(b)$. As shown in the figure, the oscillation occurs when $v(b)$ is close to $b$.
\begin{figure}[h!]
	\centering
	\includegraphics[width=0.5\linewidth]{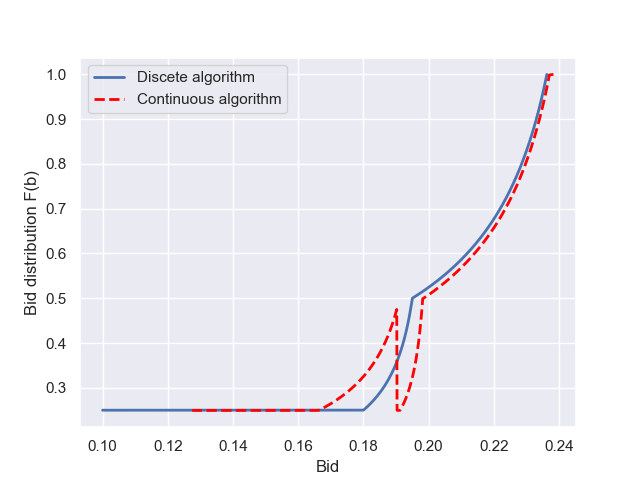}
	\caption{The oscillation of continuous algorithms. The two curves corresponds to buyer 1's equilibrium bidding strategy computed by both the continuous and the discrete algorithms.}
	\label{fig:oscillation}
\end{figure}

Although the continuous backward-shooting algorithm can be problematic, our experiments show that it has good performance if we only need to figure out the maximum winning bid $\bar{b}$. The problems we mentioned mainly affects the computation of the strategies for smaller values. The computation of this part of strategies also suffers from sensitivity issues, as discussed in \cite{fibich2011numerical}. To overcome this difficulty, \citet{fibich2011numerical} proposed another algorithm where guessing the maximum winning bid is no longer needed. We also try their algorithms and conduct experiments. When it comes to approximating discrete value distributions with continuous ones, the computation still suffers from sensitivity issues due to very high condition numbers, implying that the sensitivity might be an intrinsic issue of this problem.

However, our discrete algorithm does not use continuous distributions to approximate discrete ones, thus can avoid all the above problems. The computational complexity only depends on the number of value points. Also, our algorithm does not have the oscillation problem or sensitivity issues, since our theoretical analysis already characterize the structure of the solution, and involves none of the sensitivity computation mentioned above. Therefore, our discrete algorithm can provide a much accurate solution compared to other ones. Such a high accuracy enables us to perform further researches related to first price auctions (see 
Section~\ref{sec:wel_comparison} for example).

\subsubsection{Running time comparison between continuous algorithms and our algorithm}
In this section, we compare the running time of previous continuous algorithms and our discrete algorithm. Since the algorithm provided by \citet{fibich2011numerical} often gives a condition number issue, we only compare our algorithm with Algorithm \ref{algo:backward} in these experiments. We conduct experiments for three different settings. For each setting, the experiment setup is as follows: We generate 1000 first price auction instances, with each containing $n$ buyers. For each buyer, we sample $d$ different values from the interval $[0,1]$, and the corresponding value distribution is also randomly generated for each buyer. Then both our algorithm and Algorithm \ref{algo:backward} are applied to compute the Bayes-Nash equilibrium. Both these two algorithms need to guess the maximum winning bid, so the final computed minimum winning bid would be different from the actual minimum winning bid. Therefore we also set a tolerance parameter $tol$, which serves as a stopping criterion (i.e., the algorithm terminates when the difference between the computed minimum winning bid and the actual minimum winning bid is smaller than $tol$). As these algorithm runs, we record the running time of the algorithms on each instance. Considering that in some cases, the algorithms may take a very long time to terminate, we set another deadline parameter $T$, and kill the process once the running time exceeds $T$. During the experiments, we make sure that no other programs are running and at any time, only one algorithm is running on one instance. Also, we only compare the running time in these experiments, so detailed solution qualities are ignored.

The parameters for the three settings are as follows:
\begin{itemize}
    \item small: $n=5, d=5, T=30\text{~seconds}$;
    \item medium: $n=10, d=10, T=60\text{~seconds}$;
    \item large:$n=100,d=100, T=60\text{~seconds}$;
\end{itemize}
For all the settings, we run our algorithm with $tol=10^{-8}$, and run Algorithm \ref{algo:backward} twice with $tol=0.1$ and $tol=0.01$. The experiment results are shown in Figure

\begin{figure}[ht]
    \centering
    \begin{subfigure}[t]{0.45\linewidth}
        \centering
        \includegraphics[width=\linewidth]{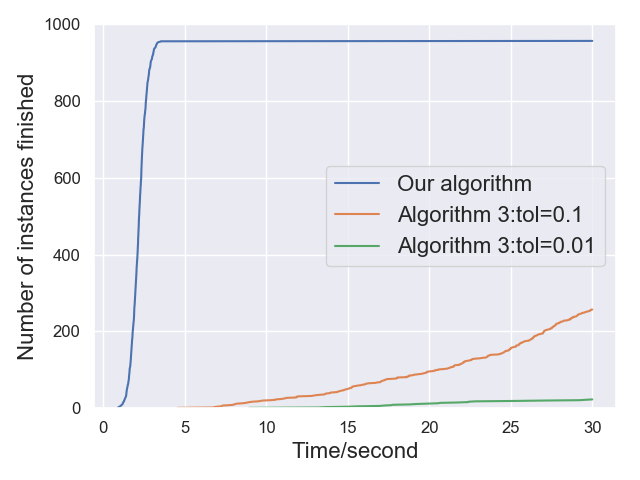}
        \caption{Small instances}
        \label{fig:small_instance}
    \end{subfigure}
    \begin{subfigure}[t]{0.45\linewidth}
        \centering
        \includegraphics[width=\linewidth]{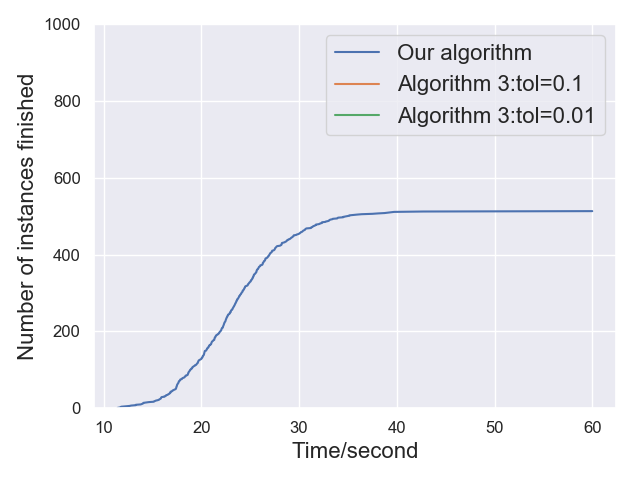}
        \caption{Medium instances}
        \label{fig:medium_instance}
    \end{subfigure}
    \caption{Running time comparison. The y-axis is the cumulative distribution of the running time for the 1000 instances (i.e., the number of instances finished within the corresponding time period). For large instances, no algorithm can finish the computation within the 60 seconds deadline, thus are not shown.}
\end{figure}

For the 1000 small instances, although we set a much smaller tolerance value $tol=10^{-8}$ for our algorithm, our algorithm finishes on almost all instances (955) within the 30 seconds deadline, Algorithm \ref{algo:backward} finishes on only 256 instances when $tol=0.1$ and on only 22 instances when $tol=0.01$. For medium instances, our algorithm finishes on 512 of them, while Algorithm \ref{algo:backward} does not finish on any instance within the deadline. And for large instances, no algorithm ever finishes on any instance within the deadline. It is interesting that our algorithm either finishes very quickly, or does not finish after a relatively long time. For example, among the finished 955 small instances, almost all of them finish within the first 3.5 seconds. This is also true for medium instances. The reason behind this observation is still unknown. We believe this is closely related to specific value distributions in the instances.

\subsubsection{Welfare comparison between first and second price auctions}\label{sec:wel_comparison}

In this section, we compare the welfare of the first price auction and the second price auction, which we denote by $Wel^f$ and $Wel^s$, respectively. The ratio $Wel^f/Wel^s$ is called the price of anarchy:
\begin{definition}[Price of anarchy (PoA)]
	The price of anarchy of the first price auction is defined as the minimum ratio between the welfare of the first price auction and that of the second price auction, i.e.,
	\begin{gather*}
		PoA=\min\frac{Wel^f}{Wel^s},
	\end{gather*}
	where the minimization is over all possible value distributions.
\end{definition}

For general n players, \cite{hartline2014price} shows an example with $PoA=0.869$ and
 \cite{hoy2018tighter} proves that PoA is at least 0.743.
We try to answer this question by running our algorithm on approximate discrete value distributions.
We only consider the problem with two players. Our result provides evidence, though in a limited searching space, that the correct ratio would be around $0.869$.

There are infinitely many value distributions, thus it is impossible to enumerate. However, we could just search for the smallest ratio $Wel^f/Wel^s$ in a much smaller discrete space as follows:
\begin{gather*}
v_i^d = \frac{d}{D}, \forall 0\le d\le D, i=1,2,\\
p_i^d\in\left\{\frac{m}{M}\mid 0\le m\le M, m\in \mathbb{N} \right\}, \forall 0\le d\le D, i=1,2,\\
\sum_{d}v_i^d=1, i=1,2.
\end{gather*}

For simplicity, we normalize the value distributions so that the maximum possible value is 1. It is easy to see that there are only $\frac{1}{2}\binom{M+d}{d}\left[ \binom{M+d}{d}+1\right]$ distinct value distribution pairs.

According to \citet{lebrun2002continuity}, if two value distributions are ``close'', then the corresponding bidding strategies and the resulting welfare values are also ``close''. Therefore, the minimum ratio in the above discrete space should shed us with some insight about what the actual continuous distribution that minimizes the welfare ratio.

In our experiments, we choose $d = 6$ and $M=10$. Thus there are $\frac{1}{2}\binom{16}{6}\left[ \binom{16}{6}+1\right]=32068036$ value distribution pairs to consider. We computed the ratio $Wel^f/Wel^s$ for all of them, and the smallest one ($Wel^f/Wel^s=0.89638$) is given by the following value distributions:
\begin{gather*}
p_1^5=1.0,\\
p_2^0=0.6, p_2^2=0.2, p_2^3=0.2.
\end{gather*}

In this example, Player 1 has a constant large value while Player 2 has a low value distribution.
It coincides with the example introduced by \cite{hartline2014price}, i.e.,  there is a player with a large constant value, while the other players have the identical distribution over a low value range.

Therefore, we make the conjecture that in the extreme case where PoA achieves its smallest value, there exists a player with constant value which is larger than the other players' values.

\section{Conclusion} 
In this paper, we consider the problem of computing the BNE of the first price auction for discrete value distributions. From the theoretical point of view, we first propose Algorithm 2 to compute it. Contrary to previous algorithms for the continuous value case, our algorithm does not need to solve the ordinary differential equations. The number of computation operations is at most $2nm$ in Algorithm \ref{alg}. Our algorithm is enabled by the introduction of a useful tool called the bidding set and careful analysis of the detailed structure of the BNE of the first price auction. 

Then, we show the BNE is unique without any assumptions on the value distributions.
From the practical point of view, Algorithm 2 is much faster than the traditional algorithms when computing BNE under the same discrete distribution. Algorithm 2 is also more robust on providing the strategies on small values shown in Example \ref{fig:earlytermination}.  
Thus our algorithm has a great advantage over the traditional algorithms for practical applications.

With such an algorithm to compute the BNE, we can also explore other problems related to first price auctions. For example, we compare the revenue generated in the first and the second price auctions. It shows that in extreme cases, each auction can perform much better than the other. The price of anarchy of the first price auction is also measured in a set of approximate distributions, we give a conjecture about the extreme case based on our experiments and observations.


\bibliographystyle{plainnat}
\bibliography{bne}

\newpage
\appendix
\section*{APPENDIX}
\setcounter{section}{0}
\section{Omitted proof in Section \ref{sec:bidding_set}}
\label{appendix:interval_proof}
\subsection{Proof of Theorem~\ref{thmbidcont}}
To prove Theorem~\ref{thmbidcont}, We first consider a lemma which compare buyers' value from the bid distribution.
\begin{lemma}
	Assume bid $b_1$ is an element in $S_i(v_i)$, and bid $b_2$ is an element in $S_j(v_j)$. Suppose $b_1<b_2$.
	If there is zero probability that buyer $i$ bids in $(b_1,b_2)$, then we have $v_j\geq v_i$.
	Furthermore, if $v_j= v_i$, then there is zero probability that buyer $j$ bids in $(b_1,b_2)$ too.
	\label{2pmonotone}
\end{lemma}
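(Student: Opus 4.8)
The plan is to extract two ``no profitable deviation'' inequalities from the fact that $b_1\in S_i(v_i)$ and $b_2\in S_j(v_j)$ are equilibrium best responses, feed the hypothesis in through Lemma~\ref{nomass}, and finish with an elementary monotonicity observation about the map $v\mapsto (v-b_1)/(v-b_2)$. For the setup, one may take $b_1\ge\underline b$, so that $b_1$ and $b_2$ are winning bids carrying no atoms (Lemma~\ref{nomass}); hence the probability that buyer $k$ wins with bid $b$ is $\prod_{\ell\ne k}F_\ell(b)$. Write $P(b)=\prod_{\ell\ne i,j}F_\ell(b)$, so $i$ and $j$ win with probabilities $F_j(b)P(b)$ and $F_i(b)P(b)$ respectively. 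The argument in the proof of Lemma~\ref{lemma4} (deviating to $\underline b+\epsilon$ yields positive utility) shows buyer $j$'s utility at $b_2$ is strictly positive, whence $v_j>b_2$; and in the generic position $b_1>\underline b$ the same argument gives $v_i>b_1$ and $F_i(b_1),F_j(b_1),P(b_1)>0$ (and $P(b_2)\ge P(b_1)>0$).

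Next, optimality of $b_1$ for buyer $i$ against the deviation to $b_2$, and optimality of $b_2$ for buyer $j$ against the deviation to $b_1$, read
\[ (v_i-b_1)F_j(b_1)P(b_1)\ \ge\ (v_i-b_2)F_j(b_2)P(b_2), \]
\[ (v_j-b_2)F_i(b_2)P(b_2)\ \ge\ (v_j-b_1)F_i(b_1)P(b_1). \]
The hypothesis forces $F_i$ to be constant on $[b_1,b_2)$, and Lemma~\ref{nomass} rules out an atom at $b_2$, so $F_i(b_1)=F_i(b_2)>0$; cancelling this factor in the second inequality leaves $(v_j-b_2)P(b_2)\ge(v_j-b_1)P(b_1)$.

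If $v_i\le b_2$ then $v_i\le b_2<v_j$ and we are done, so assume $v_i>b_2>b_1$. Using $F_j(b_2)\ge F_j(b_1)>0$ and $v_i-b_2>0$ in the first inequality gives $(v_i-b_1)P(b_1)\ge(v_i-b_2)P(b_2)$, i.e. $P(b_2)/P(b_1)\le (v_i-b_1)/(v_i-b_2)$; the reduced second inequality gives $P(b_2)/P(b_1)\ge (v_j-b_1)/(v_j-b_2)$. Hence $(v_j-b_1)/(v_j-b_2)\le (v_i-b_1)/(v_i-b_2)$, and since $v\mapsto (v-b_1)/(v-b_2)=1+(b_2-b_1)/(v-b_2)$ is strictly decreasing on $(b_2,\infty)$, this forces $v_j\ge v_i$. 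If moreover $v_j=v_i$, the last inequality becomes an equality, which squeezes $P(b_2)/P(b_1)$ to $(v_i-b_1)/(v_i-b_2)$ and thereby makes the entire chain tight; in particular $(v_i-b_2)F_j(b_2)P(b_2)=(v_i-b_2)F_j(b_1)P(b_2)$, and cancelling the positive factor $(v_i-b_2)P(b_2)$ yields $F_j(b_1)=F_j(b_2)$ — that is, buyer $j$ bids in $(b_1,b_2)$ with probability zero.

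The only delicate point is the boundary case $b_1=\underline b$: there the relevant winning probabilities may involve the Vickrey tie-breaking of Assumption~\ref{assumption} rather than the clean product $\prod_{\ell\ne k}F_\ell$, and some of the CDF values used above can fail to be strictly positive. This case is essentially degenerate ($b_1$ is then the common minimal bid) and has to be treated separately; it only enters when the lemma is invoked to prove Theorem~\ref{thmbidcont} in the regime $v_i^j\ge\underline b$. Apart from that, every step is routine manipulation of the two best-response inequalities.
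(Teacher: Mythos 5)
Your proposal is correct and follows essentially the same route as the paper's proof: write the two best-response inequalities for the deviations $b_1\leftrightarrow b_2$, use Lemma~\ref{nomass} and the zero-probability hypothesis to cancel $F_i$, and compare the ratios $(v-b_1)/(v-b_2)$, with the equality case forcing $F_j(b_1)=F_j(b_2)$. The only differences are cosmetic — you bound $P(b_2)/P(b_1)$ from both sides instead of multiplying the two inequalities, and you make explicit the sign/positivity bookkeeping and the restriction to bids at or above $\underline{b}$ that the paper's proof leaves implicit.
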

\begin{proof}
	Since $b_1$ is in $S_i(v_i)$, $b_1$ is a best response for buyer $i$ with value $v_i$.
	\begin{eqnarray*}
		(v_i-b_1)\cdot F_{\Omega\setminus i}(b_1)&\geq & (v_i-b_2)\cdot F_{\Omega\setminus i}(b_2)\\
		(v_j-b_2)\cdot F_{\Omega\setminus j}(b_2)&\geq&(v_j-b_1)\cdot F_{\Omega\setminus j}(b_1)
	\end{eqnarray*}
	Multiply two equations and get
	$$	\frac{v_i-b_1}{v_i-b_2}\cdot \frac{F_i(b_2)}{F_i(b_1)}\geq\frac{v_j-b_1}{v_j-b_2}\cdot \frac{F_j(b_2)}{F_j(b_1)}$$
	
	By Lemma \ref{nomass}, there is no mass bid in buyer $i$'s bid distribution between $(b_1, b_2)$, i.e., $F_i(b_2)=F_i(b_1)$. 
	Since $F_j(b_2)\geq F_j(b_1)$, we have
	\begin{eqnarray*}
		\frac{v_i-b_1}{v_i-b_2}&\geq&\frac{v_j-b_1}{v_j-b_2}\\
		v_j&\geq&v_i
	\end{eqnarray*}
	If $v_j=v_i$, all the inequalities should be equality. Then we have $F_j(b_2)=F_j(b_1)$.
\end{proof}

	\begin{proof}
		We prove the theorem by contradiction.
		Assume there is a value with a jump in its bid set. By jump, we mean the closure of the support of $S_i$ is not connected.
		Let $v_i^k = \max_{j,l} \{v^l_j \mid S_j(v^l_j) \textrm{ has a jump.}\}$.
		Since there is a jump, we assume $b_1,b_2\in S_i(v^k_i)$, and $(b_1,b_2)\cap S_i(v^k_i)=\emptyset$.
		Assume buyer $j$ with value $v^l_j$ has bid in this interval $(b_1,b_2)$.
		By Lemma \ref{2pmonotone}, we have $v_j^l\geq v^k_i$.
		If $v^l_j=v^k_i$, then buyer $j$ with value $v^l_j$ has zero probability bidding in the interval $(b_1,b_2)$.
		There is no impact when the probability is zero. So we can regard this case as buyer $j$ does not bid in $(b_1,b_2)$.

		If $v^k_j>v_i$, by the assumption that $v^k_i$ is the largest value that has a gap in the support, we know $S_j(v_j^l)$ has no jump.
		$S_j(v_j^l)$ is a connected interval. So we can get $S_j(v_j^l)$ from $S_j(v_j^l)$ after removing possibly infinite points.
		
		
		Let set $\Lambda_1$ denote the buyers who bid in the upper neighborhood of $b_1$ and $\Lambda_2$ denote the buyers who bid in the lower neighborhood of $b_2$. Formally, we have 
		$$\Lambda_1=\left\{j \mid \exists v^{l}_j, \epsilon>0 \textrm{ s.t. } S_j(v_j^l)\cap (b_1,b_2) \neq \emptyset \textrm{ and } (b_1,b_1+\epsilon)\subset S_j(v_j^l)\right\},$$
		$$\Lambda_2=\left\{j \mid \exists v^{l}_j, \epsilon>0 \textrm{ s.t. } S_j(v_j^l)\cap (b_1,b_2) \neq \emptyset \textrm{ and } (b_2-\epsilon,b_2)\subset S_j(v_j^l)\right\}.$$		
		Buyers who bid in $(b_1,b_2)$ have connected bidding intervals.
		By Lemma~\ref{lemma4}, the union of these connected bidding intervals cover $(b_1,b_2)$.
		So set $\Lambda_1,\Lambda_2$ are well defined and not empty.

		For any $j\in \Lambda_1$, we claim $b_2\in S_j=\bigcup_h S_j(v_j^h)$. Otherwise, $\overline{S_j}$ has jump around $b_2$. Then we define
		$$v_j^*=\max \left\{v \mid S_j(v)\subseteq [0,b_2)\right\}.$$
		By Lemma~\ref{2pmonotone}, we have $v_i^k\geq v_j^*$.
		By the monotonicity property (Lemma~\ref{1pmonotone}), we have $v^*_j\geq v_j$.
		Thus we have $v_i^k\geq v_j^*\geq v_j>v_i^k$, a contradiction.
		
		So for any $j\in \Lambda_1$, we have $b_2\in S_j$ which implies $j\in \Lambda_2$. Thus $\Lambda_1$ is a subset of $ \Lambda_2$. 
		Since bid $b_1+\epsilon$ does not generate more utility than bid $b_1$, for buyer $i$ with value $v_i^k$, the derivative of $(v_i^k-x)\prod_{j\in \Lambda_1} F_j(x)$ at $x=b_1$ is non-positive. Taking derivatives, we get
		$\sum_{j\in \Lambda_1}\frac{f_j(b_1)}{F_j(b_1)}\leq \frac{1}{v_i-b_1}$. By Theorem~\ref{thmeqa}, we should have
		$$\frac{1}{|\Lambda_1|-1}\sum_{j\in \Lambda_1}\frac{1}{v_j(b_1)-b_1}\leq \frac{1}{v_i^k-b_1}.$$
		Since buyer $i$ weakly prefer bid $b$ rather than $b_2-\epsilon$, similarly we should have
		$$\frac{1}{|\Lambda_2|-1}\sum_{j\in \Lambda_2}\frac{1}{v_j(b_2)-b_2}\geq \frac{1}{v_i-b_2}.$$
		Based on these two equations, we have,
		\begin{align*}
		|\Lambda_2|-1&\leq\sum_{j\in \Lambda_2}\frac{v_i-b_2}{v_j(b_2)-b_2}\\
		&\leq|\Lambda_2|-|\Lambda_1|+\sum_{j\in \Lambda_1}\frac{v_i-b_2}{v_j(b_2)-b_2}\\
		&<|\Lambda_2|-|\Lambda_1|+\sum_{j\in \Lambda_1}\frac{v_i-b_1}{v_j(b_2)-b_1}\\
		&\leq|\Lambda_2|-|\Lambda_1|+\sum_{j\in A}\frac{v_i-b_1}{v_j(b_1)-b_1}\\		
		&\leq|\Lambda_2|-|\Lambda_1|+|\Lambda_1|-1.
		\end{align*}
		A contradiction.
	\end{proof}
	


%
%

\subsection{Proof of Theorem \ref{strict_decrease}}
\begin{proof}
The proof consists of two parts. Lemma \ref{virtualvalue} solves the case when $\Lambda(x)$ is fixed.  
Lemma \ref{lemmavstarchange} solves the case when $\Lambda(x)$ changes.
\end{proof}

\begin{lemma}
	\label{virtualvalue}
	When $\Lambda(x)$ does not change and the buyer's value in bidding set keep the same, $\Lambda^*(x)$ strictly decreases as x decreases.
\end{lemma}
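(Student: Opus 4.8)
The plan is to reduce the claim to a one-variable calculus computation and then exploit the equilibrium density constraint to fix the sign. The quantity $\Lambda^*(x)$ in the statement is the virtual value $\phi^*(x)$ defined above. Writing $k=|\Lambda(x)|$ and $S(x)=\sum_{i\in\Lambda}\frac{1}{v_i-x}$, the defining relation for $\phi^*$ rearranges into the explicit formula $\phi^*(x)=x+\frac{k-1}{S(x)}$. Since $\Lambda$ and the values $\{v_i\}_{i\in\Lambda}$ are held fixed on the interval in question, $S$ is smooth and strictly positive for $x<\min_i v_i$, so $\phi^*$ is differentiable and it suffices to examine the sign of its derivative.

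First I would differentiate. Setting $a_i=\frac{1}{v_i-x}>0$, so that $S=\sum_i a_i$ and $S'(x)=\sum_i a_i^2$, a direct computation gives
$$\frac{d\phi^*}{dx}=1-(k-1)\frac{\sum_i a_i^2}{\left(\sum_i a_i\right)^2}.$$
The statement ``$\phi^*$ strictly decreases as $x$ decreases'' is precisely $\frac{d\phi^*}{dx}>0$, which is equivalent to the inequality $(k-1)\sum_i a_i^2<\left(\sum_i a_i\right)^2$.

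The key step --- and the point where the equilibrium structure must enter --- is that this inequality fails for arbitrary positive $a_i$ when $k\ge 3$ (one can make $\sum_i a_i^2$ nearly equal to $(\sum_i a_i)^2$), so a plain power-mean estimate is both insufficient and would point the wrong way. The remedy is the nonnegativity of the equilibrium densities: by Theorem \ref{thmeqa}, every $i\in\Lambda$ satisfies $f_i/F_i=h_i(x)\ge 0$, which is exactly $a_i\le\frac{1}{k-1}\sum_j a_j=\frac{S}{k-1}$ (equivalently $\phi^*(x)\le v_i$ for each $i\in\Lambda$). I would then bound $\sum_i a_i^2=\sum_i a_i\cdot a_i\le\frac{S}{k-1}\sum_i a_i=\frac{S^2}{k-1}$, giving $(k-1)\sum_i a_i^2\le S^2$ immediately.

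Finally I would upgrade this to a strict inequality. Equality in the last bound would force $a_i=\frac{S}{k-1}$ for every $i$ (all $a_i$ are strictly positive since $v_i>x$), but summing these would give $S=\frac{k}{k-1}S$, impossible for $k\ge 2$; hence $(k-1)\sum_i a_i^2<S^2$ strictly and $\frac{d\phi^*}{dx}>0$ throughout the interval. This proves $\phi^*$ is strictly increasing in $x$, i.e. strictly decreasing as $x$ decreases. The main obstacle is conceptual rather than computational: recognizing that the claim is simply false without the density constraint $a_i\le S/(k-1)$, and that invoking Theorem \ref{thmeqa} to supply exactly this constraint is what both fixes the sign and eliminates the apparent counterexamples.
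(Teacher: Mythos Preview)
Your proposal is correct and follows essentially the same route as the paper: both reduce the claim to $(\phi^*)'(x)>0$, rewrite this as $(|\Lambda|-1)\sum_i a_i^2<\bigl(\sum_i a_i\bigr)^2$ with $a_i=1/(v_i-x)$, invoke Theorem~\ref{thmeqa} to obtain $a_i\le S/(|\Lambda|-1)$, multiply and sum, and then rule out equality. Your explicit remark that the inequality fails for arbitrary positive $a_i$ (so the equilibrium constraint is essential) is a nice addition, but the underlying argument matches the paper's.
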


		\begin{proof}
			For any $j\in \Lambda$, we have
			\begin{eqnarray*}
				\frac{1}{|\Lambda|-1}\sum_{i\in \Lambda}\frac{1}{v_i-x}-\frac{1}{v_j-x}&\geq& 0,\\
				\frac{1}{\phi^*(x)-x}&\geq& \frac{1}{v_j-x}.
			\end{eqnarray*}
			$$\frac{1}{|\Lambda|-1}\sum_{i\in \Lambda}\frac{1}{v_i-x}=\frac{1}{\phi^*(x)-x}.$$
			Recall that $\Lambda(x)$ is fixed.
			By definition of $\phi^*$, it is differentiable. We take derivatives on the both sides,
            \begin{gather*}
            \frac{1}{|\Lambda|-1}\sum_{i\in \Lambda}\frac{1}{(v_i-x)^2}=\frac{1-(\phi^*)'(x)}{(\phi^*(x)-x)^2},\\
            [\phi^*(x)-x]^2\cdot \frac{1}{|\Lambda|-1}\sum_{i\in \Lambda}\frac{1}{(v_i-x)^2}=1-(\phi^*)'(x).
            \end{gather*}
			We want to prove $(\phi^*)'(x)>0$, it is equivalent to prove
			\begin{align}
			&[\phi^*(x)-x]^2\cdot \frac{1}{|\Lambda|-1}\sum_{i\in \Lambda}\frac{1}{(v_i-x)^2}<1,\nonumber\\
			\Leftrightarrow&\frac{1}{|\Lambda|-1}\sum_{i\in \Lambda}\frac{1}{(v_i-x)^2}<\frac{1}{(|\Lambda|-1)^2}\left(\sum_{i\in \Lambda}\frac{1}{v_i-x}\right)^2,\nonumber\\
			\Leftrightarrow&(|\Lambda|-1)\sum_{i\in \Lambda}\frac{1}{(v_i-x)^2}<\left(\sum_{i\in \Lambda}\frac{1}{v_i-x}\right)^2.\label{eq2}
			\end{align}
			By Theorem~\ref{thmeqa}, for any $j\in \Lambda$,  we have
			$$(|\Lambda|-1)\frac{1}{(v_j-x)^2}\leq \left(\sum_{i\in \Lambda} \frac{1}{v_i-x}\right)\cdot \frac{1}{v_j-x}.$$
			Since it is impossible that for all $j\in \Lambda$ it is equality, so the sum of inequalities is a strict inequality. Thus we prove Eq. (\ref{eq2}).

			%
		\end{proof}
\begin{lemma}
	\label{lemmavstarchange}
	In all possible changes, $\phi^*(x)$ weakly decreases.
\end{lemma}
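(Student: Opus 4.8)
The plan is to reduce to a single point $b$ at which the bidding set changes and to show $\phi^*_{\Lambda^-(b)}(b)\le\phi^*_{\Lambda^+(b)}(b)$, where $\phi^*_{\Lambda}$ denotes the virtual value computed from the set $\Lambda$; together with Lemma~\ref{virtualvalue} this yields that $\phi^*$ is monotone along the run of Algorithm~\ref{alg} (i.e.\ completes Theorem~\ref{strict_decrease}). Write $\mu:=\phi^*_{\Lambda^+(b)}(b)$ and let $v_j^+$ (resp.\ $v_j^-$) be the value with which buyer $j$ bids just above (resp.\ just below) $b$; by the definition of $\phi^*$,
\[
\sum_{j\in\Lambda^+(b)}\frac{1}{v_j^+-b}=\frac{|\Lambda^+(b)|-1}{\mu-b}.
\]
Since $t\mapsto 1/(t-b)$ is strictly decreasing, it suffices to prove $\frac{1}{|\Lambda^-(b)|-1}\sum_{j\in\Lambda^-(b)}\frac{1}{v_j^--b}\ge\frac{1}{\mu-b}$.

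I would partition $\Lambda^+(b)=P\sqcup D$ with $P=\Lambda^+(b)\cap\Lambda^-(b)$, $D=\Lambda^+(b)\setminus\Lambda^-(b)$, and set $Q=\Lambda^-(b)\setminus\Lambda^+(b)$, so $\Lambda^-(b)=P\sqcup Q$. Three facts are needed. (i) Each $j\in D$ leaves because $v_j^+$ is consumed at $b$ (Theorem~\ref{thmrule1leave}); just above $b$ one has $h_j\ge0$, i.e.\ $v_j^+\ge\phi^*_{\Lambda^+(b)}(x)$ for $x\in(b,b+\epsilon)$, and by Lemma~\ref{virtualvalue} $\phi^*_{\Lambda^+(b)}(x)>\phi^*_{\Lambda^+(b)}(b)=\mu$ there, so $v_j^+>\mu$. (ii) For $j\in P$, $v_j^-\le v_j^+$ by monotonicity of bidding strategies (Lemma~\ref{1pmonotone}). (iii) For every $j\in Q$, $v_j^-\le\mu$. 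This is the crux: $j$ enters with value $v_j^-$ at $b$, hence lies in the waiting list $N\setminus\Lambda^+(b)$ for all $x\in(b,b+\epsilon)$ and does not enter there; letting $c^\ast$ be the waiting buyer of largest unconsumed value on this interval, the entry test of Theorem~\ref{thmenter} applies to $c^\ast$ and fails, and since $v_{c^\ast}>b$ the failure must be $h_{c^\ast}(x)<0$ relative to $\Lambda^+(b)$, i.e.\ $v_{c^\ast}<\phi^*_{\Lambda^+(b)}(x)$; letting $x\to b^+$ (using continuity of $\phi^*_{\Lambda^+(b)}$, valid since $\Lambda^+(b)$ is constant just above $b$) gives $v_{c^\ast}\le\mu$, and $v_j^-\le v_{c^\ast}\le\mu$ because $c^\ast$ has the largest unconsumed value.

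Granting (i)--(iii), the inequality is a short computation:
\begin{align*}
\sum_{j\in\Lambda^-(b)}\frac{1}{v_j^--b}
&=\sum_{j\in P}\frac{1}{v_j^--b}+\sum_{j\in Q}\frac{1}{v_j^--b}
\ \ge\ \sum_{j\in P}\frac{1}{v_j^+-b}+\frac{|Q|}{\mu-b}\\
&=\Big(\sum_{j\in\Lambda^+(b)}\frac{1}{v_j^+-b}-\sum_{j\in D}\frac{1}{v_j^+-b}\Big)+\frac{|Q|}{\mu-b}\\
&\ge\ \frac{|\Lambda^+(b)|-1}{\mu-b}-\frac{|D|}{\mu-b}+\frac{|Q|}{\mu-b}
\ =\ \frac{|P|-1+|Q|}{\mu-b}\ =\ \frac{|\Lambda^-(b)|-1}{\mu-b},
\end{align*}
using (ii) and (iii) in the first inequality, the definition of $\mu$ and (i) (so $1/(v_j^+-b)<1/(\mu-b)$ for $j\in D$) in the second, and $|P|=|\Lambda^+(b)|-|D|$, $|\Lambda^-(b)|=|P|+|Q|$ at the end. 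Because only $\Lambda^+(b)$ and $\Lambda^-(b)$ enter the argument, it is insensitive to the bidding set momentarily dropping below two buyers and refilling at $b$ (as at bid $6$ in Example~\ref{classic}); one only needs $|\Lambda^{\pm}(b)|\ge2$ for both virtual values to be defined.

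The main obstacle is fact (iii): showing that no buyer joins the bidding set at $b$ with value exceeding $\mu$. The delicate point is that the entry test $h_i\ge0$ of Theorem~\ref{thmenter} is stated against the current (already partially updated) bidding set at $b$, while the bound we need is against the fixed set $\Lambda^+(b)$ just above $b$; one must therefore argue through the order in which Algorithm~\ref{alg} admits waiting buyers and handle buyers that leave with one value and re-enter with a smaller one. Facts (i), (ii) and the final computation are routine.
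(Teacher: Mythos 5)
Your proof is correct and follows essentially the same route as the paper's: you split the change point into leavers, stayers and enterers (the paper's $\Lambda_1,\Lambda_2,\Lambda_3$), bound the leavers' values below and the enterers' values above by the upper virtual value via the same exit/entry incentive inequalities (the paper's Lemma~\ref{lemmachange}), and finish with the same counting computation. Your facts (i)--(iii) are exactly the paper's Equations~\eqref{eq5}--\eqref{eq6} plus the monotonicity step for stayers that the paper leaves implicit, so the "main obstacle" you flag is handled there simply by citing Lemma~\ref{lemmachange}.
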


	\begin{proof}
	We consider the general case at bid $b$.
	Let $\Lambda_1=\Lambda^+(b)\setminus \Lambda^-(b)$,
	$\Lambda_2=\Lambda^+(b)\cup \Lambda^-(b)$, and
	$\Lambda_3=\Lambda^-(b)\setminus \Lambda^+(b)$.
	
	We want to prove 
	\begin{gather*}
		\frac{1}{|\Lambda_1 \cup \Lambda_2|-1}\sum_{i\in \Lambda_1 \cup \Lambda_2}\frac1{\phi^*_i(b+\epsilon)-b}\leq
		\frac{1}{|\Lambda_2 \cup \Lambda_3|-1}\sum_{i\in \Lambda_2 \cup \Lambda_3}\frac1{\phi^*_i(b-\epsilon)-b}.
	\end{gather*}

	By Lemma \ref{lemmachange}, we have 
	\begin{align}
	\frac1{\phi^*_i(b-\epsilon)-b}&\geq
	\frac{1}{|\Lambda_1 \cup \Lambda_2|-1}\sum_{j\in \Lambda_1 \cup \Lambda_2}\frac1{\phi^*_j(b-\epsilon)-b}\label{eq5}, \forall i\in \Lambda_3\\
	\frac1{\phi^*_i(b-\epsilon)-b}&\leq
	\frac{1}{|\Lambda_1 \cup \Lambda_2|-1}\sum_{j\in \Lambda_1 \cup \Lambda_2}\frac1{\phi^*_j(b-\epsilon)-b}, \forall i\in \Lambda_1\label{eq6}
	\end{align}
	Sum Equation \eqref{eq5} over $i\in \Lambda_3$, and we have
	$$\sum_{i\in \Lambda_3}\frac1{\phi^*_i(b-\epsilon)-b}\geq 
	\frac{|\Lambda_3|}{|\Lambda_1 \cup \Lambda_2|-1}\sum_{j\in \Lambda_1 \cup \Lambda_2}\frac1{\phi^*_j(b-\epsilon)-b}$$
	
	Sum Equation \eqref{eq6} over $i\in \Lambda_1$, and we have
	$$\sum_{i\in \Lambda_2}\frac1{\phi^*_i(b-\epsilon)-b}\geq 
	\frac{|\Lambda_2|-1}{|\Lambda_1 \cup \Lambda_2|-1}\sum_{j\in \Lambda_1 \cup \Lambda_2}\frac1{\phi^*_j(b-\epsilon)-b}.$$
	Combining the two inequalities above completes the proof.

\end{proof}
\subsection{Proof of Theorem \ref{thmenter}}
\begin{proof}
	
	We prove the theorem by contradiction. Suppose $b$ is the largest point where there is a buyer $i$, who satisfies the condition of entering the bidding set, but in BNE he enters the bidding set at some later point, say $y$.
	By the Lemma \ref{lemmachange}, we have $\frac{1}{\phi^{*}(x^-)-x^-}\geq \frac{1}{v_i-x^-}$ in BNE.
	Then $\phi^*(x^{-})\leq v_i$.
	By Theorem \ref{strict_decrease}, virtual value strictly increases, i.e., $\phi^*(y)<\phi^*(x^-), y< x^-$.
	We have $\frac{1}{\phi^*(y)-y}>\frac{1}{v_i-y}$. A contradiction.
	\end{proof}	

\subsection{Proof for Theorem~\ref{thm:mono}}
	\begin{proof}
		We guess two largest winning bids $\overline{b}^1 < \overline{b}^2$.
	Next we transform $\mathcal{E}(\overline{b}^1, \{p_i^j\})$ into $\mathcal{E}(\overline{b}^2, \{p_i^j\})$ through intermediate steps with changing the probability parameters.

	In the structure $\mathcal{E}(\overline{b}^1, \{p_i^j\})$, we sort the bid intervals by their lower extreme points denoted by $l_k=\left[\underline{l}_k,\overline{l}_k\right], k\in[\sum_i d_i]$ where $\underline{l}_1\geq \underline{l}_2\geq ...\geq \underline{l}_{\sum_i d_i}$.
	We use $buyer(l_k)$ to indicate the buyer who gives this bid interval and use $index(l_k)$ to indicate the value index of $buyer(l_k)$. Formally, we have $S_{buyer(l_k)}\left(v_{buyer(l_k)}^{index(l_k)}\right)= \left[\underline{l}_k,\overline{l}_k\right]$.
	
	By changing the probability parameter $\{p_i^j\}$, 
	we can control where a buyer leaves the bidding set according to Theorem \ref{thmrule1leave}.
	We create a parameter profile $\{p^j_i\}^{(1)}$ such that 
	$\mathcal{E}(\overline{b}^2, \{p^j_i\}^{(1)})$ and $\mathcal{E}(\overline{b}^1, \{p^j_i\}^{(1)})$ are same when bid intervals are restricted to the $\left[-\infty,\overline{b}^1\right]$ area.
	We can do this by increasing $p^{d_i}_i$ for $i\in \Lambda^1(\bar{b}^1)$ such that 
	every buyer in $\Lambda^1(\bar{b}^1)$ are still bidding at $\bar{b}^1$ under his largest value.
	Here we use $\Lambda^1$ to denote the bidding set in the structure $\mathcal{E}(\overline{b}^1, \{p^j_i\}^{(1)})$.
	
	
	Set $k=1$ at first. We choose a parameter profile $\{p_i^j\}^{(k+1)}$ satisfying two conditions.
	Condition 1 is that the bid intervals in $\mathcal{E}(\overline{b}^2, \{p_i^j\}^{(k+1)})$ and $\mathcal{E}(\overline{b}^2, \{p_i^j\})$ are same when they are restricted to the $\left[\underline{l}_k,\overline{b}^2\right]$ area.
	Condition 2 is that every corresponding bid intervals in $\mathcal{E}(\overline{b}^2, \{p_i^j\}^{(k)})$ and $\mathcal{E}(\overline{b}^1, \{p^j_i\})$
	have same lower extreme points when they are restricted to the $\left[-\infty,\underline{l}_k\right]$ area.
	By the second condition, $\left\{p^{value(l_k)}_{buyer(l_k)}\right\}^{(k)}$ returns to $\{p_i^j\}$.
	
	By changing the parameter, the bid interval of $buyer(l_k)$ with $index(l_k)$-th value ends earlier. 
	For any bid $x$, $\frac{1}{|\Lambda(x)|-1}\sum_{i\in \Lambda(x)}\frac{1}{v_i(x)-x}$ increases when any buyer in the bidding set $\Lambda(x)$ leaves.
	This change has two impacts. 
	First, bid density increases in the range where $buyer(l_k)$ leaves. 
	The position that bid probability achieves the value probability becomes higher.
	Second, because the bidding set changes, the location where a buyer enters the bidding set may change too.
	Remember we keep the lower extreme point of every bid intervals unchanged.
	Thus there is only one possible change in the bidding set where a buyer enters, i.e.,  $buyer(l_k)$ leaves.
	The value of $\frac{1}{|\Lambda(x)|-1}\sum_{i\in \Lambda(x)}\frac{1}{v_i(x)-x}$ increases. 
	According to Theorem \ref{thmenter}, buyer enters bidding set earlier.
 	Only bid intervals  $l_m$ for $m\geq k+1$ that have smaller lower extreme points have been affected. 
	As a result, $\{p^j_i\}^{(k+1)}\geq \{p^j_i\}^{(k)}$ for $(i,j)$ not in $\{(buyer(l_m),value(l_m)), m=1,..k\}$. 
	
	We repeat the above procedure and increase $k$ by 1 each time. Bid intervals only weakly increase each time. Finally, we would have $\{p^j_i\}^{(\sum_i d_i)}= \{p^j_i\}$ and the proof completes.
	\end{proof}
\subsection{Proof for Theorem \ref{continuity}}
	\begin{proof}

	Consider the case that $\overline{b}$ gets close to $\overline{b}^1$ from the smaller side, we show the limit of structure equals to the structure of the limit, i.e., 
	$$\lim_{\overline{b}\rightarrow \overline{b}^1} \mathcal{E}(\overline{b}, \{p_i^j\})=\mathcal{E}(\overline{b}^1, \{p_i^j\}).$$
	We prove by contradiction. The proof for the other case is similar.
	
	Pick $\overline{b}^2$ close enough to $\overline{b}^1$ such that all the boundaries of bid intervals in 
	$\mathcal{E}(\overline{b}^2,\{p_i^j\})$ 	
	are $\epsilon$ close the $\lim_{\overline{b}\rightarrow \overline{b}^1} \mathcal{E}(\overline{b}, \{p_i^j\})$.
	In the limit of structure, some intervals starts later of ends later compared to structure of limit.
	We choose the largest bid position $x$ where the limit of structure is not equal to the structure of $\overline{b}^1$. 

	Since the upper part $\left[x,\overline{b}^2\right]$ is exactly same, it is impossible that the difference between two structures is some bid interval in the limit ends later.
	It must be the case that some bid interval starts later. 
	Let buyer i has the largest corresponding value $v_i$ among bid intervals that start later.
	So at the moment when buyer i joins the bidding set, the bidding sets are same in $\mathcal{E}(\overline{b}^2, p_i^j)$ and $\mathcal{E}(\overline{b}^1, \{p_i^j\})$.
	
	But buyer i with $v_i$ does not join the bidding set at $x$ in $\mathcal{E}(\overline{b}^2, \{p_i^j\})$, 
	while buyer i joins the bidding set at $x$ in $\mathcal{E}(\overline{b}^1, \{p_i^j\})$. 
	It contradicts with Theorem~\ref{thmenter}.
	
%
%
%
%

	\end{proof}
	

\section{Omitted proofs in Section \ref{sec:exist}}
\subsection{Proof of Lemma \ref{lem:match}}
\begin{proof}
	The algorithm stops only when $|\Lambda|\leq 1$ and no other buyers can join the bidding set.
	By definition of the smallest winning bid, we have $v^1_1\geq \underline{b}$.
	Since $v^1_1$ is buyer 1's smallest value, so buyer 1 never leaves bidding set when he has value $v^1_1$.
	Hence, the bidding set only contains buyer 1 at the end point. Other buyers' next values are  smaller than or equal to the value at end point. Then we put the remaining bid probability of buyer 1 on bidding $\underline{b}$. 
	We are also able to create a bidding strategy 
	For other buyers when their value smaller than or equal to $\underline{b}$, we create a bidding strategy such that they bid $\underline{b}$ deterministically. In this case buyer 1 does not have incentive to deviate from bidding $\underline{b}$ and other buyers never win in the following second price auction because their values are smaller.
\end{proof}

\end{document}